\documentclass{article}

\usepackage{amsmath,amsthm,amssymb,bm,color,mathrsfs,mathtools}
\usepackage{amscd,verbatim}
\usepackage{bbding,wasysym,pifont}
\usepackage[breaklinks=true,colorlinks=true,linkcolor=blue,urlcolor=blue,citecolor=blue]{hyperref}
\usepackage{graphicx}
\usepackage{multirow}
\usepackage{comment}
\usepackage[arrow,matrix,curve]{xy}
\usepackage{authblk}
\usepackage{tikz-cd}
\usepackage{subcaption}
\usepackage{colortbl}
\usepackage{rotating}

\usepackage{tikz-cd}
\usetikzlibrary{decorations.markings}
\tikzset{->-/.style={decoration={
  markings,
  mark=at position #1 with {\arrow{>}}},postaction={decorate}}}
\tikzset{->-/.default=0.5}

\newcommand\wt{\widetilde}
\newcommand\wh{\widehat}

\newcommand{\CC}{{\mathbb C}}
\newcommand{\RR}{{\mathbb R}}
\newcommand{\TT}{{\mathbb T}}
\newcommand{\ZZ}{{\mathbb Z}}

\newcommand{\C}{\mathbb{C}}

\newcommand{\Z}{\mathbb{Z}}

\newcommand{\sG}{{\mathscr G}}

\newcommand{\im}{\mathrm{i}}

\newcommand{\Sflip}{S^1_{\rm flip}}
\newcommand{\Striv}{S^1_{\rm triv}}
\newcommand{\Sfree}{S^1_{\rm free}}
\newcommand{\hSflip}{\hat{S}^1_{\rm flip}}
\newcommand{\hStriv}{\hat{S}^1_{\rm triv}}
\newcommand{\TR}{{\rm T}_R}
\newcommand{\TZtwo}{{\rm T}_{\ZZ_2}}

\newcommand{\pt}{\mathrm{pt}}

\newcommand{\shade}{\cellcolor[gray]{0.9}}

\newtheorem{theorem}{Theorem}[section]

\newtheorem{lemma}[theorem]{Lemma}
\newtheorem{proposition}[theorem]{Proposition}
\newtheorem{corollary}[theorem]{Corollary}
\newtheorem{definition}[theorem]{Definition}
\newtheorem{remark}[theorem]{Remark}

%\long\def\blue#1{\textcolor {blue}{#1}} 

%%%%%%%%%%%%%%%%%%%%%%%%%
%%%%%%%%%%%%%%%%%%%%%%%%%

\begin{document}

\title{Crystallographic T-duality}

\author[1]{Kiyonori Gomi}
\affil[1]{Department of Mathematical Sciences, Shinshu University, Matsumoto, Nagano 390-8621, Japan}

\author[2]{Guo Chuan Thiang%\thanks{Corresponding author: guochuan.thiang@adelaide.edu.au}
}
\affil[2]{School of Mathematical Sciences, University of Adelaide, SA 5005, Australia}

%\date{today}

\maketitle

\begin{abstract}
We introduce the notion of crystallographic T-duality, inspired by the appearance of $K$-theory with graded equivariant twists in the study of topological crystalline materials. Besides giving a range of new topological T-dualities, it also unifies many previously known dualities, motivates generalisations of the Baum--Connes conjecture to graded groups, provides a powerful tool for computing topological phase classification groups, and facilitates the understanding of crystallographic bulk-boundary correspondences in physics.
\end{abstract}

%\pacs{}

\tableofcontents

\section{Introduction}
Mathematical interest in T-duality was stimulated by the discovery in string theory that for a circle bundle over a manifold $X$ with H-flux (an integral degree-3 cohomology class), there is a T-dual circle bundle with dual H-flux, such that the H-twisted $K$-theories on either side coincide despite the bundles generally being topologically distinct \cite{BEM}. For instance, a 3-dimensional lens space $L(p)$ with $k$ units of H-flux is T-dual to a generally non-homeomorphic lens space $L(k)$ with $p$ units of H-flux \cite{MMS,BEM}. The desire to understand the general mechanism behind ``topological T-dualities'' of this kind led to a rekindling of interest in twisted $K$-theory, and a very fruitful $C^*$-algebraic approach \cite{Rosenberg,MR,Rosenberg2,DM-DR} even relates T-duality to the deep Baum--Connes isomorphisms \cite{BCH}.

Recently, twisted $K$-theory also started to appear in solid-state physics due to the influential work of Freed--Moore \cite{FM} which generalised the Bott-``Periodic Table'' of topological insulators \cite{Kitaev} to the \emph{crystallographic} setting. Here, the relevant twists are \emph{graded} and \emph{equivariant} (i.e.\ have a $H^1_G(X,\ZZ_2)$ part) over a so-called \emph{Brillouin} torus of \emph{quasimomenta}, and are generally \emph{torsion} cohomology classes. The physical intuition of position-momentum duality, together with the insight that T-duality may be understood as a type of ``topological Fourier transform'' (Fourier--Mukai transform), lead us to introduce the notion of \emph{crystallographic T-duality} in this paper. 

Our central result is Theorem \ref{thm:crystalTdual}, which says that for each crystallographic space group $\mathscr{G}$ in $d$-dimensions (and there are many such groups), there is an isomorphism of twisted $K$-theories,
$${\rm T}_\mathscr{G}:K^{-\bullet+\sigma_\mathscr{G}}_G(T^d_\mathscr{G})\overset{\cong}{\longrightarrow}K^{-\bullet-d+\tau_{\mathscr{G}}}_{G}(\hat{T}^d).$$
On the LHS, $T^d_\mathscr{G}$ is a ``position space'' $d$-torus equipped with a naturally defined affine action of a finite quotient $G$ of $\mathscr{G}$ (alternatively, a flat orbifold), while $\sigma_{\mathscr{G}}$ is a \emph{graded} $G$-equivariant twist (\S\ref{sec:twistgeneral}) from the $K$-nonorientability of this $G$-action. On the RHS, $\hat{T}^d$ is the ``momentum space'' Brillouin torus equipped with the dual $G$-action (generally a \emph{different} flat orbifold), and $\tau_{\mathscr{G}}$ is another equivariant twist constructed from group-theoretic properties of $\mathscr{G}$. Strikingly, the data on one side appears at first glance to be of a different nature to the data on the other side, yet the total $K$-theoretic information is ``conserved''. The unifying object is the crystallographic group $\mathscr{G}$, which is capable of providing both sets of data. 

{\bf Super-Baum--Connes conjecture}: A similar phenomenon arises in the simplest nontrivial example verifying the Baum--Connes conjecture: the group $\ZZ$ has classifying space $B\ZZ=S^1$ a circle, and character space/Pontryagin dual (spectrum of reduced group $C^*$-algebra $C^*_r(\ZZ)$) another circle, $C^*_r(\ZZ)\cong C(\hat{S^1})$, and there is an assembly map $\mu_\ZZ$ implementing T-duality isomorphisms 
$${\rm T}_\ZZ:K^{1-\bullet}(S^1)\cong K_\bullet(B\ZZ)\xrightarrow{\mu_\ZZ} K_\bullet(C^*_r(\ZZ))\cong K^\bullet(\hat{S^1}).$$
Our general crystallographic T-duality ${\rm T}_\mathscr{G}$ is shown by a chain of isomorphisms involving the Baum--Connes assembly map applied to $\mathscr{G}$, and so implicitly passes through a (graded) $C^*$-algebraic formulation. 

The simplest example in the setting of ($\ZZ_2$-)graded groups is $\ZZ$ equipped with the nontrivial even/odd grading, which can be thought of as the \emph{frieze group} \cite{KL} (a generalised crystallographic group) generated by (an odd) glide reflection, usually denoted $\mathsf{p11g}$, see Fig.\ \ref{fig:glideaxis}. The T-duality associated to $\mathsf{p11g}$, Eq.\ \eqref{p11gTduality} of \S\ref{sec:p11g}, may be rewritten as
\begin{equation*}
{\rm T}_{\sf{p11g}}: K_{\bullet+c}(S^1)\overset{\cong}{\longrightarrow} K^{\rm graded}_\bullet(C^*_r({\sf p11g})),\label{p11gTduality}
\end{equation*}
where on the LHS, $c\in H^1(S^1,\ZZ_2)$ is the generating orientation twist. Thinking of $(S^1,c)$ as an appropriate notion of classifying space for the \emph{graded} group ${\sf p11g}\cong\ZZ$, the isomorphism
$K_{\bullet+c}(S^1)\longrightarrow K^{\rm graded}_\bullet(C^*_r({\sf p11g}))$ essentially verifies a ``super-Baum--Connes conjecture'' for the graded group ${\sf p11g}$. Remarkably, it turns out that $K^{\rm graded}_0(C^*_r({\sf p11g}))\cong\ZZ/2$ \cite{SSG1,GT}. Correspondingly, it is easy to see that the $c$-twisted boundary of a loop winding around $S^1$ is twice of a point, so that after passing to homology, $K_{0+c}(S^1)\cong H_{0+c}(S^1)\cong\ZZ/2$ on the LHS. An extraordinary amount of effort has been directed towards the Baum--Connes conjecture for ordinary (ungraded) groups, and it is hoped that our paper motivates its study in the general setting of super (i.e.\ graded) groups, for which crystallography provides ample well-motivated examples.

{\bf Computations and unification of known T-dualities.} We provide numerous computable examples of crystallographic T-duality (\S\ref{sec:H3twistexamples}-\ref{sec:H1twistexamples}). Because tori appear on both sides, the duality actually becomes a tool to compute many previously unknown twisted equivariant $K$-theory groups ``for free'', and to supplement spectral sequence methods by resolving extension problems (\S\ref{sec:extensionproblem}). Let us also emphasise that \emph{torsion} $K$-theory classes are of particular interest in physics, so rational methods are not necessarily desirable. Via a large number of explicit examples, we further show that the crystallographic T-duality factorises through several circle bundle T-dualities --- ``partial Fourier transforms'' --- such as the `Real' T-duality of \cite{Gomi1} involving $K_\pm$ groups. From this, we obtain intricate webs of T-dualities whose individual links are sometimes already known, but are now assembled together in coherent patterns (\S\ref{sec:2Ddualities}, \S\ref{sec:1Ddualities}). Since crystallographic groups basically correspond to finite group actions on tori, crystallographic T-duality is the ``most general'' notion of T-duality with finite group equivariance, at least in the sense of dualising the fibres of trivial equivariant torus bundles $T^d_\sG\times X$ (we primarily study $X=\pt$) in this paper. We anticipate that we can generalise even further to non-trivial $G$-equivariant torus bundles, whence equivariant characteristic classes should come into play. The case $G=\ZZ_2$ is discussed in \S\ref{sec:ztwoTduality} as a natural cousin of `Real' T-duality \cite{Gomi1}, and we leave the general ``fibred'' version of crystallographic T-duality for a subsequent work.

{\bf Physics applications.} Let us discuss briefly some physics motivations and possible applications. The role of $K$-theory in string theory \cite{MM,Witten} and in solid state physics \cite{Bellissard} has been known for several decades. In the former, D-brane charges in various flavours of (super)string theory live in appropriate $K$-(co)homology groups of spacetime, as initially argued for the Type II case by \cite{MM} and for several other cases by \cite{Witten}. In the latter, invariants of topological phases live in the $K$-theory of some (noncommutative) momentum space \cite{Kitaev,FM,Thiang}. In both fields, dualities play key conceptual roles. For instance, T-dualities relate complementary features of and account for different types of string theories \cite{Buscher, Vafa}, while a closely related position-momentum space duality was already observed in \cite{Cartier} and features in the Bloch--Floquet--Fourier transform used extensively in solid state physics. Furthermore, index theory as formulated in $K$-theoretic language appears in T-duality in string theory \cite{Hori}, in accounting for the quantum Hall effect \cite{Bellissard2}, and in formulating the bulk-edge correspondence rigorously \cite{GT, PSB, Kubota2, Bourne}. D-brane transformation under T-duality can also be represented by a geometric Fourier--Mukai/Nahm transform \cite{Hori}. Thus we also define a general \emph{crystallographic Fourier--Mukai transform} ${\rm T}_{\mathscr{G}}^{\rm FM}$ (\S\ref{sec:Poincarebundle}), which is expected to implement crystallographic T-duality ${\rm T}_\mathscr{G}$ in a more geometric way suitable for string theory applications, although we do not pursue the latter in any detail in this paper.

In solid state physics, the RHS of the crystallographic T-duality, $K^{-\bullet+\tau_{\mathscr{G}}}_{G}(\hat{T}^d)$, serves as a convenient classification group for \emph{bulk} topological crystalline insulator phases (roughly: equivalence classes of $\mathscr{G}$-invariant Hamiltonians with a spectral gap at zero), assuming that one is working in the single-particle (i.e.\ non-interacting) framework \cite{FM,Thiang,SSG2,SSG3}. For $\bullet=0$, these are Class A insulators, whereas $\bullet=1$ is relevant for Class AIII ones characterised by the possession of an additional \emph{chiral} symmetry (an odd ``supersymmetry''). Quite aside from tabulating the possible topological phases, it is critical that the actual experimental signatures of nontrivial topological (crystalline) insulators are expected to be \emph{topological zero modes} localised at an appropriate boundary cut into the sample. Thus the somewhat ``invisible'' bulk topological invariant $K^{-\bullet+\tau_{\mathscr{G}}}_{G}(\hat{T}^d)$ appears on the boundary as a phenomenon which is simultaneously topological \emph{and} analytic in nature. This suggests that the so-called \emph{bulk-boundary correspondence} is index-theoretic in nature, and indeed justifies the appropriateness of the $K$-theoretic classification in the first place. In the non-crystalline case (i.e.\ $\mathscr{G}=\Pi\cong\ZZ^d$), such correspondences have been studied in mathematical physics for some time \cite{Hatsugai,GP,PSB,Bourne}, and with nonequivariant H-flux introduced in \cite{HMT, HMT2}. An approach using coarse geometry and $C^*$-algebras appears in \cite{Kubota2} and some crystalline symmetries were studied there. T-duality as a topological Fourier transform was introduced into this field in \cite{MT1}, and used to understand why certain Gysin (topological index) maps should implement the bulk-to-boundary homomorphisms \cite{HMT, HMT2}. 

The main roadblock one encounters when trying to study general crystallographic bulk-boundary correspondences rigorously is that the appropriate ``index'' for the topological boundary zero modes is not known. The main insight of \cite{GT} is that the symmetries of the boundary should be generalised from a lower-dimensional crystallographic space group, to a \emph{subperiodic group} (e.g.\ the frieze group ${\sf p11g}$ above), which is generally \emph{graded} according to the data of how the boundary sits inside the bulk (\S\ref{sec:bulkboundaryapplication}). The linear space of boundary zero modes should therefore host a graded representation of the boundary symmetries. This expectation was verified in \cite{GT} through a new mod 2 index theorem (valued in $K^{\rm graded}_\bullet(C^*_r({\sf p11g}))$) which counts the ``glide zero modes'' that appear along a boundary of a 2D topological insulator with glide reflection symmetry. More generally, the framework of graded groups and graded equivariant twistings of $K$-theory allows us to formulate appropriate ``super-indices'' for exotic topological boundary modes arising in crystalline topological phases.

%\subsection*{Outline}
\medskip

{\bf Notation}: $\ZZ_2$ denotes the 2-element group $\{\pm 1\}$ written multiplicatively, while $\ZZ/2=\{0,1\}$ is the additive version. When necessary, objects (e.g.\ bundle, projection map, twist) on one side of a T-duality are denoted with a \emph{small} hat $\hat{(\cdot)}$ to distinguish them from similar objects on the other side. The Pontryagin dual of an abelian group $A$ is denoted by $\wh{A}$ (\emph{wide} hat). Equations involving $K$-theory groups $K^\bullet(\cdot)$ hold for each $\bullet \in \ZZ/2$.

\section{Generalities on crystallographic space groups}\label{sec:crystalgeneral}
Let $R^d$ be $d$-dimensional (affine) Euclidean space, which can be identified with its vector group $\RR^d$ of translations upon choosing an origin. The Euclidean group $\mathscr{E}(d)$ of isometries of $R^d$ is then isomorphic to the semidirect product $\mathscr{E}(d)\cong \RR^d\rtimes {\rm O}(d)$ where ${\rm O}(d)$ is the orthogonal group fixing the origin. 

\begin{definition}[e.g.\ \cite{Hiller, Schwarz}]
A $d$-dimensional \emph{crystallographic space group}, or simply \emph{space group}, is a discrete cocompact subgroup $\mathscr{G}\subset\mathscr{E}(d)$.
\end{definition}
 From various classical theorems of Bieberbach \cite{Bie}, the \emph{lattice} $\Pi\coloneqq\mathscr{G}\cap \RR^d$ of translations in $\mathscr{G}$ is free abelian of rank $d$ (so isomorphic to $\ZZ^d$), with \emph{finite} quotient $G=\mathscr{G}/\Pi$. In fact, an abstract group $\mathscr{G}$ is characterised as a $d$-dimensional space group, by virtue of it having a finite-index normal free abelian subgroup of rank $d$ which is maximal abelian \cite{Zass}.

To summarise, there is a commutative diagram of groups
$$
\begin{CD}
0@>>>\RR^d@>>>\mathscr{E}(d)@>>>{\rm O}(d)@>>>1
 \\
@.@AAA @ AAA @ AA\rho A \\
0@>>>\Pi\cong\ZZ^d@>>> \mathscr{G} @>>> G @ >>> 1
\end{CD}
$$
in which the vertical maps are inclusions. In crystallography, the subgroup $\rho:G\hookrightarrow{\rm O}(d)$, is called the \emph{point group}. Note that $\mathscr{G}$ need not be isomorphic to a semi-direct product $\ZZ^d\rtimes G$; if it is, we say that $\mathscr{G}$ is \emph{symmorphic}.

\subsection{Actions of the point group}\label{sec:pointgroupaction}
\subsubsection{Linear and affine actions on position space torus}
Since $\Pi$ is normal in $\mathscr{G}$, the Euclidean action of $\mathscr{G}$ on $R^d$ descends to the $\Pi$-orbit space $T^d=R^d/\Pi$, which is an affine torus with translation group $\RR^d/\Pi \cong\TT^d$. Thus, there is a homomorphism $G=\mathscr{G}/\Pi\rightarrow {\rm Isom}(T^d)$. 

More concretely, pick an origin for $R^d$ and thus $T^d$. Then ${\rm Isom}(T^d)\cong \TT^d\rtimes {\rm Aut}(\TT^d)$ and we obtain a homomorphism
\begin{equation}
\bm{\alpha}\equiv(s,\alpha):G\rightarrow \TT^d\rtimes {\rm Aut}(\TT^d),\label{affinetorusaction}
\end{equation}
with $s:G\rightarrow\TT^d$ the ``translational'' part and $\alpha:G\rightarrow{\rm Aut}(\TT^d)$ the ``linear'' part fixing the origin. This terminology is based on the following. The linear action of $g\in G\subset{\rm O}(d)$ on $\RR^d$ taking $x\mapsto gx\in \RR^d$ is, inside the Euclidean group, implemented by choosing a section
 \begin{equation}
 g\mapsto \wt{g}\equiv(\wt{s}(g),g)\in\mathscr{G}\subset\RR^d\rtimes{\rm O}(d)=\mathscr{E}(d)\label{splittingmap}
 \end{equation}
 and then conjugating $(x,1)\in \mathscr{E}(d)$ by $(\wt{s}(g),g)$ to get $(gx,1)$. This conjugation preserves the subgroup of lattice translations $\Pi$, so there is an action $\alpha:G\rightarrow{\rm Aut}(\TT^d)$ on the quotient group $\TT^d=\RR^d/\Pi$. Note that the lifts $\wt{g}$, and thus the translational parts $\wt{s}(g)\in\RR^d$, are specified up to a $\Pi$ ambiguity, so there is a well-defined map $s:G\rightarrow\TT^d$. Points of $T^d$ are labelled, with respect to the origin, by equivalence classes $[x]\in\TT^d$ of translations $x\in\RR^d$ modulo $\Pi$. Then the affine $G$-action $\bm{\alpha}$ on $T^d$ can be written as
\begin{equation}
\bm{\alpha}(g)[x]\coloneqq[\wt{g}\cdot x]\equiv[\wt{s}(g)+gx]=[\wt{s}(g)]+[gx]=s(g)+\alpha(g)[x]=(s(g),\alpha(g))[x].\label{affinetorusaction2}
\end{equation}

\begin{definition}\label{defn:spacegrouptorus}
Let $\mathscr{G}$ be a $d$-dimensional space group with point group $G\subset{\rm O}(d)$. We write $T^d_{\mathscr{G}}$ for the quotient torus $T^d=R^d/\Pi$ equipped with the induced affine $G$-action $\bm{\alpha}$ described above. 
\end{definition}

We can identify $\Pi$ with $H_1(\TT^d,\ZZ)$, and ${\rm Aut}(\TT^d)$ with ${\rm Aut}(\Pi)$ via the induced map on homology. Upon choosing a (not necessarily orthonormal) basis for $\Pi\cong\ZZ^d$, the injective homomorphism $\alpha:G\rightarrow{\rm Aut}(\TT^d)={\rm Aut}(\Pi)\cong {\rm GL}(d,\ZZ)$
expresses $G$ as a finite subgroup of integral $d\times d$ matrices. The conjugacy class of $G$ in ${\rm GL}(d,\ZZ)$ is called the \emph{arithmetic crystal class}. There may be several non-isomorphic space groups within the same arithmetic crystal class, due to possible translational parts in $\bm{\alpha}\equiv(s,\alpha)$, and $s=0$ gives the symmorphic $\ZZ^d\rtimes_\alpha G$. 

The fundamental domain $\mathscr{G}\backslash R^d=G\backslash(R^d/\Pi)=G\backslash T^d$ is an \emph{orbifold} that is finitely covered by $T^d$. If $\bm{\alpha}$ happens to be a \emph{free} action of $G$ (thus $\mathscr{G}$ acts freely on $R^d$, is torsion-free, and is nonsymmorphic), then the fundamental domain is a flat \emph{manifold}. An orbifold approach to crystallography can be found in \cite{CFHT}.

\noindent
{\bf 1D examples.}
The 1D torus $T^1=S^1= \{ u \in \C |\ \lvert u \rvert = 1 \}$ admits three inequivalent $\ZZ_2$-actions: $S^1_{\mathrm{triv}}$, $S^1_{\mathrm{flip}}$ and $S^1_{\mathrm{free}}$ have the trivial action $u \mapsto u$, the flip action $u \mapsto \bar{u}$ and the free action $u \mapsto -u$ respectively.

The two space groups in 1D are $\ZZ$, and $\ZZ\rtimes\ZZ_2$ with $\ZZ_2$ acting by reflection on $\RR\supset\ZZ$, and are sometimes referred to as $\sf{p1},\sf{p1m1}$ respectively. The $G$-tori $R^1/\Pi=T^1$ are respectively $T^1_{\sf{p1}}=S^1$ (trivial $G$), and the involutive space $T^1_{\sf{p1m1}}=\Sflip$. 

The two other involutive circles $\Striv, \Sfree$ do not come directly from 1D space groups, but they appear in a generalisation to \emph{frieze groups} (Section \ref{sec:frieze}).

\noindent
{\bf 2D examples.}
There are 13 arithmetic crystal classes in $d=2$, whereas there are 17 wallpaper groups (2D space groups); the four extra ones are nonsymmorphic, see Table \ref{table:wallpapergroups}. For example, $\sf{pm}\cong\ZZ^2\rtimes\ZZ_2$ has point group $\ZZ_2$ acting by reflection in one coordinate, while the nonsymmorphic version $\sf{pg}$ has instead a \emph{glide reflection} --- a reflection followed by half a lattice translation along the orthogonal coordinate --- which squares to a lattice translation, so it is of infinite order (\S\ref{sec:pmpgcm}). The quotient of the 2-torus $T^2_{\sf{pg}}$ by its free involution $\bm{\alpha}$ is the Klein bottle, whose torsion-free fundamental group recovers $\sf{pg}$ .
\begin{table}
$$
\begin{array}{|c|c|c|c|c|c|c|c|c|c|c|c|c|}
\hline
\sf{p1} & \sf{p2} & \sf{p3} & \sf{p4} & \sf{p6} & \sf{pm} & \sf{cm} & \sf{pmm} & \sf{cmm} & \sf{p3m1} & \sf{p31m} & \sf{p4m} & \sf{p6m} \\
 & & & & &\sf{pg} & &\sf{pmg} & & & &\sf{p4g} & \\
 & & & & & & &\sf{pgg} & & & & & \\
\hline
1 & \ZZ_2 & \ZZ_3 & \ZZ_4 & \ZZ_6 & D_1 & D_1  & D_2 & D_2 & D_3 & D_3 & D_4 & D_6  \\
\hline
\end{array}
$$
\caption{The thirteen arithmetic crystal classes, with symmorphic representatives listed first. $D_n$ is the dihedral group of order $2n$. Note that we have written $D_1$ here for the point group of $\sf{pm}, \sf{pg}, \sf{cm}$ to emphasise that it contains a \emph{reflection}, whereas $\sf{p2}$ with isomorphic point group $\ZZ_2$ contains a \emph{rotation}.}\label{table:wallpapergroups}
\end{table}

\subsubsection{Crystallography and group cohomology}
Unlike $\alpha$, the map ${s}$ is not generally a homomorphism but satisfies the condition $${s}(g_1g_2)={s}(g_1)+\alpha(g_1)({s}(g_2)).$$
Thus ${s}$ is a \emph{group 1-cocycle} with values in $\TT^d$ (regarded as a $G$-module via $\alpha$). A different choice of origin shifted by $t\in \TT^d$ causes $({s}(g),\alpha(g))\in \TT^d\rtimes{\rm Aut}(\TT^d)$ to be conjugated by $(t,1)$ into $(s'(g),\alpha(g))=(t+{s}(g)-\alpha(g)(t),\alpha(g))$, thereby modifying ${s}$ by the 1-coboundary $g\mapsto t-\alpha(g)(t)$. Therefore, it is only the \emph{cohomology class} $[{s}]\in H^1_{\rm group}(G,\TT^d)$ which matters. 

We may specify all the possible space groups within an arithmetic crystal class $\alpha:G\hookrightarrow{\rm GL}(d,\ZZ)$ by specifying a group cohomology class\footnote{Technically, we are considering space groups up to isomorphism, and there is a redundancy given by an action on $H^1_{\rm group}(G,\TT^d)$ of the normaliser of $G$ in ${\rm Aut}(\TT^d)$.} $[s]\in H^1_{\rm group}(G,\TT^d)$, see e.g.\ Theorem 5.2 in \cite{Hiller}. Via the connecting homomorphism $\delta$ coming from the exact sequence of $G$-modules $0\rightarrow \Pi \rightarrow\RR^d\rightarrow\TT^d\rightarrow 0$, we have $H^1_{\rm group}(G,\TT^d)\cong H^2_{\rm group}(G,\Pi)$, so that we are equivalently looking for inequivalent extensions of the point group $G$ by $\Pi$ (e.g.\ \S3.4 of \cite{Schwarz}, remark after Theorem 5.2 of \cite{Hiller}). Explicitly, a lifting map $\wt{s}$ as in Eq.\ \eqref{splittingmap} determines the 2-cocycle
\begin{equation}
\nu(g,h)\coloneqq\delta s(g,h)\equiv\wt{s}(g)+g\cdot\wt{s}(h)-\wt{s}(gh)\in\Pi,\label{2cocycleforspacegroup}
\end{equation}
which twists the product rule in $\Pi\times_\alpha G$ to give the space group $\mathscr{G}$ as an extension of $G$ by $\Pi$. The extension is symmorphic iff the cocycle class of $\nu$ is trivial. Starting from $\mathscr{G}$, we see that its $\Pi$-valued 2-cocycle is
\begin{align*}
\nu'(g,h)\coloneqq \wt{g}\wt{h}\wt{gh}^{-1} &\equiv (\wt{s}(g),g)(\wt{s}(h),h)(\wt{s}(gh),gh)^{-1}\\
&=(\wt{s}(g)+g\cdot\wt{s}(h)-\wt{s}(gh),1)
\end{align*}
recovering the formula Eq.\ \eqref{2cocycleforspacegroup}. To emphasise that (the class of) $s$ is determined by $\mathscr{G}$, we shall sometimes write the affine torus action as $\bm{\alpha}=(s_\mathscr{G},\alpha)$.

\begin{table}
 \begin{tabular}{|| c | c | c | c | c | c | c | c ||} 

\end{tabular}
\end{table}

\subsubsection{Dual action on Brillouin torus}
Under Pontryagin duality ${\rm Hom}(\cdot,{\rm U}(1))\equiv \wh{(\cdot)}$, conventionally denoted with a \emph{wide} hat, the sequence of abelian groups 
$$0\rightarrow \Pi\cong\ZZ^d\rightarrow \RR^d\rightarrow \TT^d\rightarrow 0$$
is \emph{self-dual}, in that the dual sequence $0\rightarrow \wh{\TT^d}\rightarrow \wh{\RR^d}\rightarrow \wh{\Pi}\rightarrow 0$ again has $\wh{\TT^d}$ a lattice in $\wh{\RR^d}\cong\RR^d$ and quotient $\wh{\Pi}$ another torus. In fact, the dual lattice $\wh{\TT^d}$ is just the annihilator subgroup $\Pi^\perp\subset\wh{\RR^d}\equiv {\rm Hom}(\RR^d,{\rm U}(1))$ for $\Pi$ (so that they indeed define characters of $\TT^d=\RR^d/\Pi$). 

In solid state physics, $\TT^d=\RR^d/\Pi$ (actually the affine torus $T^d=R^d/\Pi$) is called the position space \emph{unit cell}, while $\Pi^\perp$ is the \emph{reciprocal lattice} in momentum space $\wh{\RR^d}$ with quotient $\wh{\RR^d}/\Pi^\perp\cong\wh{\Pi}$ the \emph{Brillouin torus} of crystal/quasi-momenta. 

{\bf Notation.} We will mostly be regarding the Brillouin torus $\wh{\Pi}$ as the T-dual topological space to the unit cell $T^d$, in which case we write it as $\hat{T}^d$ where the \emph{small} hat notation is usual in the string theory literature. This is not to be confused with taking the Pontryagin dual of the Brillouin torus as an abelian group (which would give back the lattice $\Pi$).

Given $\alpha:G\rightarrow{\rm GL}(d,\ZZ)\cong{\rm Aut}(\Pi)$, there is a canonical (linear) \emph{dual} action $\hat{\alpha}$ of $G$ on the Brillouin torus $\hat{T}^d=\wh{\Pi}$, defined in the usual way: $\hat{\alpha}(g)(\chi)=\chi\circ \alpha(g^{-1}), \,\,\chi\in \hat{T}^d=\wh{\Pi}$. For convenience, we also write this as the equation
\begin{equation*}
(g\cdot\chi)(n)=\chi(g^{-1}\cdot n), \qquad g\in G,\; \chi\in\hat{T}^d=\wh{\Pi},\; n\in \Pi.\label{dualaction}
\end{equation*}

\begin{remark}
Since we can think of $\hat{\alpha}$ as $\hat{\alpha}:G\rightarrow {\rm Aut} (\Pi^\perp)\cong{\rm GL}(d,\ZZ)$, there are \emph{two} maps $\alpha, \hat{\alpha}$ into $GL(d,\ZZ)$ (upon choosing bases for $\Pi, \Pi^\perp$), which are \emph{not} necessarily conjugate. This is due to the fact that a subgroup of $GL(d,\ZZ)$ need not be conjugate to its contragredient subgroup (inverse transpose). In $d=2$, $\alpha$ and $\hat{\alpha}$ are conjugate for any space group, except for $\sf{p31m}$ and $\sf{p3m1}$ (Lemma 2.4 of \cite{Gomi2}), see Section \ref{sec:p31mp3m1}. In higher dimensions, the general relation between $\alpha$ and $\hat{\alpha}$ appears to be difficult to ascertain, but see \cite{Michel} for some 3D examples.
\end{remark}

\subsubsection{Dual cocycle on Brillouin torus}\label{sec:dualcocycle}
Whether $\mathscr{G}$ is symmorphic or not, the Brillouin torus $\hat{T}^d$ is a $G$-space under $\hat{\alpha}$ with no translational part, so it is itself associated to the \emph{symmorphic} space group for the \emph{dual} arithmetic crystal class $\hat{\alpha}$ of $\mathscr{G}$. To achieve a full duality, the nonsymmorphicity data $s$ should also appear on the Brillouin torus side.

Let us write $g\cdot\chi\coloneqq \hat{\alpha}(g)(\chi)$ for $\chi\in\wh{\Pi}=\hat{T}^d$ to simplify notation. The group 2-cocycle $\nu:G\times G\rightarrow\Pi$ for $\mathscr{G}$ has a Fourier transformed version as a ${\rm U}(1)$-valued function $\tau_\mathscr{G}:G\times G\rightarrow C(\wh{\Pi},{\rm U}(1))\equiv{\rm U}(C(\hat{T}^d))$; explicitly,
\begin{equation}
\tau_\mathscr{G}(g_1,g_2)(\chi)=(g_1g_2\cdot\chi)(\nu(g_1,g_2)) \in {\rm U}(1),\qquad g_1,g_2\in G.\label{dualcocycle}
\end{equation}
The algebra $C(\hat{T}^d)$ and also its unitary group ${\rm U}(C(\hat{T}^d))$ admit a natural left action of $G$ by taking 
$(g\cdot f)(\chi)\coloneqq f(g^{-1}\cdot\chi),\;f\in C(\hat{T}^d)$.
Then we see that $\tau_\mathscr{G}$ is a group 2-cocycle with values in the $G$-module ${\rm U}(C(\hat{T}^d))$. As explained in \S\ref{sec:H3twistexamples}, the class of $\tau_\mathscr{G}$ in $H^2_{\rm group}(G,{\rm U}(C(\hat{T}^d)))$ can be regarded as a $G$-equivariant twist $\tau_\mathscr{G}\in H^3_G(\hat{T}^d,\ZZ)$. 

{\bf Idea of crystallographic T-duality.} 
The nonsymmorphicity data of a space group appears on the position space side in the affine action, whereas it is a $K$-theory twist on the momentum space side. The basic idea behind crystallographic T-duality for a space group $\mathscr{G}$, is that the position space data $\bm{\alpha}=(s_\mathscr{G},\alpha)$ determines dual data $(\hat{\alpha},\tau_\mathscr{G})$ in momentum space, and that despite this drastic-looking change, the $G$-equivariant $K$-(co)homology theories adapted to $(T^d,(s_\mathscr{G},\alpha))$ and $(\hat{T}^d,(\hat{\alpha},\tau_\mathscr{G}))$ are isomorphic in a natural way. Its precise statement requires a discussion of graded $K$-theory twists.

\section{Generalities on twistings of $K$-theory}\label{sec:twistgeneral}
In what follows, we are concerned with compact spaces $E$ equipped with continuous actions of a finite group $G$. The transformation groupoid $E/\!\!/G$ is a special case of a local quotient groupoid, and its complex (equivariant) $K$-theory has a category $\mathfrak{Twist}(E/\!\!/G)$ of twists, whose isomorphism classes $\pi_0(\mathfrak{Twist}(E/\!\!/G))$ fit into a short exact sequence of groups \cite{FHT1},
$$1\longrightarrow H^3_G(E,\ZZ)\longrightarrow \pi_0(\mathfrak{Twist}(E/\!\!/G))\longrightarrow H^1_G(E,\ZZ_2)\longrightarrow 1.$$
Under the bijection $\pi_0(\mathfrak{Twist}(E/\!\!/G))=H^3_G(E,\ZZ)\times H^1_G(E,\ZZ_2)\ni (h,w)$ of \emph{sets}, the group law is
\begin{equation}
(h_1,w_1)(h_2,w_2)=(h_1+h_2+\beta(w_1\cup w_2),w_1+w_2),\label{twistcomposition}
\end{equation}
where $\beta:H^2_G(E,\ZZ_2)\rightarrow H^3_G(E,\ZZ)$ is the Bockstein homomorphism associated to the mod 2 reduction $\ZZ\rightarrow\ZZ_2$ of coefficients. The \emph{ungraded twists} have $w=0$. 

In the non-equivariant case, $h\in H^3(E,\ZZ)$ is the Dixmier--Douady invariant when the $K$-theory of $E$ twisted by $h$ is modelled by gerbes \cite{BCMMS} or by continuous-trace algebras \cite{Rosenberg}, and is also called a H-flux in string theory. In solid state physics, it arises after a partial Fourier transform of a screw-dislocated lattice \cite{HMT}, see Section \ref{sec:partialTflux}. In the equivariant world, $H^3_G(E,\ZZ)$ need not vanish even if dim$(E)<3$. In fact, there is a nice interpretation of the various ``lower-dimensional'' terms that appear in the Leray--Serre spectral sequence computing $H^3_G(E,\ZZ)$ \cite{Gomi2}, intimately related to crystallographic groups when $E$ is a torus. Some examples are given in Section \ref{sec:H3twistexamples}.

Much less studied are the \emph{gradings} $w\in H^1_G(E,\ZZ_2)$ of twists. This cohomology group classifies $G$-equivariant real line bundles, or equivalently $G$-equivariant principal ${\rm O}(1)$ bundles over $E$, which we can think of as an ``orientation field''. Graded twists are required in crystallographic T-duality because of orientation reversing operations like reflections; they are needed to equivariantly implement push-forwards and Poincar\'{e} duality.

\subsection{Examples of special equivariant $H^3$ twists}\label{sec:H3twistexamples}
{\bf Twist from ${\rm U}(1)$ central extension of group.} For a finite group $G$ acting on a point, $H^3_G(\pt ,\ZZ)\cong H^3(BG,\ZZ)\cong H^3_{\rm group}(G,\ZZ)\cong H^2_{\rm group}(G,{\rm U}(1))$, so the equivariant $H^3$-twists come from ${\rm U(1)}$-valued group 2-cocycles of $G$, i.e.\ central extensions of $G$ by ${\rm U}(1)$. This cannot occur for $G=\ZZ_n$, but $D_2=\ZZ_2\times\ZZ_2$ has a non-trivial 2-cocycle $\omega$ specified by
\begin{equation}
\omega(((-1)^{k_1},(-1)^{k_2}),((-1)^{l_1},(-1)^{l_2}))=(-1)^{k_2l_1},\label{2cocyclepoint}
\end{equation}
which generates $H^3_{D_2}(\pt ,\ZZ)\cong\ZZ/2$. When a $D_2$-space $E$ has a fixed point, the pullback of $H^3_{D_2}$ from a point to $E$ is split injective and we continue to write the pullback as $\omega\in H^3_{D_2}(E,\ZZ)$. Similarly, if $G\rightarrow D_2$ splits, we continue to write $\omega$ for its pullback in $H^3_G(\pt ,\ZZ)$.

For the dihedral groups $D_n$ of order $2n$, which appear as point groups in crystallography, it is known that (e.g.\ Theorem 5.2 of \cite{Handel})
\begin{equation*}
H^3_{D_n}(\pt ,\ZZ)=H^3_{\rm group}(D_n,\ZZ)=\begin{cases}0 \qquad\, n\,\,{\rm odd},\\ \ZZ/2\quad n\,\,{\rm even}. \end{cases}\label{dihedralcohomology}
\end{equation*} 
For $n$ even, there are split surjections $D_n\rightarrow D_2$ so $\omega$ generates $H^3_{D_n}(\pt ,\ZZ)$.

{\bf Twist from group 2-cocycle in ${\rm U}(C(E))$.}
We have $H^3_{\ZZ_2}(\Sfree,\ZZ)\cong H^3(S^1,\ZZ)=0$, while $H^3_{\ZZ_2}(\Sflip,\ZZ)\cong 0$ and $H^3_{\ZZ_2}(\Striv,\ZZ)\cong \ZZ/2$ are recalled in \S\ref{sec:cohomologycalculations}. The generating twist for the latter is represented by a group 2-cocycle $\tau_{S^1}$ for $\ZZ_2$ with coefficients in the (trivial) $\ZZ_2$-module ${\rm U}(C(\Striv))$,
\begin{equation}
\tau_{S^1}(-1,-1)= \{k\mapsto e^{\im k}\}\in {\rm U}(C(S^1_{\rm triv})).\label{2cocyclecircle}
\end{equation}
In fact, $\tau_{S^1}$ is the Fourier transformed version, in the sense of \S\ref{sec:dualcocycle}, of the $\ZZ$-valued group 2-cocycle $\nu$ corresponding to the extension
$$0\rightarrow\ZZ\overset{\times 2}{\longrightarrow}\ZZ\overset{(-1)^{(\cdot)}}{\longrightarrow}\ZZ_2\rightarrow 1,$$
which has value $\nu(-1,-1)=1$ and 0 otherwise.

Generally, for a $\ZZ_2$-space $E$ with an equivariant map to $\Striv$, we will continue to write $\tau_{S^1}$ for its pullback to $E$, unless several such maps are possible and cause ambiguity. 

{\bf Non-cocycle twists.} For $T^2$, there may be general $G$-equivariant $H^3$-twists $h$ which are not represented by a 2-cocycle. Examples arising from crystallography will be discussed in Section \ref{sec:partialTflux}. We mention that these non-cocycle twists can be represented by central extensions of the groupoid $T^2/\!\!/G$, see \cite{Gomi2} for a discussion. In higher dimensions, there may also be $H^3$ twists which are non-equivariantly nontrivial.

\subsection{Examples of special equivariant $H^1$ twists}\label{sec:H1twistexamples}
{\bf $d=0$: Twist from grading of group.} For a finite group $G$ acting on a point, $H^1_G(\pt ,\ZZ_2)\cong {\rm Hom}(H_1(BG,\ZZ),\ZZ_2)\cong{\rm Hom}(G,\ZZ_2)$, so a $H^1$-twist is essentially a homomorphism $c:G\rightarrow\ZZ_2$ making $G$ into a \emph{graded} group. The equivariant line bundle over $\pt$ is just $\pt \times \RR$ with $g\in G$ acting by multiplication by $c(g)$. 

The pullback of $c$ from $\pt$ to a $G$-space $E$ gives a twist in $H^1_G(E,\ZZ_2)$, corresponding to the $G$-equivariant product line bundle $E\times\RR$ with $G$ action taking the fibre over $x\in E$ to the fibre over $g\cdot x$ followed by multiplication by $c(g)$; we call such twists \emph{$c$-type}. When there is only one possible surjective homomorphism, e.g.\ when $G$ is $\ZZ_2$ or $D_3$, we will just write $c$ for the unique nontrivial $H^1$-twist of $c$-type.

For a space group $\mathscr{G}$ with point group $\rho:G\rightarrow{\rm O}(d)$, there is a distinguished twist from the orientability homomorphism
\begin{equation}
c_\mathscr{G}:G\ni g\mapsto {\rm det}\,\,g \in \ZZ_2.\label{orientationhom}
\end{equation} 
Note that $\mathscr{G}, \mathscr{G}'$ can have the same $G$ as an abstract group but different $c_\mathscr{G}$, e.g.\ $\sf{p2}$ contains rotations so $c_{\sf{p2}}$ is the trivial map, while $\sf{pm}$ contains a reflection so $c_{\sf{pm}}$ is the identity map on $G\cong\ZZ_2$.

\medskip
\noindent
{\bf $d=1$: M\"{o}bius -type equivariant twists on circles.}
 If $E$ has a $G$-fixed point, there is a splitting $H^1_G(E,\ZZ_2)=H^1_G(\pt ,\ZZ_2)\oplus \wt{H}^1_G(E,\ZZ_2)$, and the reduced part may be represented by \emph{non-trivial} line bundles over $E$ made $G$-equivariant. This is a consequence of a calculation with the Leray--Serre spectral sequence. In such cases, equivariant $H^1$-twists come in $c$-type (from a point), in ``$M$-type'' (for {\bf M}\"{o}bius, coming from the reduced part), or a sum of the two types.
 
For $\Striv$, we have
$$H^1_{\ZZ_2}(\Striv,\ZZ_2)\cong H^1_{\ZZ_2}(\pt ,\ZZ_2)\oplus\wt{H}^1_{\ZZ_2}(\Striv,\ZZ_2)\cong \ZZ/2\oplus\ZZ/2,$$
where the first generator $c$ comes from $\ZZ_2\overset{\rm id}{\rightarrow}\ZZ_2$, and the second generator $M$ is the M\"{o}bius bundle over $S^1$ with $\ZZ_2$ acting trivially on the total space. The mixed twist $c+M$ is the M\"{o}bius bundle with $-1\in\ZZ_2$ acting fibrewise by multiplication by $-1$.

For $\Sflip$, we have 
$$H^1_{\ZZ_2}(\Sflip,\ZZ_2)\cong \ZZ/2\oplus\ZZ/2,$$ with one generator $c$ as above. The other generator $M$ is the M\"{o}bius line bundle with $\ZZ_2$-action given locally by $(e^{\im k},v)\mapsto(e^{-\im k},v), k\in(-\frac{2\pi}{3},\frac{2\pi}{3})$ and $(e^{\im k},v)\mapsto(e^{-\im k},-v), k\in(\frac{\pi}{3},\frac{5\pi}{3})$. Thus on the fibre over the fixed point $k=0$ (resp.\ $k=\pi$), the $\ZZ_2$-representation is trivial (resp.\ sign). Similarly, the mixed twist $c+M$ is the M\"{o}bius bundle with trivial (resp.\ sign) representation at $k=\pi$ (resp.\ $k=0$).

For $\Sfree$, we have 
$$H^1_{\ZZ_2}(\Sfree,\ZZ_2)\cong H^1(S^1,\ZZ_2)\cong\ZZ/2.$$
The generator $c$ is pulled back from a point; explicitly, take the product bundle $\Sfree\times\RR$ with involution $(e^{\im k},v)\mapsto(-e^{\im k},-v), k\in[0,2\pi]/_{0\sim 2\pi}$.

\subsection{Twisted composition rule for $H^3$ and $H^1$ twists}
An important example where the composition rule for graded twists, Eq.\ \eqref{twistcomposition}, is modified nontrivially, is $\pt /\!\!/D_2$. From the usual cohomologies of $BD_2=B\ZZ_2\times B\ZZ_2$, we obtain the $D_2$-equivariant cohomologies of $\pt$:
$$
\begin{array}{|c|c|c|c|c|c|}
\hline
& n = 0 & n = 1 & n = 2 & n = 3 \\
\hline
\hline
H^n_{D_2}(\pt,\ZZ_2) & \ZZ_2 & \ZZ_2^2 & \ZZ_2^3 & \ZZ_2^4 \\
\hline
\mbox{basis} & 1 & c_1, c_2 & c_1^2, c_1c_2, c_2^2 & c_1^3, c_1^2c_2, c_1c_2^2, c_2^3  \\
\hline
\hline
H^n_{D_2}(\pt,\ZZ) & \ZZ & 0 & \ZZ_2^2 & \Z_2  \\
\hline
\mbox{basis} & 1 & 0 & t_1, t_2 & \omega \\
\hline
\end{array}
$$
Here, the generators $c_i\in H^1_{D_2}(\pt,\ZZ_2)\cong\ZZ_2^2$ come from the $i$-th projections $p_i:D_2=\ZZ_2\times\ZZ_2\rightarrow\ZZ_2,\; i=1,2$. We had already seen that $H^3_{D_2}(\pt ,\ZZ)\cong\Z_2$, generated by the group 2-cocycle $\omega$ in Eq.\ \eqref{2cocyclepoint}. 
\begin{proposition}
For the composition of graded twists of $\pt /\!\!/D_2$, we have $(0,c_i)+(0,c_i)=(0,0), i=1,2$, but 
\begin{equation}
(0,c_1)+(0,c_2) = (0,c_2)+(0,c_1)=(\omega,c_1+c_2).\label{D2twistmodified}
\end{equation}
\end{proposition}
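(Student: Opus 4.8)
The plan is to reduce both identities to a computation of the integral Bockstein $\beta$ on the degree-two classes $c_i^2$ and $c_1\cup c_2$, and then to evaluate that Bockstein through the first Steenrod square. First I would apply the group law \eqref{twistcomposition} directly. Since $c_i+c_i=0$ and $c_1+c_2=c_2+c_1$ in $H^1_{D_2}(\pt,\ZZ_2)$, the pairs collapse to $(0,c_i)+(0,c_i)=(\beta(c_i\cup c_i),0)=(\beta(c_i^2),0)$ and $(0,c_1)+(0,c_2)=(0,c_2)+(0,c_1)=(\beta(c_1\cup c_2),\,c_1+c_2)$, the asserted commutativity being immediate from $c_1\cup c_2=c_2\cup c_1$ in $H^2_{D_2}(\pt,\ZZ_2)$. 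Thus the whole proposition is equivalent to the two equalities $\beta(c_i^2)=0$ and $\beta(c_1\cup c_2)=\omega$ in $H^3_{D_2}(\pt,\ZZ)\cong\ZZ/2$.

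To evaluate $\beta$, I would identify $\pt/\!\!/D_2$-cohomology with the ordinary cohomology of $BD_2=\RR P^\infty\times\RR P^\infty$, whose mod $2$ ring is the polynomial ring $\ZZ_2[c_1,c_2]$ with $c_i$ pulled back from the two factors. The key tool is the standard factorisation $\mathrm{Sq}^1=r\circ\beta$, where $r\colon H^*(\cdot,\ZZ)\to H^*(\cdot,\ZZ_2)$ is reduction mod $2$ and $\mathrm{Sq}^1$ is the first Steenrod square; on each $\RR P^\infty$ factor one has $\mathrm{Sq}^1c_i=c_i^2$. Using that $\mathrm{Sq}^1$ is a derivation (Cartan formula) together with $\mathrm{Sq}^1\circ\mathrm{Sq}^1=0$, I obtain $\mathrm{Sq}^1(c_i^2)=0$ and $\mathrm{Sq}^1(c_1 c_2)=c_1^2 c_2+c_1 c_2^2$, the latter being a nonzero element of $H^3_{D_2}(\pt,\ZZ_2)\cong\ZZ_2^4$ since $c_1^2 c_2$ and $c_1 c_2^2$ are distinct basis vectors.

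The remaining step is to promote these mod $2$ computations to integral ones, and this is where the only genuine subtlety lies: $\mathrm{Sq}^1=r\circ\beta$ pins down $\beta$ only modulo the kernel of $r$. Here I would invoke the Bockstein long exact sequence attached to $0\to\ZZ\xrightarrow{\times 2}\ZZ\to\ZZ_2\to 0$. Because $H^3_{D_2}(\pt,\ZZ)\cong\ZZ/2$, multiplication by $2$ vanishes on it, so the reduction $r$ is injective in degree three; hence an integral class in $H^3$ is detected by its mod $2$ reduction. Therefore $r(\beta(c_i^2))=\mathrm{Sq}^1(c_i^2)=0$ forces $\beta(c_i^2)=0$, while $r(\beta(c_1\cup c_2))=\mathrm{Sq}^1(c_1 c_2)\neq 0$ forces $\beta(c_1\cup c_2)$ to be the unique nonzero element of $H^3_{D_2}(\pt,\ZZ)=\{0,\omega\}$, namely the class $\omega$ of \eqref{2cocyclepoint}. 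Feeding these two Bockstein values back into the reduced group law completes the argument.

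I expect the injectivity of mod $2$ reduction in degree three to be the crux of the proof; everything preceding it is a formal application of the composition rule and the Cartan formula. If one wished to avoid Steenrod operations altogether, an alternative would be a direct group-cohomology computation identifying a representative cocycle for $\beta(c_1\cup c_2)$ with $\omega$, but that route is considerably more laborious and the $\mathrm{Sq}^1$ argument is cleaner.
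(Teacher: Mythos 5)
Your proposal is correct and follows essentially the same route as the paper: both reduce, via the composition rule \eqref{twistcomposition}, to evaluating $\beta(c_i^2)$ and $\beta(c_1\cup c_2)$, and both read these off from the Bockstein long exact sequence together with the tabulated $D_2$-equivariant cohomology of $\pt$. The paper compresses the evaluation of $\beta$ into a one-line ``we find that''; your argument via $\mathrm{Sq}^1=r\circ\beta$, the Cartan formula, and the injectivity of mod $2$ reduction on $H^3_{D_2}(\pt,\ZZ)\cong\ZZ/2$ is precisely the standard justification of that step.
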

\begin{proof}
Putting the $D_2$-equivariant cohomology groups for $\pt$ tabulated above into the Bockstein sequence
$$
\cdots\rightarrow H^n_{D_2}(\pt,\ZZ)\overset{\times 2}{\rightarrow} H^n_{D_2}(\pt,\ZZ)\rightarrow H^n_{D_2}(\pt,\ZZ_2)\overset{\beta}{\rightarrow}H^{n+1}_{D_2}(\pt,\ZZ)\rightarrow\cdots,
$$
we find that $\beta(c_i)=x_i, \beta(c_i^2)=0, i=1,2$ and $\beta(c_1c_2)=\omega$.
\end{proof}
Although $\pi_0(\mathfrak{Twist}(\pt/\!\!/D_2))\cong \ZZ_2^3$, it is not split as $H^3_{D_2}(\pt,\ZZ)\times H^1_{D_2}(\pt,\ZZ_2)$.

\subsection{Equivariant $K$-orientability of torus}
Non-equivariantly, the torus $T^d$ is ${\rm Spin}^c$ thus $K$-oriented, and Poincar\'{e} duality has the simple form $K_\bullet(T^d)\cong K^{d-\bullet}(T^d)$. If $T^d$ is replaced by an oriented compact $d$-manifold $M$, a twist by the ${\rm Spin}^c$ obstruction class $W_3(M)\in H^3(M,\ZZ)$ is needed on the $K$-theory side, e.g.\ Prop.\ 9.2 of \cite{ABG}.

When $T^d$ has an action $\bm{\alpha}$ of a finite group $G$, the obstruction to it being oriented in $G$-equivariant $K$-theory is
\begin{equation}
\sigma\equiv(W_3^G(T^d), w_1^G(T^d))\in H^3_G(T^d,\ZZ)\times H^1_G(T^d,\ZZ_2),\label{Korientabilityobstruction}
\end{equation}
where $W_3^G(T^d)$ and $w_1^G(T^d)$ are respectively the $G$-equivariant third integral Stiefel--Whitney class and first Stiefel--Whitney class of the tangent bundle $\mathcal{T}T^d$ of $T^d$. The Poincar\'{e} duality in this case is
\begin{equation}
K_\bullet^G(T^d)\cong K^{d-\bullet+\sigma}_G(T^d),\label{poincareduality}
\end{equation}
a special case of general dualities, e.g.\ Theorem 2.1 of \cite{Tu}, Theorem 2.9 of \cite{EEK}.

We are interested in $T^d_{\mathscr{G}}$ as in Definition \ref{defn:spacegrouptorus}, i.e.\ the affine $G$-actions $\bm{\alpha}$ on $T^d$ arising from a space group $\mathscr{G}$, as explained in Section \ref{sec:pointgroupaction}. Recall that $g\in G$ is an orthogonal transformation under $\rho:G\hookrightarrow {\rm O}(d)$, and via a lift $\wt{g}=(\wt{s}(g),g)\in\mathscr{G}\subset \RR^d\rtimes O(d)$ specified by a splitting map $\wt{s}:G\rightarrow\RR^d$, there is the induced $G$-action $\bm{\alpha}$ on $T^d_{\mathscr{G}}$ by $\bm{\alpha}(g)[x]=[\wt{s}(g)+gx]$ as in Eq.\ \eqref{affinetorusaction2}. 

\begin{lemma}
Let $\mathscr{G}$ be a space group and $\bm{\alpha}$ be the associated affine action of the point group $G\subset{\rm O}(d)$ on $T^d_{\mathscr{G}}$. The tangent bundle of $T^d_{\mathscr{G}}$ is a $G$-equivariant vector bundle isomorphic to $T^d_{\mathscr{G}}\times \RR^d$ with the product $G$-action.
\end{lemma}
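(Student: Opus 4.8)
The plan is to use the canonical flat parallelization of the torus $T^d=\RR^d/\Pi$ and then to verify that the induced $G$-action on tangent vectors is purely linear, the point being that differentiation annihilates the translational part $\wt{s}(g)$ of each affine map $\bm{\alpha}(g)$.

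First I would recall that the covering projection $\pi:\RR^d\to T^d$ identifies, via its differential, each tangent space $T_{[x]}T^d$ with $\RR^d=T_x\RR^d$. The deck transformations of this covering are the translations by elements of $\Pi$, and the derivative of a translation is the identity map on $\RR^d$; hence the identification $T_{[x]}T^d\cong\RR^d$ is independent of the chosen lift $x$ of $[x]$. This produces the standard (non-equivariant) trivialization $\Phi:\mathcal{T}T^d\xrightarrow{\cong}T^d\times\RR^d$ sending a tangent vector at $[x]$ to its coordinate representative in $\RR^d$.

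Next I would compute the $G$-action under $\Phi$. By Eq.~\eqref{affinetorusaction2}, $\bm{\alpha}(g)$ is the descent to $T^d$ of the affine map $A_g:\RR^d\to\RR^d$, $A_g(x)=\wt{s}(g)+gx$. The differential $dA_g$ at every point is the constant linear map $g\in{\rm O}(d)$ --- the translational part $\wt{s}(g)$ drops out upon differentiation --- so under $\Phi$ the induced map $d\bm{\alpha}(g)$ on $\mathcal{T}T^d$ reads $([x],v)\mapsto(\bm{\alpha}(g)[x],gv)$. This is precisely the product $G$-action on $T^d_{\mathscr{G}}\times\RR^d$, in which $G$ acts on the base by $\bm{\alpha}$ and on the fibre $\RR^d$ through the point-group representation $\rho:G\hookrightarrow{\rm O}(d)$. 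Therefore $\Phi$ is $G$-equivariant and furnishes the desired isomorphism of $G$-equivariant vector bundles.

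There is essentially no hard analytic content here; the one point that must be handled with care is the naturality of $\Phi$, namely that it descends unambiguously from $\RR^d$ (because $\Pi$ acts by derivative-one translations) and simultaneously intertwines the two actions. Once differentiation is seen to kill $\wt{s}(g)$, compatibility with the group law of $G$ is automatic, since $\bm{\alpha}$ is already a genuine action and $\Phi$ is canonical.
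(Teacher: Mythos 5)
Your proposal is correct and follows essentially the same route as the paper: trivialise $\mathcal{T}R^d$ by translations, observe that the differential of an affine isometry is its linear part so the Euclidean group acts on $R^d\times\RR^d$ by $(x,v)\mapsto(r+\mathcal{O}x,\mathcal{O}v)$, and pass to the quotient by $\Pi$. Your extra remark that the trivialisation descends because deck transformations have derivative the identity is exactly the content of the paper's phrase ``trivialised by the translation action.''
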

\begin{proof}
 The tangent bundle $\mathcal{T}R^d$ of $R^d$ is trivialised by the translation action of $\RR^d$, and so an element $(r,\mathcal{O})\in\RR^d\rtimes O(d)= \mathscr{E}(d)$ acts on $\mathcal{T}R^d=R^d\times\RR^d$ by $(x,v)\mapsto(r+\mathcal{O}x,\mathcal{O}v)$ in this trivialisation. In particular, a lift $\wt{g}$ of $g\in G$ in $\mathscr{G}$ acts this way, and passing to quotients we get 
 $$\mathcal{T}T^d_{\mathscr{G}}=\mathcal{T}R^d/\Pi=T^d_{\mathscr{G}}\times\RR^d\ni([x],v)\overset{g}{\mapsto}([\wt{s}(g)+gx],gv)=(\bm{\alpha}(g)[x],gv).$$
 \end{proof}
From this lemma, the obstruction classes $W_3^G(T^d_{\mathscr{G}}), w_1^G(T^d_{\mathscr{G}})$ are respectively the pullback of
\begin{equation}
W_3^G(\RR^d_\rho)\in H^3_G(\pt ,\ZZ),\qquad w_1^G(\RR^d_\rho)\in H^1_G(\pt ,\ZZ_2), \label{W3obstructionpoint}
\end{equation}
where $\RR^d_\rho\rightarrow\pt $ is the $G$-representation given by $\rho:G\hookrightarrow {\rm O}(d)$. Then $w_1^G(\RR^d_\rho)$ vanishes iff $\rho$ is orientable, i.e.\ factors through ${\rm SO}(d)$. Thus under the identification $H^1_G(\pt ,\ZZ_2)\cong {\rm Hom}(G,\ZZ_2)$, we have $w_1^G(\RR^d_\rho)$ being the orientability homomorphism $c_{\mathscr{G}}$ of Eq.\ \eqref{orientationhom}. 

Similarly, $W_3^G(\RR^d_\rho)$ vanishes iff $\rho$ is ${\rm Pin}^c$, factoring through the projection in
\begin{equation}
1\rightarrow {\rm U}(1)\rightarrow {\rm Pin}^c(d)\rightarrow{\rm O}(d)\rightarrow 1.\label{pincextension}
\end{equation}
It is well-known that for finite $G$, central extensions of $G$ by ${\rm U}(1)$ are classified by $H^2_{\rm group}(G,{\rm U}(1))\cong H^3_G(\pt ,\ZZ)$. Then $W_3^G(\RR^d_\rho)\in H^3_G(\pt ,\ZZ)$ corresponds to the pullback of Eq.\ \eqref{pincextension} under $\rho:G\rightarrow{\rm O}(d)$, which only depends on the point group. For the 2D space groups, only the point groups $G=D_2, D_4, D_6$ have a potential obstruction ($0\neq H^3_G(\pt ,\ZZ)\cong\ZZ/2$ in these cases), and by Lemma \ref{lem:2DW3obstruction} in the Appendix, the obstruction in all these cases are nontrivial and therefore equal to the unique nontrivial $\omega$.

\begin{definition}\label{defn:obstructionclass}
For the $G$-space $T^d_{\mathscr{G}}$ of Definition \ref{defn:spacegrouptorus}, we will write its $K$-orientability obstruction class $\sigma=(W_3^G(T^d_{\mathscr{G}}), w_1^G(T^d_{\mathscr{G}}))$ as $\sigma_\mathscr{G}$.
\end{definition}

\section{Crystallographic T-duality}\label{sec:crystalTdual}

\subsection{T-duality of circle bundles}
\subsubsection{Ordinary T-duality for a circle}
The basic T-duality in string theory exchanges a circle $S^1$ of radius $L$ with a circle $\hat{S}^1$ of radius $\frac{1}{L}$. There are degree-shifted isomorphisms
\begin{equation}
K^\bullet(S^1)\overset{{\rm T}}{\longleftrightarrow} K^{\bullet-1}(\hat{S}^1)\label{circledual}
\end{equation}
It is important to remember that $S^1$ and $\hat{S}^1$ are not canonically the same space, so that Eq.\ \eqref{circledual} is not the (non-trivial) observation that $K^0(S^1)\cong\ZZ\cong K^{-1}(S^1)$, but is rather a type of topological Fourier transform. This is apparent from its implementation as a Fourier--Mukai transform \cite{Hori}, Eq.\ \eqref{FMbasic}.

For applications in solid-state and quantum physics, it is useful to view the two circles $S^1, \hat{S}^1$ as coming from the group $\ZZ$ in two different ways. On the one hand, the affine $S^1$ is the quotient of the Euclidean line $R^1$ by a lattice $\ZZ\subset\RR$ of translations, i.e.\ the unit cell. Since $R^1$ is contractible, $S^1=B\ZZ$ is a classifying space for $\ZZ$. On the other hand, the Pontryagin dual of $\ZZ$ is topologically another circle $\hat{S}^1$, namely the 1D Brillouin torus. Notice that if $S^1$ has radius $L$ in the Euclidean metric, then $\hat{S}^1$ has radius $\frac{1}{L}$ in the dual metric on $\wh{\RR}$, exactly as in the string theory story.

As explained in Section \ref{sec:BC}, this duality between $S^1=B\ZZ$ and $\hat{S}^1=\wh{\ZZ}$ can be regarded in noncommutative topology language as a Baum--Connes isomorphism, which is useful for formulating our crystallographic generalisation in which $\ZZ$ is replaced by a space group.

\subsubsection{Topology change from $H^3$-twists}\label{sec:topologychange}
When trying to formulate the T-dual of a circle bundle $E\rightarrow X$ with some $h\in H^3(E,\ZZ)$, one finds the remarkable fact that the dual circle bundle $\hat{E}\rightarrow X$ may be non-isomorphic to $E$ \cite{BEM}. Furthermore, a dual twist $\hat{h}\in H^3(\hat{E},\ZZ)$ which completes the duality in the reverse direction always exists.

More precisely, for a pair $(E,h)$ on $X$ as above, there exists another pair $(\hat{E},\hat{h})$ on $X$ such that
\begin{equation}
\pi_*h=c_1(\hat{E}),\qquad \hat{\pi}_*\hat{h}=c_1(E),\qquad p^*h=\hat{p}^*\hat{h}\label{chern-flux-exchange}
\end{equation}
where $p:\pi^*\hat{E}\rightarrow E$ and $\hat{p}:\hat{\pi}^*E\rightarrow\hat{E}$ are projections from the fibre product $E\times_X\hat{E}=\pi^*\hat{E}=\hat{\pi}^*E$ as summarised in the diagram
\begin{equation}
\xymatrix@R.8pc@C.8pc{
& E\times_X\hat{E} \ar[dl]_{p}\ar[dr]^{\hat{p}}&  \\
 E \ar[dr]_{\pi}& \mbox{\phantom{$x$}}   & \hat{E} \ar[dl]^{\hat{\pi}}\\
& X & 
}\label{Tdualitydiagram}
\end{equation}
Such dual pairs enjoy the property that there is a T-duality isomorphism
$${\rm T}:K^{\bullet+h}(E)\cong K^{\bullet-1+\hat{h}}(\hat{E}).$$ 
In this way, there is a ``conservation of topological invariants'' on each side of the duality, even though the two sides may look very different.

The special case where $X$ is a point has $E=S^1, \hat{E}=\hat{S}^1$, and recovers the basic circle T-duality in Eq.\ \eqref{circledual}. There are no possible $H^3$-twists, and an explicit formula is the Fourier--Mukai transform
\begin{equation}
{\rm T}^{\rm FM}: K^0(S^1)\ni[\mathcal{L}]\mapsto\hat{p}_*([\mathcal{P}\otimes p^*\mathcal{L}])\in K^{-1}(\hat{S}^1),\label{FMbasic}
\end{equation}
where $\mathcal{P}\rightarrow S^1\times\hat{S}^1$ is the \emph{Poincar\'{e} line bundle}, recalled in \S\ref{sec:Poincarebundle}.

\subsection{Crystallographic T-duality and Baum--Connes assembly}\label{sec:BC}
Recall that the Baum--Connes assembly map \cite{BCH} for a discrete group $\mathscr{G}$ is a map
\begin{equation}
K_\bullet^\mathscr{G}(\underline{E}\mathscr{G})\overset{\mu}{\longrightarrow} K_\bullet(C_r^*(\mathscr{G}))\label{BaumConnes}
\end{equation}
where the left-hand-side is the equivariant $K$-homology (with compact supports) of the universal space $\underline{E}\mathscr{G}$ for proper $\mathscr{G}$-actions, and the right-hand-side is the $K$-theory of the reduced group $C^*$-algebra of $\mathscr{G}$.

A crystallographic space group $\mathscr{G}$ is a discrete subgroup of the Euclidean group $\RR^d\rtimes{\rm O}(d)$, and it also acts properly on $R^d$. Then we see that $R^d$ is an $\underline{E}\mathscr{G}$ (cf.\ \cite{BCH} Section 2), and it has only finite isotropy groups. Furthermore, the lattice subgroup $\Pi$ acts freely, so after quotienting by $\Pi$, we can rewrite
\begin{equation*}
K_\bullet^\mathscr{G}(\underline{E}\mathscr{G})\cong K_\bullet^{\mathscr{G}/\Pi}(R^d/\Pi)=K_\bullet^G(T^d_{\mathscr{G}}).\label{equivKhomology}
\end{equation*}
where the finite point group $G$ acts on $T^d_{\mathscr{G}}$ by $\bm{\alpha}$ given in Eq.\ \eqref{affinetorusaction}.

$C_r^*(\mathscr{G})$ is a noncommutative $C^*$-algebra, but since $\mathscr{G}$ is virtually abelian with twisted product $\mathscr{G}=\Pi\times_{\alpha,\nu} G$, we can understand $C_r^*(\mathscr{G})$ in a ``virtually commutative'' way by decomposing it as a twisted crossed product \cite{PR}
\begin{equation}
C_r^*(\mathscr{G})\cong C^*_r(\Pi)\rtimes_{\alpha,\nu}G\cong C(\hat{T}^d)\rtimes_{(\hat{\alpha},\tau_{\mathscr{G}})}G,\label{twistedcrossedproduct}
\end{equation}
where $\tau_{\mathscr{G}}$ is the Fourier transformed 2-cocycle of Eq.\ \eqref{dualcocycle} valued in the $G$-module ${\rm U}(C(\hat{T}^d))$. The $K$-theory of Eq.\ \eqref{twistedcrossedproduct} turns into a $\tau_{\mathscr{G}}$-twisted G-equivariant $K$-theory of $\hat{T}^d$ via a twisted Green--Julg theorem (Theorem 4.10 in \cite{Kubota}),
\begin{equation}
K_\bullet(C^*_r(\mathscr{G}))\cong K_\bullet(C(\hat{T}^d)\rtimes_{(\hat{\alpha},\tau_{\mathscr{G}})}G)\cong K_{\bullet+\tau_{\mathscr{G}}}^G(C(\hat{T}^d))\cong K_G^{-\bullet+\tau_{\mathscr{G}}}(\hat{T}^d),\label{GreenJulg}
\end{equation}
where $G$ acts on $\hat{T}^d$ via the dual action $\hat{\alpha}$.

The Baum--Connes conjecture is verified for $\mathscr{G}$ so that $\mu$ in \eqref{BaumConnes} is an isomorphism, which we rewrite using Eq.\ \eqref{GreenJulg} as
\begin{equation*}
\mu:K_\bullet^G(T^d_{\mathscr{G}}) \overset{\cong}{\longrightarrow} K^{-\bullet+\tau_{\mathscr{G}}}_G(\hat{T}^d).
\end{equation*}
We can convert the LHS to $K$-theory using $\sigma_\mathscr{G}$-twisted Poincar\'{e} duality Eq.\ \eqref{poincareduality}. Assembling these isomorphisms, we finally obtain:
\begin{theorem}[Crystallographic T-duality]\label{thm:crystalTdual}
Let $T^d_\mathscr{G}$ be the $d$-torus with the affine action $\bm{\alpha}\equiv(s_\mathscr{G},\alpha)$ associated to a space group $\mathscr{G}$, as in Definition \ref{defn:spacegrouptorus}, and let the graded twist $\sigma_\mathscr{G}$ be its $G$-equivariant Spin${}^c$ obstruction class as in Definition \ref{defn:obstructionclass}. Let $\hat{T}^d$ be the $d$-torus with the dual $G$-action $\hat{\alpha}$ and 2-cocycle twist $\tau_{\mathscr{G}}$ as defined in Section \ref{sec:dualcocycle}. Then $(T^d_\mathscr{G},\sigma_\mathscr{G})$ and $(\hat{T}^d,\tau_{\mathscr{G}})$ are \emph{crystallographic T-dual} in the sense that there is an isomorphism
\begin{equation}
{\rm T}_\mathscr{G}:K^{-\bullet+\sigma_\mathscr{G}}_G(T^d_\mathscr{G})\overset{\cong}{\longrightarrow}K^{-\bullet-d+\tau_{\mathscr{G}}}_{G}(\hat{T}^d).\label{crystalTdual}
\end{equation}
\end{theorem}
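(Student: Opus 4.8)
The plan is to assemble the isomorphism \eqref{crystalTdual} as a composite of three independently established dualities, reading off both the degree shift by $d$ and the appearance of the graded twists $\sigma_\mathscr{G}$ and $\tau_\mathscr{G}$ from the bookkeeping of the intermediate groups. First I would fix $R^d$ as a model for the universal proper $\mathscr{G}$-space $\underline{E}\mathscr{G}$: since $\mathscr{G}\subset\RR^d\rtimes{\rm O}(d)$ acts properly and cocompactly with finite isotropy and contractible (indeed convex) fixed-point sets, $R^d$ satisfies the defining universal property. Because the normal lattice $\Pi\cong\ZZ^d$ acts freely, I would descend along the quotient by $\Pi$ to identify the $\mathscr{G}$-equivariant $K$-homology with compact supports with $K^G_\bullet(T^d_\mathscr{G})$, where $G=\mathscr{G}/\Pi$ acts by the affine action $\bm{\alpha}$ of Definition \ref{defn:spacegrouptorus}.

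The second ingredient is the computation of the $K$-theory of the target $C^*_r(\mathscr{G})$. Here I would use that $\mathscr{G}$ is virtually abelian, $\mathscr{G}\cong\Pi\times_{\alpha,\nu}G$, to decompose $C^*_r(\mathscr{G})$ as the twisted crossed product \eqref{twistedcrossedproduct}; Pontryagin duality $C^*_r(\Pi)\cong C(\hat{T}^d)$ turns the $\Pi$-valued $2$-cocycle $\nu$ of \eqref{2cocycleforspacegroup} into the ${\rm U}(C(\hat{T}^d))$-valued Fourier-transformed cocycle $\tau_\mathscr{G}$ of \eqref{dualcocycle}, and the linear action into the dual action $\hat{\alpha}$. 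Applying the twisted Green--Julg theorem of \cite{Kubota} then rewrites $K_\bullet(C^*_r(\mathscr{G}))$ as the $\tau_\mathscr{G}$-twisted $G$-equivariant $K$-theory $K^{-\bullet+\tau_\mathscr{G}}_G(\hat{T}^d)$ exactly as in \eqref{GreenJulg}. Since every crystallographic group is amenable, so that the Baum--Connes conjecture \eqref{BaumConnes} holds for it, the assembly map $\mu$ supplies the isomorphism $K^G_\bullet(T^d_\mathscr{G})\cong K^{-\bullet+\tau_\mathscr{G}}_G(\hat{T}^d)$.

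Next I would convert the $K$-homology on the position-space side into twisted $K$-theory via the $G$-equivariant Poincar\'{e} duality \eqref{poincareduality}, $K^G_\bullet(T^d_\mathscr{G})\cong K^{d-\bullet+\sigma_\mathscr{G}}_G(T^d_\mathscr{G})$. That the relevant $K$-orientability obstruction is precisely $\sigma_\mathscr{G}$ follows from the preceding tangent-bundle Lemma: $\mathcal{T}T^d_\mathscr{G}\cong T^d_\mathscr{G}\times\RR^d_\rho$ equivariantly, so $W_3^G$ and $w_1^G$ are pulled back from the point-representation classes of \eqref{W3obstructionpoint}, which is the content of Definition \ref{defn:obstructionclass}. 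Re-indexing the homological degree by $d$ (mod $2$) and composing with the Baum--Connes/Green--Julg isomorphism of the previous paragraph then yields the chain
\[
K^{-\bullet+\sigma_\mathscr{G}}_G(T^d_\mathscr{G})\;\cong\;K^G_{\bullet+d}(T^d_\mathscr{G})\;\cong\;K^{-\bullet-d+\tau_\mathscr{G}}_G(\hat{T}^d),
\]
which is exactly \eqref{crystalTdual}.

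The main obstacle is not any single isomorphism --- each is cited --- but the coherent matching of the graded twist data and the degrees across three structurally different dualities (a topological Poincar\'{e} duality, an analytic assembly map, and a $C^*$-algebraic Green--Julg theorem). Concretely, the delicate points are: verifying that the twist produced on $\hat{T}^d$ by Green--Julg is literally the Fourier-transformed class $\tau_\mathscr{G}$ rather than some cohomologous but differently-presented cocycle; confirming that Poincar\'{e} duality on $T^d_\mathscr{G}$ is twisted by exactly $\sigma_\mathscr{G}$, including its $w_1^G$ grading coming from the orientation-reversing elements of $\rho$, rather than merely an ungraded $H^3$ piece; and keeping the nontrivial composition law \eqref{twistcomposition} for graded twists in view, so that a Bockstein correction $\beta(w_1\cup w_2)$ does not silently alter the identification. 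I expect this twist-and-degree reconciliation to be where essentially all the care is needed.
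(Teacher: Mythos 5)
Your proposal is correct and follows essentially the same route as the paper's own argument: identify $R^d$ as $\underline{E}\mathscr{G}$ and descend by the free $\Pi$-action to $K_\bullet^G(T^d_\mathscr{G})$, apply the Baum--Connes assembly map (an isomorphism since $\mathscr{G}$ is amenable), rewrite $K_\bullet(C^*_r(\mathscr{G}))$ via the twisted crossed product decomposition and the twisted Green--Julg theorem as $K^{-\bullet+\tau_\mathscr{G}}_G(\hat{T}^d)$, and convert the left side by $\sigma_\mathscr{G}$-twisted Poincar\'{e} duality, with the degree bookkeeping exactly as you state. Your closing remarks on where the care lies (matching $\tau_\mathscr{G}$ as a specific cocycle, the graded $w_1^G$ part of $\sigma_\mathscr{G}$, and the twist composition law) accurately identify the points the paper itself addresses in \S\ref{sec:dualcocycle} and the tangent-bundle lemma.
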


\begin{remark}\label{rem:extratwist}
Twists from a homomorphism $c:\mathscr{G}\rightarrow G\rightarrow \ZZ_2$ are relevant whenever $(\mathscr{G},c)$ arises as a \emph{graded} group in physical applications. We anticipate that such $c$-twists can be added to both sides of Eq.\ \eqref{crystalTdual}, amounting to the statement that a ``\emph{super Baum--Connes} assembly map'' for $(\mathscr{G},c)$ is an isomorphism. We leave the verification and the application of these conjectures for a future work. 
\end{remark}

\subsubsection{Crystallographic T-duality and Poincar\'{e} bundle}\label{sec:Poincarebundle}
We wish to define a variant of the Fourier--Mukai transform adapted to $\mathscr{G}$,
\begin{equation}
{\rm T}^{\rm FM}_\mathscr{G}:K^{-\bullet+\sigma_\mathscr{G}}_G(T^d_\mathscr{G})\rightarrow K^{-\bullet-d+\tau_{\mathscr{G}}}_{G}(\hat{T}^d).\label{crystalFM}
\end{equation}
Recall that the Poincar\'{e} line bundle $\mathcal{P}\rightarrow \TT^d\times\wh{\Pi}$ is defined as the quotient of the product line bundle $\RR^d\times\wh{\Pi}\times\CC\rightarrow \RR^d\times\wh{\Pi}$ under the $\Pi\cong\ZZ^d$ action,

\begin{equation}
n: \RR^d\times\wh{\Pi}\times\CC\ni (x,\chi,z)\mapsto (x+n,\chi,\chi(n)z),\qquad n\in\Pi.\label{Poincareaction}
\end{equation}
To incorporate $\mathscr{G}$, choose an origin to identify $T^d_\mathscr{G}$ with  $\TT^d=\RR^d/\Pi$, and recall that the $G$-action $\bm{\alpha}=(s_\mathscr{G},\alpha)$ on $T^d_\mathscr{G}\simeq\TT^d$ is obtained by first picking a map $\wt{s}:G\rightarrow\RR^d$ which lifts $G\in g\mapsto\wt{g}=(\wt{s}(g),g)\in\mathscr{G}\subset \RR^d\rtimes{\rm O}(d)$, and then taking $\bm{\alpha}(g)[x]=[\wt{s}(g)+gx]$ for $[x]\in \TT^d\simeq T^d_\mathscr{G}$ (Eq.\ \eqref{affinetorusaction2}). On $\RR^d\times\wh{\Pi}\times\CC$, we can further define a $G$-action $\wt{\gamma}_{\bm{\alpha}}$
\begin{equation}
\tilde{\gamma}_{\bm{\alpha}}(g):\RR^d\times\wh{\Pi}\times\CC\ni(x,\chi,z)\mapsto(\wt{g}\cdot x, g\cdot\chi,z)\equiv (\wt{s}(g)+gx, g\cdot\chi, z), \label{GactionPoincare}
\end{equation}
where we write $g\cdot\chi\equiv \hat{\alpha}(g)(\chi)$.
\begin{theorem}\label{thm:poincare}
$\tilde{\gamma}_{\bm{\alpha}}$ descends to a twisted $G$-action $\gamma_{\bm{\alpha}}:\mathcal{P}\rightarrow\mathcal{P}$ on the Poincar\'{e} line bundle, with cocycle the pullback of $\tau_{\mathscr{G}}^{-1}\in Z^2({\rm U}(C(\wh{\Pi})))$ under $\hat{p}:\TT^d\times\wh{\Pi}\rightarrow\wh{\Pi}$.
\end{theorem}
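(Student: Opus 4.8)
The plan is to verify the statement in two stages, in both cases computing upstairs on the trivial bundle $\RR^d\times\wh{\Pi}\times\CC$ before descending to $\mathcal{P}$. First I would show that each $\tilde{\gamma}_{\bm{\alpha}}(g)$ normalises the defining $\Pi$-action of Eq.~\eqref{Poincareaction}, so that it descends to a well-defined bundle map $\gamma_{\bm{\alpha}}(g)\colon\mathcal{P}\to\mathcal{P}$; second I would compute the failure of $g\mapsto\gamma_{\bm{\alpha}}(g)$ to be a strict homomorphism and match it to the claimed fibrewise phase.

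For the descent, fix $n\in\Pi$ and compare $\tilde{\gamma}_{\bm{\alpha}}(g)\circ n$ with the $\Pi$-action by $\alpha(g)(n)$. Applying the definitions, $\tilde{\gamma}_{\bm{\alpha}}(g)$ sends $n\cdot(x,\chi,z)=(x+n,\chi,\chi(n)z)$ to $(\wt{s}(g)+g(x+n),g\cdot\chi,\chi(n)z)$, whereas $\alpha(g)(n)\cdot\tilde{\gamma}_{\bm{\alpha}}(g)(x,\chi,z)$ equals $(\wt{s}(g)+gx+\alpha(g)(n),g\cdot\chi,(g\cdot\chi)(\alpha(g)(n))\,z)$. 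The positions agree, and the two fibre phases agree precisely because the defining identity $(g\cdot\chi)(m)=\chi(g^{-1}\cdot m)$ of the dual action gives $(g\cdot\chi)(\alpha(g)(n))=\chi(n)$. Hence $\tilde{\gamma}_{\bm{\alpha}}(g)\circ n=\alpha(g)(n)\circ\tilde{\gamma}_{\bm{\alpha}}(g)$; since $\alpha(g)$ permutes $\Pi$, the map $\tilde{\gamma}_{\bm{\alpha}}(g)$ carries $\Pi$-orbits to $\Pi$-orbits and descends to $\gamma_{\bm{\alpha}}(g)$.

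For the cocycle, I would compute the two composites upstairs. Using that $\hat{\alpha}$ is a genuine (untwisted) $G$-action on $\wh{\Pi}$, one finds that $\tilde{\gamma}_{\bm{\alpha}}(g_1)\tilde{\gamma}_{\bm{\alpha}}(g_2)$ and $\tilde{\gamma}_{\bm{\alpha}}(g_1g_2)$ agree in the character and fibre coordinates and differ in the $\RR^d$-coordinate exactly by the lattice element $\nu(g_1,g_2)=\wt{s}(g_1)+g_1\cdot\wt{s}(g_2)-\wt{s}(g_1g_2)\in\Pi$ of Eq.~\eqref{2cocycleforspacegroup}. Realigning by the $\Pi$-action with $n=\nu(g_1,g_2)$ to bring both into the same orbit produces the fibre phase $\big((g_1g_2)\cdot\chi\big)(\nu(g_1,g_2))=\tau_{\mathscr{G}}(g_1,g_2)(\chi)$, which is exactly the dual cocycle of Eq.~\eqref{dualcocycle}. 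Passing to the quotient $\mathcal{P}$, where the $\Pi$-action becomes trivial, then yields $\gamma_{\bm{\alpha}}(g_1)\gamma_{\bm{\alpha}}(g_2)=\tau_{\mathscr{G}}(g_1,g_2)(\chi)^{-1}\,\gamma_{\bm{\alpha}}(g_1g_2)$, so the governing $2$-cocycle is $\tau_{\mathscr{G}}^{-1}$; as this phase depends only on $\chi$, it is the $\hat{p}$-pullback, as claimed.

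The computations are elementary, so the only real care is bookkeeping. The main subtlety is confirming that the emergent phase is independent of the position coordinate $x\in\RR^d$, so that it genuinely descends to a function pulled back along $\hat{p}$ rather than a more general element of ${\rm U}(C(\TT^d\times\wh{\Pi}))$; this is what makes the pullback statement meaningful. A secondary point is the sign convention that converts the lattice defect $\nu$ into $\tau_{\mathscr{G}}^{-1}$ rather than $\tau_{\mathscr{G}}$, which is fixed once one decides whether the defect is written as a left or right translate and matches the statement as given.
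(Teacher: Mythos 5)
Your proposal is correct and follows essentially the same route as the paper's proof: verify that $\tilde{\gamma}_{\bm{\alpha}}(g)$ intertwines the $\Pi$-action (with $n$ replaced by $\alpha(g)(n)$) so that it descends to $\mathcal{P}$, then use the cocycle identity $\wt{s}(g_1)+g_1\cdot\wt{s}(g_2)=\nu(g_1,g_2)+\wt{s}(g_1g_2)$ and a $\Pi$-realignment to extract the fibrewise phase $\tau_{\mathscr{G}}(g_1,g_2)(\chi)^{-1}$. Your sign bookkeeping and the observation that the phase depends only on $\chi$ (hence is a $\hat{p}$-pullback) both match the paper, which merely asserts the descent diagram commutes where you spell it out.
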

\begin{proof}
We can readily check the commutativity of the diagram
$$
\xymatrix@R.8pc@C.8pc{
\RR^d\times\wh{\Pi}\times\CC\ar[rrr]^{\tilde{\gamma}_{\bm{\alpha}}(g)} \ar[dd]_{n}& &&  \RR^d\times\wh{\Pi}\times\CC\ar[dd]^{ng} \\
 & & & \\
 \RR^d\times\wh{\Pi}\times\CC\ar[rrr]^{\gamma_{\bm{\alpha}}(g)}&   & & \RR^d\times\wh{\Pi}\times\CC}
$$
for all $g\in G, n\in\Pi$, so that each $\tilde{\gamma}_{\bm{\alpha}}(g)$ gives a bundle map $\gamma_{\bm{\alpha}}(g):\mathcal{P}\rightarrow\mathcal{P}$ covering the $G$-action $\bm{\alpha}\times\hat{\alpha}$ on $\TT^d\times\wh{\Pi}$. Using the action formulae Eq.\ \eqref{GactionPoincare} and Eq.\ \eqref{Poincareaction}, along with Eq.\ \eqref{2cocycleforspacegroup} and Eq.\ \eqref{dualcocycle}, we compute for $g,h\in G$,
\begin{align*}
\tilde{\gamma}_{\bm{\alpha}}(g)\tilde{\gamma}_{\bm{\alpha}}(h)[x,\chi,z]&=\tilde{\gamma}_{\bm{\alpha}}(g)[\wt{s}(h)+hx, h\cdot\chi, z]\\
&= [\wt{s}(g)+g(\wt{s}(h)+hx),gh\cdot\chi, z]\\
&=[\nu(g,h)+\wt{s}(g,h)+ghx, gh\cdot\chi, z]\\
&=[\wt{s}(g,h)+ghx, gh\cdot\chi, gh\cdot\chi(-\nu(g,h))z]\\
&=\tilde{\gamma}_{\bm{\alpha}}(gh)[x,\chi,z]gh\cdot\chi(-\nu(g,h))\\
&=(\tau_\mathscr{G}^{-1}(g,h)(\chi))\tilde{\gamma}_{\bm{\alpha}}(gh)[x,\chi,z],
\end{align*}
verifying that the $\tilde{\gamma}_{\bm{\alpha}}$ action on $\mathcal{P}$ is twisted by the pullback of $\tau_\mathscr{G}^{-1}$ in $Z^2(G,{\rm U}(C(\wh{\Pi})))$.
\end{proof}
Thus $\mathcal{P}$ is a $\tau_\mathscr{G}^{-1}$-twisted $G$-equivariant line bundle, and Theorem \ref{thm:poincare} implies
\begin{corollary}\label{cor:trivialisepoincare}
The dual Poincar\'{e} line bundle $\mathcal{P}^*$ trivialises $\hat{p}^*\tau_\mathscr{G}$ as a twist in $H^3_G(\TT^d\times\wh{\Pi},\ZZ)$.
\end{corollary}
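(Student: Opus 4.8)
The plan is to read the corollary directly off Theorem \ref{thm:poincare}, using the dictionary between twisted equivariant line bundles and trivializations of an $H^3_G$-twist. The general fact I would invoke is: if a twist $\tau\in H^3_G(X,\ZZ)$ is represented by a group $2$-cocycle $\tau\in Z^2(G,{\rm U}(C(X)))$ --- equivalently, by a central extension of the action groupoid $X/\!\!/G$ by ${\rm U}(1)$ --- then a line bundle over $X$ carrying a $\tau$-twisted $G$-equivariant structure is precisely a trivialization of $\tau$. In the groupoid (bundle-gerbe) model of equivariant twists used in \cite{FHT1,Gomi2}, such a twisted line bundle furnishes a global section of the central extension, so its mere existence exhibits the class $\tau$ as zero in $H^3_G(X,\ZZ)$, with the bundle itself serving as the trivializing datum.

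First I would record how the twisting cocycle transforms under dualization. Theorem \ref{thm:poincare} supplies a $G$-action $\gamma_{\bm{\alpha}}$ on $\mathcal{P}$ satisfying $\gamma_{\bm{\alpha}}(g)\gamma_{\bm{\alpha}}(h)=\hat{p}^*\tau_\mathscr{G}^{-1}(g,h)\,\gamma_{\bm{\alpha}}(gh)$. Passing to the dual bundle $\mathcal{P}^*$ via $(\gamma^\vee(g)\phi)(v)=\phi(\gamma_{\bm{\alpha}}(g)^{-1}v)$, a short computation (using $\gamma_{\bm{\alpha}}(gh)^{-1}=\hat{p}^*\tau_\mathscr{G}^{-1}(g,h)^{-1}\gamma_{\bm{\alpha}}(h)^{-1}\gamma_{\bm{\alpha}}(g)^{-1}$) gives $\gamma^\vee(g)\gamma^\vee(h)=\hat{p}^*\tau_\mathscr{G}(g,h)\,\gamma^\vee(gh)$. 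That is, dualization inverts the scalar and replaces the cocycle $\hat{p}^*\tau_\mathscr{G}^{-1}$ by $\hat{p}^*\tau_\mathscr{G}$, so $\mathcal{P}^*$ is a $\hat{p}^*\tau_\mathscr{G}$-twisted $G$-equivariant line bundle over $\TT^d\times\wh{\Pi}$.

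With both ingredients in hand, the conclusion is immediate: applying the general fact to $X=\TT^d\times\wh{\Pi}$ and $\tau=\hat{p}^*\tau_\mathscr{G}$, the $\hat{p}^*\tau_\mathscr{G}$-twisted equivariant line bundle $\mathcal{P}^*$ constitutes a trivialization of $\hat{p}^*\tau_\mathscr{G}$ in $H^3_G(\TT^d\times\wh{\Pi},\ZZ)$, which is exactly the assertion of the corollary.

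I expect the only genuinely delicate point to be the first one, namely pinning down the precise sense in which a $\tau$-twisted equivariant line bundle trivializes the $H^3_G$-twist $\tau$. This rests on fixing the correct model of equivariant twists --- the central extension of the action groupoid, or the associated bundle gerbe --- in which a twisted line bundle is literally a section/trivialization. Once that model is in place, verifying that $\mathcal{P}^*$ carries the requisite $\hat{p}^*\tau_\mathscr{G}$-twisted structure is just the bookkeeping of the dualization computation above, and no further analytic input is needed.
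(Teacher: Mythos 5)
Your proposal is correct and follows essentially the same route as the paper, which deduces the corollary directly from Theorem \ref{thm:poincare}: $\mathcal{P}$ is a $\hat{p}^*\tau_\mathscr{G}^{-1}$-twisted $G$-equivariant line bundle, so its dual $\mathcal{P}^*$ carries a $\hat{p}^*\tau_\mathscr{G}$-twisted structure and therefore trivialises that class in the groupoid/central-extension model of twists. The only blemish is an inversion typo in your intermediate identity, which should read $\gamma_{\bm{\alpha}}(gh)^{-1}=\hat{p}^*\tau_\mathscr{G}(g,h)^{-1}\,\gamma_{\bm{\alpha}}(h)^{-1}\gamma_{\bm{\alpha}}(g)^{-1}$; the final cocycle relation $\gamma^\vee(g)\gamma^\vee(h)=\hat{p}^*\tau_\mathscr{G}(g,h)\,\gamma^\vee(gh)$ that you state is nevertheless the correct one.
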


Corollary \ref{cor:trivialisepoincare} allows us to construct a ``Fourier--Mukai'' transform adapted to $\mathscr{G}$, based on the diagram (cf.\ Eq.\ \eqref{Tdualitydiagram}),
$$
\xymatrix@R.8pc@C.8pc{
&\mathcal{P} \ar[d]& \\
& \TT^d\times\wh{\Pi}\ar[dl]_{p}\ar[dr]^{\hat{p}}& \\
\TT^d\ar[dr]_{\pi}& &\wh{\Pi}\ar[dl]^{\hat{\pi}}\\
&\pt &
}
$$
as the composition
$$K_G^{\bullet+\sigma_\mathscr{G}}(\TT^d)\overset{p^*}{\rightarrow}K_G^{\bullet+p^*\sigma_\mathscr{G}}(\TT^d\times\wh{\Pi})\overset{\mathcal{P}^*\otimes}{\rightarrow}K_G^{\bullet+p^*\sigma_\mathscr{G}+\hat{p}^*\tau_{\mathscr{G}}}(\TT^d\times\wh{\Pi})\overset{\hat{p}_*}{\rightarrow}K_G^{\bullet-d+\tau_{\mathscr{G}}}(\wh{\Pi}),$$
where $\sigma_\mathscr{G}$ is the $K_G$-orientability obstruction for $\TT^d$ needed for the push-forward $\hat{p}_*$ along $\TT^d$. Recalling that $T^d_{\mathscr{G}}\simeq \TT^d$ and $\hat{T}^d=\wh{\Pi}$, we obtain the desired map ${\rm T}^{\rm FM}_\mathscr{G}$ in Eq.\ \eqref{crystalFM}.

It is anticipated that ${\rm T}^{\rm FM}_\mathscr{G}$ is an isomorphism, implementing crystallographic T-duality, Eq.\eqref{crystalTdual}, and that twists that are pulled back from the common base (a point in the above case) can be added to both sides, cf.\ Remark \ref{rem:extratwist}. The latter is a common feature of well known dualities like ordinary circle bundle T-duality ${\rm T}$, as well as $\TZtwo, \TR$ recalled below.

\subsection{T-dualities for circle bundles with involution}
There are several variants of ordinary $K$-theory groups that we can apply to `Real' or \emph{involutive} spaces, i.e., spaces $X$ equipped with a continuous $\ZZ_2$ action $x\mapsto\bar{x}$. We shall assume that $X$ is a finite $\ZZ_2$-CW complex for simplicity. There is equivariant $K_{\ZZ_2}$, and also a variant $K_\pm$ introduced by Witten in his study of orientifold string theory \cite{Witten} and studied by Atiyah--Hopkins \cite{AH} in connection with Dirac operators. A definition of $K_\pm$ is
$$ K^\bullet_\pm(X)=K^{\bullet+1}_{\ZZ_2}(X\times\tilde{I},X\times\partial\tilde{I}),$$
where $\tilde{I}$ is the interval $[-1,1]$ and $X\times\tilde{I}$ has the involution $(x,t)\mapsto (\bar{x},-t)$. By a Thom isomorphism, one gets the relation \cite{Gomi1}
\begin{equation}
K^{\bullet+h}_\pm\cong K^{\bullet+(h,c)}_{\ZZ_2}(X),\qquad K^{\bullet+h}_{\ZZ_2}\cong K^{\bullet+(h,c)}_{\pm}(X)\label{KpmKZtworelation}
\end{equation}
where $h\in H^3_{\ZZ_2}(X,\ZZ)$ and $c\in H^1_{\ZZ_2}(X,\ZZ_2)$ is the graded twist coming from the unique nontrivial homomorphism $\ZZ_2\overset{{\rm id}}{\rightarrow}{\ZZ_2}$.

There is also Atiyah's $KR$-theory \cite{A-KR}, constructed out of complex vector bundles equipped with \emph{antilinear} involutions lifting the involution on the base. It turns out that $\Striv$ and $\hSflip$ are T-dual in this context, in that there is a naturally defined isomorphism between $KR^\bullet(\Striv)\equiv KO^\bullet(\Striv)$ and $KR^{\bullet-1}(\hSflip)$. Such dualities were studied in the context of orientifold string theories in \cite{Hori,DM-DR} and in the context of topological insulators in \cite{MT1}, and are related to the Baum--Connes conjecture over the reals \cite{Rosenberg2}. In the latter setting, $\hSflip$ can be thought of as a 1D Brillouin torus $\wh{\ZZ}$ with the flip involution induced by complex conjugating characters. It is possible to understand $KR$-theory as equivariantly twisted (complex) $K$-theory, provided we expand the notion of twists to ``$\phi$-twists'' \cite{FM, Gomi3}. This roughly means that $\ZZ_2$ (or more generally $G$) is allowed to act complex \emph{antilinearly} on fibres, and is motivated to a large extent by quantum physics where time-reversal is a basic example of such an antilinear symmetry operator. 
%These T-dualities give further examples of how twists can introduce ``topology change'' in the T-dual, in the sense that $\Striv$ and $\hSflip$ are not equivariantly homeomorphic. 

In this paper, we study T-dualities in the \emph{purely complex} (twisted) equivariant setting, i.e.\ $K_{\ZZ_2}, K_\pm, K_G$ and their twisted versions (with no further $\phi$-twisting), their relationship with crystallographic T-duality, and therefore their remarkable appearance in solid state physics.

\subsubsection{T-duality for `Real' circle bundles and $K_\pm$-theory}\label{sec:RealTduality}
{\bf Notation:} For a space $X$ with $\Z_2$-action, $H^n_{\Z_2}(X; \Z)$ denotes its Borel equivariant cohomology with integer coefficients. We will sometimes write $H^n_{\Z_2}(X)\equiv H^n_{\Z_2}(X; \Z)$ for simplicity. The variant $H^n_\pm(X)\coloneqq H^{n+1}_{\ZZ_2}(X\times\tilde{I},X\times\partial\tilde{I},\ZZ)$ is, by a Thom isomorphism \cite{Gomi1}, isomorphic to $H^n_{\Z_2}(X; \Z(1))$, which is the equivariant cohomology with coefficients in the local system $\Z(1)$ in which $\ZZ_2$ acts by $n\mapsto -n\in \ZZ$.

A `Real' circle bundle $E$ over a space $X$ with involution is defined to be a principal $S^1$ bundle $E\rightarrow X$ with an involution $\varrho$ lifting that on $X$, such that $\varrho(\xi u)=\varrho(\xi)\bar{u}$ for all $\xi\in E, u\in S^1\cong {\rm U}(1)$. A basic example is $\Sflip\rightarrow\pt$ with trivial involution on $\pt$. Such bundles are classified \cite{Kahn} by their first `Real' Chern class (Euler class) $c_1^R(E)\in H^2_\pm(X)$. An important tool for computing $H^n_{\ZZ_2}, H^n_\pm$ for `Real' circle bundles $E$ is the `Real' Gysin sequence (Corollary 2.11 of \cite{Gomi1}),
\begin{align*}
\cdots&\rightarrow H^{n-2}_{\ZZ_2}(X)\overset{c_1^R(E)\cup}{\rightarrow}H^n_\pm(X)\overset{\pi^*}{\rightarrow} H^n_\pm(E)\overset{\pi_*}{\rightarrow}H^{n-1}_{\ZZ_2}(X)\rightarrow\cdots\\
\cdots&\rightarrow H^{n-2}_\pm(X)\overset{c_1^R(E)\cup}{\rightarrow}H^n_{\ZZ_2}(X)\overset{\pi^*}{\rightarrow} H^n_{\ZZ_2}(E)\overset{\pi_*}{\rightarrow}H^{n-1}_{\pm}(X)\rightarrow\cdots
\end{align*}

A (`Real') \emph{pair} $(E,h)$ on $X$ comprises a `Real' circle bundle $E\rightarrow X$ and a class $h\in H^3_{\ZZ_2}(E,\ZZ)$.
\begin{definition}[Theorem 1.1 of \cite{Gomi1}]\label{defn:Realdualpair}
$(E,h)$ and $(\hat{E},\hat{h})$ are called (`Real') T-dual pairs if
$$
c_1^R(\hat{E})=\pi_*(h),\quad c_1^R(E)=\hat{\pi}_*(\hat{h}),\quad p^*h=\hat{p}^*\hat{h},
$$
where $p, \hat{p}, \pi, \hat{\pi}$ are the maps in the correspondence diagram Eq.\ \eqref{Tdualitydiagram} regarded in the `Real' circle bundle sense.
\end{definition}
Existence and uniqueness of T-dual pairs was established in \cite{Gomi1}.
\begin{theorem}\label{thm:RealTdual}
Let $(E,h)$ and $(\hat{E},\hat{h})$ be T-dual pairs over an involutive space $X$. Then there are $K^*_{\ZZ_2}(X)$-module isomorphisms
\begin{equation}
\TR:K^{\bullet+h}_{\ZZ_2}(E)\rightarrow K_{\pm}^{\bullet-1+\hat{h}}(\hat{E}),\;\qquad \TR:K^{\bullet+h}_\pm(E)\rightarrow K_{\ZZ_2}^{\bullet-1+\hat{h}}(\hat{E}).
\end{equation}
\end{theorem}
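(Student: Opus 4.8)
The plan is to realise $\TR$ as a fibrewise Fourier--Mukai transform along the correspondence space of Eq.\ \eqref{Tdualitydiagram}, and then to prove it is an isomorphism by comparing the two `Real' Gysin sequences and invoking the five lemma. First I would form the correspondence space $Z:=E\times_X\hat{E}$, a `Real' $T^2$-bundle over $X$ carrying the projections $p,\hat{p}$ of Eq.\ \eqref{Tdualitydiagram}. On $Z$ I would construct a `Real' Poincar\'{e} line bundle $\mathcal{P}$: a line bundle with an antilinear involution covering that of $Z$, whose restriction to each torus fibre is the classical Poincar\'{e} bundle generating $H^2(T^2,\ZZ)$, and which pairs the two `Real' circle factors of $Z$. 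The relation $p^*h=\hat{p}^*\hat{h}$ of Definition \ref{defn:Realdualpair} guarantees that the two pulled-back twists coincide as classes, so that $\mathcal{P}$, viewed as a twisted line bundle for the trivial difference $\hat{p}^*\hat{h}-p^*h$, supplies the canonical isomorphism of the corresponding twisted theories together with the off-diagonal Chern datum that makes the transform nondegenerate.

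With these data I would define $\TR$ as the composition
$$K^{\bullet+h}_{\ZZ_2}(E)\xrightarrow{p^*}K^{\bullet+p^*h}_{\ZZ_2}(Z)\xrightarrow{\mathcal{P}\otimes}K^{\bullet+\hat{p}^*\hat{h}}_{\ZZ_2}(Z)\xrightarrow{\hat{p}_*}K^{\bullet-1+\hat{h}}_{\pm}(\hat{E}),$$
the middle arrow using $p^*h=\hat{p}^*\hat{h}$. The essential mechanism is the final push-forward: the fibre of $\hat{p}$ is the `Real' circle coming from $E$, whose fibrewise tangent bundle is a `Real' line bundle, so integration over it shifts the degree by $-1$ and, through the Thom isomorphism for `Real' line bundles recorded in Eq.\ \eqref{KpmKZtworelation}, interchanges $K_{\ZZ_2}$ and $K_\pm$. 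This is precisely the swap already visible cohomologically in the `Real' Gysin sequences, where $\pi_*$ carries $H^n_{\ZZ_2}(E)$ to $H^{n-1}_\pm(X)$ and $H^n_\pm(E)$ to $H^{n-1}_{\ZZ_2}(X)$. It therefore accounts simultaneously for the degree shift $\bullet\mapsto\bullet-1$ and for the $K_{\ZZ_2}\leftrightarrow K_\pm$ interchange; the second isomorphism of the theorem follows from the identical construction with the two theories exchanged, or by composing and applying Eq.\ \eqref{KpmKZtworelation} once more.

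To prove $\TR$ is an isomorphism, I would exhibit it as a morphism between the $K$-theoretic `Real' Gysin sequences for $(E,h)\rightarrow X$ and $(\hat{E},\hat{h})\rightarrow X$. The flux/Chern-class exchange $c_1^R(\hat{E})=\pi_*h$ and $c_1^R(E)=\hat{\pi}_*\hat{h}$ of Definition \ref{defn:Realdualpair} is exactly what makes the Euler-class (cup-product) maps of the two sequences correspond, so that on the base terms the morphism becomes the identity of $K^*_{\ZZ_2}(X)$ and $K^*_\pm(X)$, which appear with roles permuted in the two sequences. A diagram chase with the five lemma then forces $\TR$ to be an isomorphism on the $E$- and $\hat{E}$-terms, which is the claim; the underlying fibrewise statement is just the classical one-dimensional Fourier--Mukai duality of Eq.\ \eqref{circledual}. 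Finally, $K^*_{\ZZ_2}(X)$-linearity is immediate: since $\pi\circ p=\hat{\pi}\circ\hat{p}$, the maps $p^*$ and $\mathcal{P}\otimes(\cdot)$ are $K^*_{\ZZ_2}(X)$-module homomorphisms, and $\hat{p}_*$ is one by the projection formula, so the composite $\TR$ respects the module structures.

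The hardest part will be the construction of the `Real' Poincar\'{e} bundle with the correct antilinear involution, together with the twist bookkeeping in the push-forward. One must check that the Thom class of the `Real' fibre tangent bundle carries exactly the grading twist $c$ of Eq.\ \eqref{KpmKZtworelation} responsible for the $K_{\ZZ_2}\leftrightarrow K_\pm$ swap, and that $\mathcal{P}$ is compatible with $p^*h$ and $\hat{p}^*\hat{h}$ as cochain-level representatives of the single class $p^*h=\hat{p}^*\hat{h}$. Establishing strict commutativity and exactness of the Gysin-sequence comparison in the twisted, involutive setting, so that the five lemma genuinely applies, is the most delicate technical step.
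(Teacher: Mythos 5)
The paper does not actually prove this statement: Theorem \ref{thm:RealTdual} is imported from Theorem 1.1 of \cite{Gomi1} (see the attribution on Definition \ref{defn:Realdualpair} and the surrounding text), so there is no in-text proof to compare against. Your overall architecture --- pull back along $p$, pass between the two pulled-back twists using Poincar\'{e}-type data, push forward along $\hat{p}$, with the degree shift and the $K_{\ZZ_2}\leftrightarrow K_\pm$ exchange produced by the vertical tangent bundle of a `Real' circle bundle carrying the grading twist $c$ of Eq.\ \eqref{KpmKZtworelation} --- is the standard Fourier--Mukai construction, consistent with how the paper itself builds ${\rm T}^{\rm FM}_{\mathscr{G}}$ in \S\ref{sec:Poincarebundle}, and the $K^*_{\ZZ_2}(X)$-linearity via the projection formula is fine.

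Two points need repair, however. First, the involution on $\mathcal{P}$ should be complex \emph{linear}: the paper works throughout in purely complex twisted equivariant $K$-theory ($K_{\ZZ_2}$, $K_\pm$, with no $\phi$-twist), and an antilinear lift would place you in $KR$-theory, which the paper explicitly sets aside; the word `Real' refers only to the anti-equivariance $\varrho(\xi u)=\varrho(\xi)\bar{u}$ of the circle action, and for instance on $\Sflip\times\hSflip$ the Poincar\'{e} bundle admits a genuinely linear $\ZZ_2$-equivariant lift of the product involution. Second, and more seriously, the five-lemma step as you state it does not go through: a twisted $K$-theoretic Gysin sequence for $\pi:E\rightarrow X$ is only available when the twist on $E$ is pulled back from $X$ (up to the orientation twist of $\pi$), and by the `Real' Gysin sequence in cohomology $h$ lies in the image of $\pi^*$ precisely when $\pi_*h=c_1^R(\hat{E})=0$ --- i.e.\ exactly in the cases where the duality is least interesting. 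So there is no pair of interlocking Gysin sequences to compare for a general T-dual pair. The standard repair is to use naturality of the transform in $X$ together with Mayer--Vietoris induction over a $\ZZ_2$-CW structure on the base, reducing to bases that are single $\ZZ_2$-orbits, where the fibrewise dualities are verified by direct computation (cf.\ Propositions \ref{prop:Z2_dual_over_pt} and \ref{prop:Real_dual_over_pt}); without this (or an equivalent spectral-sequence or operator-algebraic argument) the claim that $\TR$ is an isomorphism is not established.
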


\subsubsection{$\ZZ_2$-equivariant T-duality}\label{sec:ztwoTduality}
A $\ZZ_2$-equivariant circle bundle $E\rightarrow X$ over an involutive space $X$ is a principal $S^1$ bundle $E\rightarrow X$ with involution $\varrho$ lifting that on $X$, such that $\varrho(\xi u)=\varrho(\xi)u$ for all $\xi\in E, u\in S^1\cong {\rm U}(1)$. A basic nontrivial example is $\Sfree\rightarrow\pt$ with trivial involution on $\pt$. Such bundles are classified by their first equivariant Chern class $c_1^{\ZZ_2}(E)\in H^2_{\ZZ_2}(X,\ZZ)$. The Gysin sequence for such an $E$ is 
\begin{equation*}
\cdots\rightarrow H^{n-2}_{\ZZ_2}(X)\overset{c_1^{\ZZ_2}(E)\cup}{\rightarrow}H^n_{\ZZ_2}(X)\overset{\pi^*}{\rightarrow} H^n_{\ZZ_2}(E)\overset{\pi_*}{\rightarrow}H^{n-1}_{\ZZ_2}(X)\rightarrow\cdots
\end{equation*}

A ($\ZZ_2$-equivariant) pair $(E,h)$ on $X$ comprises a $\ZZ_2$-equivariant circle bundle $E\rightarrow X$ and a class $h\in H^3_{\ZZ_2}(E,\ZZ)$. Generalising Definition \ref{defn:Realdualpair} and Theorem \ref{thm:RealTdual} along the lines of the arguments for `Real' T-duality in \cite{Gomi1}, we have
\begin{definition}\label{defn:Z2dualpair}
$(E,h)$ and $(\hat{E},\hat{h})$ are called ($\ZZ_2$-equivariant) T-dual pairs if
$$
c_1^{\ZZ_2}(\hat{E})=\pi_*(h),\quad c_1^{\ZZ_2}(E)=\hat{\pi}_*(\hat{h}),\quad p^*h=\hat{p}^*\hat{h},
$$
where $p, \hat{p}, \pi, \hat{\pi}$ are as in Eq.\ \eqref{Tdualitydiagram} regarded in the $\ZZ_2$-equivariant sense.
\end{definition}
\begin{theorem}\label{thm:ZtwoTdual}
Let $(E,h)$ and $(\hat{E},\hat{h})$ be ($\ZZ_2$-equivariant) T-dual pairs over an involutive space $X$. Then there are $K^*_{\ZZ_2}(X)$-module isomorphisms
\begin{equation}
\TZtwo:K^{\bullet+h}_{\ZZ_2}(E)\rightarrow K_{\ZZ_2}^{\bullet-1+\hat{h}}(\hat{E}),\;\qquad \TZtwo:K^{\bullet+h}_\pm(E)\rightarrow K_\pm^{\bullet-1+\hat{h}}(\hat{E}).
\end{equation}
\end{theorem}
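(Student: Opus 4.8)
The plan is to mirror the proof of `Real' T-duality (Theorem \ref{thm:RealTdual}) from \cite{Gomi1}, systematically replacing `Real' circle bundles by $\ZZ_2$-equivariant ones, and to construct $\TZtwo$ as a Fourier--Mukai transform on the correspondence space. Concretely, let $E \times_X \hat{E}$ be the fibre product sitting over $X$ as in Eq.\ \eqref{Tdualitydiagram}, now regarded $\ZZ_2$-equivariantly, with projections $p$ to $E$ and $\hat{p}$ to $\hat{E}$. I would first build a $\ZZ_2$-equivariant Poincar\'{e} line bundle $\mathcal{P}$ on $E \times_X \hat{E}$ whose role is to mediate between the two pulled-back twists: the defining condition $p^*h = \hat{p}^*\hat{h}$ of Definition \ref{defn:Z2dualpair} guarantees these agree as classes, and $\mathcal{P}$ supplies an explicit equivariant trivialisation of their difference, hence an isomorphism of $p^*h$-twisted with $\hat{p}^*\hat{h}$-twisted $K$-theory on the correspondence space. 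The map $\TZtwo$ is then the composition ``pull back along $p$, tensor with $\mathcal{P}$, push forward along $\hat{p}$'', the final fibre integration along the circle fibre of $\hat{p}$ producing the degree shift $\bullet \mapsto \bullet - 1$. Because $\hat{p}_*$ obeys a projection formula, this composite is automatically a $K^*_{\ZZ_2}(X)$-module homomorphism, as claimed.

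To prove it is an isomorphism, I would compare Gysin sequences. Both $K^{\bullet+h}_{\ZZ_2}(E)$ and $K^{\bullet-1+\hat{h}}_{\ZZ_2}(\hat{E})$ fit into the $K$-theoretic analogue of the $\ZZ_2$-equivariant Gysin sequence displayed in \S\ref{sec:ztwoTduality}, relating the twisted $K$-theory of the total space to that of the base $X$ via cup product with the equivariant Euler class. The Fourier--Mukai transform intertwines these two sequences into a commutative ladder; the crucial check is that on the base terms the transform reduces to the data of Definition \ref{defn:Z2dualpair}, so that the identities $c_1^{\ZZ_2}(\hat{E}) = \pi_* h$ and $c_1^{\ZZ_2}(E) = \hat{\pi}_* \hat{h}$ match the Gysin differentials on the two sides. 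Granting commutativity of the ladder, the five lemma yields the isomorphism. This is exactly the structure of Gomi's argument for the `Real' case; the only substantive change is that, because the $\ZZ_2$-action on the fibres is now \emph{linear} rather than antilinear, fibre integration preserves the flavour of the theory instead of exchanging $K_{\ZZ_2}$ and $K_\pm$ --- which is precisely why both displayed isomorphisms keep matching subscripts, in contrast to Theorem \ref{thm:RealTdual}.

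For the second isomorphism on $K_\pm$, I would deduce it from the first using the relation \eqref{KpmKZtworelation}. Adding the grading twist $c$ converts $K^{\bullet+h}_\pm(E) \cong K^{\bullet+(h,c)}_{\ZZ_2}(E)$ and likewise $K^{\bullet-1+\hat{h}}_\pm(\hat{E}) \cong K^{\bullet-1+(\hat{h},c)}_{\ZZ_2}(\hat{E})$. Since $c$ records only the sign character of the $\ZZ_2$-action and is therefore pulled back from the base, it is compatible with $p^*$, with tensoring by $\mathcal{P}$, and with $\hat{p}_*$; hence $(E,(h,c))$ and $(\hat{E},(\hat{h},c))$ are again $\ZZ_2$-equivariant T-dual pairs in the appropriate graded sense, and the $K_{\ZZ_2}$-isomorphism already established transports to the $K_\pm$ statement. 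Alternatively one may rerun the Gysin comparison directly with the $K_\pm$ Gysin sequence.

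The main obstacle I anticipate is the construction and verification of the equivariant Poincar\'{e} bundle $\mathcal{P}$ together with the well-definedness of $\hat{p}_*$ in the $\ZZ_2$-equivariant twisted setting: one must confirm that $\mathcal{P}$ carries a genuine $\ZZ_2$-equivariant structure trivialising $p^*h - \hat{p}^*\hat{h}$ and that integration along the linearly-acted circle fibre is compatible with this trivialisation and with the module structure over $K^*_{\ZZ_2}(X)$. Verifying commutativity of the comparison ladder --- that the T-duality conditions of Definition \ref{defn:Z2dualpair} exactly reproduce the Euler-class differentials of the Gysin sequences --- is the other delicate point, though it is a routine adaptation of the corresponding computation for `Real' circle bundles in \cite{Gomi1}.
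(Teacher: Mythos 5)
Your proposal is correct and follows essentially the same route as the paper, which itself gives no detailed proof here but simply asserts the theorem by ``generalising \ldots along the lines of the arguments for `Real' T-duality in \cite{Gomi1}'' --- i.e.\ precisely the push--pull/Gysin-ladder adaptation you describe, with the key observation that a linear (rather than antilinear) fibre action makes the fibre integration preserve the flavour of the theory. Your derivation of the $K_\pm$ statement from the $K_{\ZZ_2}$ one via Eq.\ \eqref{KpmKZtworelation} and the base-pulled-back twist $c$ is likewise consistent with how the paper uses these groups.
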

Note that $T_{\ZZ_2}$ is in particular a $R(\ZZ_2)=K^0_{\ZZ_2}(\pt)$-module map.

\section{`Real' and $\ZZ_2$-equivariant T-dualities over involutive circle base}\label{sec:circleTdualcalculations}
In this section, we will provide all the `Real' or $\ZZ_2$-equivariant circle bundle T-dualities with base space one of the three involutive circles $\Striv, \Sflip, \Sfree$. The total space of the circle bundle is necessarily a 2-torus, which admits six inequivalent $\Z_2$-actions \cite{S}, five of which are products of circles with $\Z_2$-action,
\begin{align*}
&S^1_{\mathrm{triv}} \times S^1_{\mathrm{triv}}, &
&S^1_{\mathrm{triv}} \times S^1_{\mathrm{flip}}, &
&S^1_{\mathrm{flip}} \times S^1_{\mathrm{flip}}, &
&S^1_{\mathrm{triv}} \times S^1_{\mathrm{free}}, &
&S^1_{\mathrm{flip}} \times S^1_{\mathrm{free}}.
\end{align*}
The following crystallographic interpretations are available for three of these:
\begin{equation*}
T^2_{\mathsf{pm}}\cong S^1_{\mathrm{triv}} \times S^1_{\mathrm{flip}},\quad
T^2_{\mathsf{p2}}\cong S^1_{\mathrm{flip}} \times S^1_{\mathrm{flip}}
,\quad
T^2_{\mathsf{pg}}\cong S^1_{\mathrm{flip}} \times S^1_{\mathrm{free}}.\label{pmp2pgtorus}
\end{equation*}
$S^1_{\mathrm{triv}} \times S^1_{\mathrm{triv}}, S^1_{\mathrm{triv}} \times S^1_{\mathrm{free}}$ do not arise directly from 2D wallpaper groups, but from a slight generalisation called \emph{layer groups} \cite{KL} associated to 2D ``layers'' in 3D\footnote{Explicitly, $S^1_{\mathrm{triv}} \times S^1_{\mathrm{triv}}$ would correspond to $\mathsf{p11m}$ (reflection plane symmetries) while $S^1_{\mathrm{triv}} \times S^1_{\mathrm{free}}$ corresponds to $\mathsf{p11a}$ (glide plane symmetries), although the point group needs to be regarded as a graded group, as in the frieze group $\sf{p11g}$.}. The sixth and final $\Z_2$ action is induced from the wallpaper group \textsf{cm},
\begin{equation*}
T^2_{\mathsf{cm}} = S^1 \times S^1, \qquad (u_1, u_2) \mapsto (u_2, u_1).\label{cmtorus}
\end{equation*}

The fibreings of $T^2_{\sf{p2}}, T^2_{\sf{pm}}, T^2_{\sf{pg}}, T^2_{\sf{cm}}$ as `Real' and/or $\ZZ_2$-equivariant circle bundles are illustrated in Fig.\ \ref{fig:fiberings}, and explained in further detail in the subsequent Subsections.

\begin{figure}
\begin{tikzpicture}[scale=1.7,every node/.style={scale=0.8}]
%\draw[help lines] (0,0) grid (7,2);

\draw[dotted] (0,0) -- (1,0) -- (1,1) -- (0,1) -- (0,0);
\draw[thick] (0,0) -- (1,0);
\draw[thick] (0,0.5) -- (1,0.5);
\draw[thick] (0,1) -- (1,1);
\node[below] at (0.5,0) {$\Striv$};
\node[left] at (0,0.5) {\begin{sideways}$\Sflip$\end{sideways}};
\node[above] at (0.5,1.1) {$\sf{pm}$};
\draw[<->] (0.4,0.4) -- (0.4,0.6);
\draw[<->] (0.6,0.2) -- (0.6,0.8);

\draw[dotted] (2,0) -- (3,0) -- (3,1) -- (2,1) -- (2,0);
\draw[thick,dashed] (2,0) -- (3,0);
\draw[thick,dashed] (2,0.5) -- (3,0.5);
\draw[thick,dashed] (2,1) -- (3,1);
\node[below] at (2.5,0) {$\Sfree$};
\node[left] at (2,0.5) {\begin{sideways}$\Sflip$\end{sideways}};
\node[above] at (2.5,1.1) {$\sf{pg}$};
\draw[->] (2.2,0.8) -- (2.2,0.2);
\draw[->] (2.2,0.8) -- (2.7,0.8);
\draw[->] (2.4,0.6) -- (2.4,0.4);
\draw[->] (2.4,0.6) -- (2.9,0.6);

\draw[dotted] (1,1.5) -- (2,1.5) -- (2,2.5) -- (1,2.5) -- (1,1.5);
\node at (1,1.5) {$\circ$};
\node at (1.5,1.5) {$\circ$};
\node at (2,1.5) {$\circ$};
\node at (1,2) {$\circ$};
\node at (1.5,2) {$\circ$};
\node at (2,2) {$\circ$};
\node at (1,2.5) {$\circ$};
\node at (1.5,2.5) {$\circ$};
\node at (2,2.5) {$\circ$};
\node[below] at (1.5,1.5) {$\Sflip$};
\node[left] at (1,2) {\begin{sideways}$\Sflip$\end{sideways}};
\node[above] at (1.5,2.6) {$\sf{p2}$};
\draw[<->] (1.75,2.25) arc (45:225:0.354);% syntax (starting point coordinates) arc (starting angle:ending angle:radius)
\draw[<->] (1.5,1.8) arc (-90:90:0.2);

\draw[dotted] (4,0) -- (5,1) -- (5,2) -- (4,1) -- (4,0);
\draw[thick] (4,0.5) -- (5,1.5);
\draw[ultra thin] (4,0.67) -- (5,1.67);
\draw[ultra thin] (4,0.83) -- (5,1.83);
\draw[ultra thin] (4,0.33) -- (5,1.33);
\draw[ultra thin] (4,0.17) -- (5,1.17);
\draw[thick,dashed] (4,0) -- (5,1);
\draw[thick,dashed] (4,1) -- (5,2);
\node[right] at (5,0.95) {\begin{rotate}{45}$\Sfree$\end{rotate}};
\node[right] at (5,1.95) {\begin{rotate}{45}$\Sfree$\end{rotate}};
\node[right] at (5,1.5) {\begin{rotate}{45}$\Striv$\end{rotate}};
\node[left] at (4,0.5) {\begin{sideways}$\Sflip$\end{sideways}};
\node[above] at (4.2,1.5) {$\sf{cm}$};
\draw[<->] (4.2,0.8) -- (4.3,0.7);
\draw[<->] (4.4,1.4) -- (4.9,0.9);

\draw[dotted] (6,2) -- (7,1) -- (7,0) -- (6,1) -- (6,2);
\draw[thick] (6,1) -- (6.5,1.5);
\draw[thick] (6.5,0.5) -- (7,1);
\draw[ultra thin] (6,1.83) -- (7,0.83);
\draw[ultra thin] (6,1.67) -- (7,0.67);
\draw[ultra thin] (6,1.5) -- (7,0.5);
\draw[ultra thin] (6,1.33) -- (7,0.33);
\draw[ultra thin] (6,1.17) -- (7,0.17);
\draw[thick,dashed] (4,0) -- (5,1);
\draw[thick,dashed] (4,1) -- (5,2);
\node[below] at (6.3,0.5) {\begin{rotate}{315}$\Sflip$\end{rotate}};
\node[left] at (6,1.5) {\begin{sideways}$\Striv$\end{sideways}};
\node[above] at (6.7,1.8) {$\sf{cm}$};
\draw[<->] (6.7,0.5) -- (6.5,0.7);
\draw[<->] (6.2,1.7) -- (6.7,1.2);

\end{tikzpicture}
\caption{For $\mathscr{G}=\sf{p2},\sf{pm},\sf{pg},\sf{cm}$, the lattice $\Pi$ is taken to be square, generated by a horizontal and a vertical translation. Their unit cells $T^2_{\mathscr{G}}=R^2/\Pi$ are drawn, with arrows indicating how pairs of points are exchanged under the induced involution $\bm{\alpha}$ on $T^2_{\mathscr{G}}$. There is a product fibring of $T^2_{\sf{p2}}, T^2_{\sf{pm}}, T^2_{\sf{pg}}$ as $\ZZ_2$-equivariant circle bundles (horizontal fibres) or as `Real' circle bundles (vertical fibres). For $\sf{cm}$ we redraw $T^2_{\sf{cm}}$ as a rhombus in two different ways to illustrate its non-product (diagonal) fibring as a $\ZZ_2$-equivariant circle bundle and as a `Real' circle bundle. Thick lines indicate reflection axes, dashed lines indicate glide axes, and circles indicate $\pi$-rotation centres. }\label{fig:fiberings}
\end{figure}
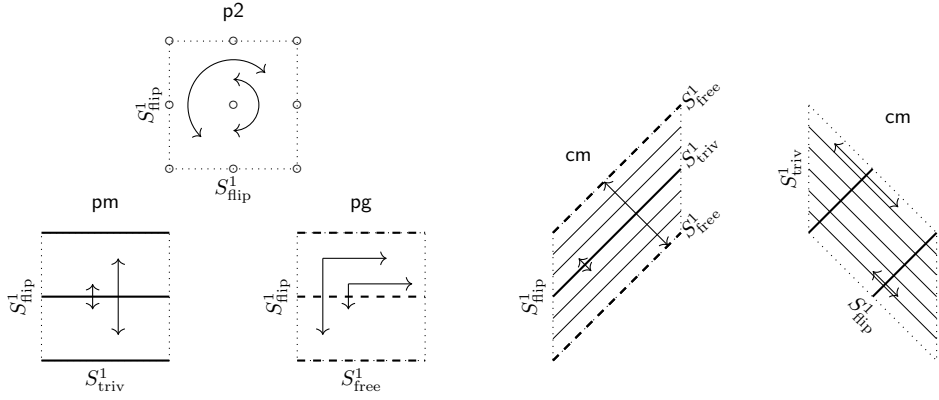

\subsection{Equivariant cohomology of $\pt$, $S^1_{\mathrm{triv}}$ and $S^1_{\mathrm{flip}}$}\label{sec:cohomologycalculations}

Let $\pt$ be the point with the trivial $\Z_2$-action. The equivariant cohomology rings of $\pt$, $S^1_{\mathrm{triv}}$ and $S^1_{\mathrm{flip}}$ were computed as (\cite{Gomi1}, Proposition 2.4, Lemma 2.15, Lemma 2.12 respectively)
\begin{eqnarray}
H^*_{\Z_2}(\pt) \oplus H^*_\pm(\pt)
&\cong& \Z[t^{1/2}]/(2t^{1/2}),\nonumber\\
H^*_{\Z_2}(S^1_{\mathrm{triv}}) \oplus H^*_{\pm}(S^1_{\mathrm{triv}})
&\cong& \Z[t^{1/2}, e]/(2t^{1/2}, e^2),\nonumber\\
H^*_{\Z_2}(S^1_{\mathrm{flip}}) \oplus H^*_{\pm}(S^1_{\mathrm{flip}})
&\cong& \Z[t^{1/2}, \chi]/(2t^{1/2}, \chi^2 - t^{1/2}\chi),\label{Sflipring}
\end{eqnarray}
where
\begin{equation*}
t^{1/2} \in H^1_\pm(\pt) \cong \Z_2,\quad e \in H^1_{\Z_2}(S^1_{\mathrm{triv}}) \cong \Z,\quad \chi \in \tilde{H}^1_\pm(S^1_{\mathrm{flip}}) \cong \Z\label{basiccohomologygenerators}
\end{equation*}
are generators such that the equivariant push-forward along $\tilde{\pi} : S^1_{\mathrm{triv}} \to \pt$ provides $\tilde{\pi}_*e = 1$ and the `Real' push-forward along $\pi: S^1_{\mathrm{flip}} \to \pt$ provides $\pi_*\chi = 1$. In low degrees, these groups are:
$$
\begin{array}{|c|c|c|c|c|c|}
\hline
& n = 0 & n = 1 & n = 2 & n = 3 & n = 4 \\
\hline
\hline
H^n_{\Z_2}(\pt) & \Z & 0 & \Z_2 & 0 & \Z_2 \\
\hline
\mbox{basis} & 1 & & t & & t^2 \\
\hline
\hline
H^n_{\pm}(\pt) & 0 & \Z_2 & 0 & \Z_2 & 0 \\
\hline
\mbox{basis} & & t^{1/2} & & t^{3/2} & \\
\hline
\hline
H^n_{\Z_2}(S^1_{\mathrm{triv}}) & \Z & \Z & \Z_2 & \Z_2 & \Z_2 \\
\hline
\mbox{basis} & 1 & e & t & te & t^2 \\
\hline
\hline
H^n_{\pm}(S^1_{\mathrm{triv}}) & 0 & \Z_2 & \Z_2 & \Z_2 & \Z_2 \\
\hline
\mbox{basis} & & t^{1/2} & t^{1/2}e & t^{3/2} & t^{3/2}e \\
\hline
\hline
H^n_{\Z_2}(S^1_{\mathrm{flip}}) & \Z & 0 & \Z_2^2 & 0 & \Z_2^2 \\
\hline
\mbox{basis} & 1 &  & t, t^{1/2}\chi & 0 & t^2, t^{3/2}\chi \\
\hline
\hline
H^n_{\pm}(S^1_{\mathrm{flip}}) & 0 & \Z_2^2 & 0 & \Z_2^2 & 0 \\
\hline
\mbox{basis} & & t^{1/2}, \chi &  & t^{3/2}, t^{1/2}\chi &  \\
\hline

\end{array}
$$
Note that $te \in H^3_{\Z_2}(S^1_{\mathrm{triv}}) \cong \Z_2$ can be represented by the group $2$-cocycle $\tau_{S^1}$ of Eq.\ \eqref{2cocyclecircle}. 
Also, the generator $\chi$ depends on the choice of the base points on $S^1_{\mathrm{flip}}$. The multiplication by $-1$ is an automorphism of $S^1_{\mathrm{flip}}$ which exchanges the two fixed points. This automorphism acts on the basis by $(-1)^*\chi = \chi + t^{1/2}$ and $(-1)^*t^{1/2} = t^{1/2}$. 

\subsubsection{$\ZZ_2$-equivariant and Real T-duality over a point}\label{sec:Duality_over_pt}
Note also that the generator $t\in H^2_{\ZZ_2}(\pt)$ corresponds to the sign representation of $\ZZ_2$, and the representation ring of $\ZZ_2$ is
$$ R(\ZZ_2)=\ZZ[t]/(t^2-1).$$
The unit circle bundle for this sign representation is just $\Sfree$, so that $c_1^{\ZZ_2}(\Sfree)=t$, and there are no twists on $\Sfree$ since $H^3_{\ZZ_2}(\Sfree)\cong H^3(S^1)=0$. Also, under $\tilde{\pi}:\Striv\rightarrow\pt$, we have $\tilde{\pi}_*(\tau_{S^1})\equiv\tilde{\pi}_*(te)=t$. Obviously $c_1^{\ZZ_2}(\Striv)=0$, so by Definition \ref{defn:Z2dualpair}, we have
\begin{proposition}\label{prop:Z2_dual_over_pt}
The $\ZZ_2$-equivariant T-dual pairs over the point are
$$(\Sfree,0)\leftrightarrow(\Striv,\tau_{S^1})\quad {\rm and}\quad (\Striv,0)\leftrightarrow(\Striv,0) .$$
\end{proposition}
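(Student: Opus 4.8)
The plan is to classify all $\ZZ_2$-equivariant T-dual pairs over $\pt$ directly from the three conditions of Definition \ref{defn:Z2dualpair}, exploiting the fact that there are very few candidate bundles and twists. First I would list the $\ZZ_2$-equivariant circle bundles over the point: since these are classified by $c_1^{\ZZ_2}\in H^2_{\ZZ_2}(\pt)\cong\ZZ_2=\langle t\rangle$, there are exactly two, namely the trivial bundle $\Striv$ with $c_1^{\ZZ_2}=0$ and the bundle $\Sfree$ with $c_1^{\ZZ_2}=t$ recorded above. (Note that $\Sflip$ is not a candidate, as its involution acts antilinearly on the fibre and so defines a `Real' rather than a $\ZZ_2$-equivariant bundle.) Next I would record the admissible twists: $H^3_{\ZZ_2}(\Sfree)\cong H^3(S^1)=0$ forces $h=0$ over $\Sfree$, while $H^3_{\ZZ_2}(\Striv)\cong\ZZ_2=\{0,\tau_{S^1}\}$ gives two twists over $\Striv$. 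This leaves only the three pairs $(\Sfree,0)$, $(\Striv,0)$ and $(\Striv,\tau_{S^1})$ to consider.

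For each such pair $(E,h)$ I would solve the duality conditions in order. The first condition $c_1^{\ZZ_2}(\hat{E})=\pi_*(h)$ fixes the dual bundle up to isomorphism through its equivariant Chern class: since $\pi_*(0)=0$ while $\tilde{\pi}_*(\tau_{S^1})=\tilde{\pi}_*(te)=t$ as recorded above, the pairs $(\Sfree,0)$ and $(\Striv,0)$ dualise to a bundle with $c_1^{\ZZ_2}=0$, namely $\Striv$, whereas $(\Striv,\tau_{S^1})$ dualises to a bundle with $c_1^{\ZZ_2}=t$, namely $\Sfree$. The second condition $c_1^{\ZZ_2}(E)=\hat{\pi}_*(\hat{h})$ then determines the dual twist: when the dual bundle is $\Sfree$ the vanishing $H^3_{\ZZ_2}(\Sfree)=0$ forces $\hat{h}=0$ and the condition reduces to the true equality $c_1^{\ZZ_2}(\Striv)=0$; when the dual bundle is $\Striv$ one solves $\hat{\pi}_*(\hat{h})=c_1^{\ZZ_2}(E)$ inside $\{0,\tau_{S^1}\}$, using that $\hat{\pi}_*$ sends $0\mapsto 0$ and $\tau_{S^1}\mapsto t$. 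This yields precisely the candidate dual pairs $(\Sfree,0)\leftrightarrow(\Striv,\tau_{S^1})$ and $(\Striv,0)\leftrightarrow(\Striv,0)$.

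The only remaining step, and the one mild obstacle, is the third condition $p^*h=\hat{p}^*\hat{h}$ on the fibre product $E\times\hat{E}$. For $(\Striv,0)\leftrightarrow(\Striv,0)$ both twists vanish and there is nothing to check. For $(\Sfree,0)\leftrightarrow(\Striv,\tau_{S^1})$ I would observe that the $\ZZ_2$-action on the relevant fibre product $\Sfree\times\Striv$ is free (being already free on the $\Sfree$ factor), so that Borel equivariant cohomology reduces to ordinary cohomology of the quotient:
\[
H^3_{\ZZ_2}(\Sfree\times\Striv)\cong H^3\big((\Sfree\times\Striv)/\ZZ_2\big)\cong H^3(T^2)=0.
\]
Hence $\hat{p}^*\tau_{S^1}=0=p^*0$ automatically, and the third condition holds vacuously. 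Assembling these observations confirms that the listed pairs are T-dual, and since we have exhausted all equivariant circle bundles and all admissible twists, that they are the \emph{only} $\ZZ_2$-equivariant T-dual pairs over $\pt$.
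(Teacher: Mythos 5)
Your proposal is correct and follows essentially the same route as the paper: enumerate the two equivariant circle bundles via $H^2_{\ZZ_2}(\pt)\cong\ZZ_2$, enumerate the admissible twists via $H^3_{\ZZ_2}(\Sfree)=0$ and $H^3_{\ZZ_2}(\Striv)\cong\ZZ_2$, and match them up using $\tilde{\pi}_*(\tau_{S^1})=t$ and the Chern classes $c_1^{\ZZ_2}(\Striv)=0$, $c_1^{\ZZ_2}(\Sfree)=t$ in Definition \ref{defn:Z2dualpair}. Your explicit verification of the condition $p^*h=\hat{p}^*\hat{h}$ via the vanishing of $H^3_{\ZZ_2}(\Sfree\times\Striv)\cong H^3(T^2)=0$ for the free diagonal action is a correct detail that the paper leaves implicit.
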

Since $H^2_\pm(\pt)=0$ the only possibility for `Real' T-duality over a point is between $\Sflip$ and itself, and there are no twists since $H^3_{\ZZ_2}(\Sflip)=0$.
\begin{proposition}\label{prop:Real_dual_over_pt}
$(\Sflip,0)$ is a `Real' self-dual pair over the point.
\end{proposition}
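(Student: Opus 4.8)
The plan is to reduce the statement entirely to the classification of `Real' circle bundles together with the vanishing of two cohomology groups already tabulated in \S\ref{sec:cohomologycalculations}. First I would recall that a `Real' circle bundle over an involutive space $X$ is classified up to isomorphism by its first `Real' Chern class $c_1^R \in H^2_\pm(X)$, via the classification of \cite{Kahn}. Reading off the table for the point, we have $H^2_\pm(\pt) = 0$, so there is a \emph{unique} isomorphism class of `Real' circle bundle over $\pt$; the text has already exhibited $\Sflip \to \pt$ as a representative. Consequently, whatever the dual bundle $\hat{E}$ turns out to be, the classification forces $\hat{E} \cong \Sflip$.

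Next I would pin down the dual twist. By Definition \ref{defn:Realdualpair}, the dual twist $\hat{h}$ lives in $H^3_{\ZZ_2}(\hat{E}, \ZZ) = H^3_{\ZZ_2}(\Sflip, \ZZ)$, and from the same table this group vanishes. Hence $\hat{h} = 0$ is forced, and the only candidate for a `Real' dual pair over the point is $(\Sflip, 0)$ itself, which is exactly the self-duality we wish to establish.

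Finally I would verify that $(\Sflip, 0)$ genuinely satisfies the three conditions of Definition \ref{defn:Realdualpair} when paired with itself. Setting $E = \hat{E} = \Sflip$ and $h = \hat{h} = 0$, the conditions $c_1^R(\hat{E}) = \pi_*(h)$ and $c_1^R(E) = \hat{\pi}_*(\hat{h})$ both collapse to $0 = 0$: the left-hand sides vanish because $c_1^R(\Sflip) = 0$ (again since $H^2_\pm(\pt) = 0$), and the right-hand sides vanish because they are pushforwards of the zero twist. The remaining condition $p^* h = \hat{p}^* \hat{h}$ is the tautology $0 = 0$ in $H^3_{\ZZ_2}(\Sflip \times \Sflip, \ZZ)$. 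Existence and uniqueness of a dual pair are guaranteed by Theorem 1.1 of \cite{Gomi1}, so once $(\Sflip, 0)$ is exhibited as a dual pair, it is \emph{the} dual pair, and self-duality follows.

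I would not describe any step here as a genuine obstacle: the entire argument collapses once one observes that both relevant cohomology groups, $H^2_\pm(\pt)$ and $H^3_{\ZZ_2}(\Sflip)$, vanish. The one mild subtlety worth flagging in a sentence is that, despite $c_1^R(\Sflip) = 0$, the space $\Sflip$ is a nontrivial involutive space (its involution has fixed points), so ``self-dual'' is a statement about the `Real' pair over the point, not a claim that the bundle is trivial.
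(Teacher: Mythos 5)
Your proposal is correct and follows essentially the same route as the paper, which likewise deduces the result from the two vanishings $H^2_\pm(\pt)=0$ (forcing the unique `Real' circle bundle over the point to be $\Sflip$) and $H^3_{\ZZ_2}(\Sflip)=0$ (forcing the trivial twist). Your explicit check of the three conditions in Definition \ref{defn:Realdualpair} is a harmless elaboration of what the paper leaves implicit.
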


%%%%%%%%%%%%%%%%%%%%%%%%%%%%%%%%%%%%%%%%%%%%%%%%

%%%%%%%%%%%%%%%%%%%%%%%%%%%%%%%%%%%%%%%%%%%%%%%%
%%%%%%%%%%%%%%%%%%%%%%%%%%%%%%%%%%%%%%%%%%%%%%%%
\subsection{`Real' T-dualities over circle base}
\subsubsection{`Real' T-duality over $S^1_{\mathrm{flip}}$}\label{sec:Real_duality_over_S1_flip}
%\label{sec:introduction}

By $H^2_\pm(S^1_{\mathrm{flip}}) = 0$, there is only the trivial `Real' circle bundle $S^1_{\mathrm{flip}} \times S^1_{\mathrm{flip}} \to S^1_{\mathrm{flip}}$, which can be identified with $T^2_{\sf{p2}}$. By the splitting of the Gysin sequence, we find $H^3_{\Z_2}(S^1_{\mathrm{flip}} \times S^1_{\mathrm{flip}}) = 0$.

\begin{proposition}\label{prop:Real_dual_over_S1_flip}
There is only one `Real' self-dual pair over $S^1_{\mathrm{flip}}$:
$$
(S^1_{\mathrm{flip}} \times S^1_{\mathrm{flip}}, 0)
\leftrightarrow
(S^1_{\mathrm{flip}} \times S^1_{\mathrm{flip}}, 0).
$$
\end{proposition}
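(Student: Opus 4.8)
The plan is to classify `Real' T-dual pairs over $X=\Sflip$ directly from Definition \ref{defn:Realdualpair}, showing that both the underlying `Real' circle bundle and its admissible $H^3_{\ZZ_2}$-twist are forced to be trivial. First I would invoke the classification of `Real' circle bundles by $c_1^R\in H^2_\pm(X)$. The computed table in \S\ref{sec:cohomologycalculations} gives $H^2_\pm(\Sflip)=0$, so up to isomorphism there is a unique `Real' circle bundle over $\Sflip$, namely the trivial bundle $E=\Sflip\times\Sflip$ with $c_1^R(E)=0$; this total space is precisely $T^2_{\sf{p2}}$. Consequently both $E$ and any candidate dual $\hat{E}$ must equal this same trivial bundle.

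Next I would determine the group of admissible twists $H^3_{\ZZ_2}(E,\ZZ)$. Since $c_1^R(E)=0$, the cup-product maps in the second `Real' Gysin sequence all vanish, and the sequence breaks into short exact pieces $0\to H^n_{\ZZ_2}(X)\xrightarrow{\pi^*}H^n_{\ZZ_2}(E)\xrightarrow{\pi_*}H^{n-1}_\pm(X)\to 0$. Taking $n=3$ places $H^3_{\ZZ_2}(E)$ in an extension of $H^2_\pm(\Sflip)$ by $H^3_{\ZZ_2}(\Sflip)$, both of which vanish in the table. Hence $H^3_{\ZZ_2}(E)=0$, so the only possible twist on either side is $h=\hat{h}=0$.

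Finally I would verify that the unique candidate $(\Sflip\times\Sflip,0)$ really is a self-dual pair in the sense of Definition \ref{defn:Realdualpair}. With $E=\hat{E}$ trivial and $h=\hat{h}=0$, the three defining conditions read $c_1^R(\hat{E})=\pi_*(0)=0$, $c_1^R(E)=\hat{\pi}_*(0)=0$, and $p^*0=\hat{p}^*0$, all satisfied trivially. Since both the bundle and the twist were uniquely determined, this is the only `Real' T-dual pair over $\Sflip$, which proves the proposition. The only step that requires any care is the Gysin-sequence splitting used to pin down $H^3_{\ZZ_2}(E)=0$: one must confirm that $c_1^R(E)$ genuinely vanishes (it does, as $H^2_\pm(X)=0$) so that the connecting cup products drop out and the tabulated values can be read off directly.
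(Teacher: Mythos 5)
Your proof is correct and follows exactly the paper's (very terse) argument: $H^2_\pm(\Sflip)=0$ forces the unique `Real' circle bundle to be the trivial one $\Sflip\times\Sflip$, and the split Gysin sequence together with $H^3_{\ZZ_2}(\Sflip)=0=H^2_\pm(\Sflip)$ gives $H^3_{\ZZ_2}(\Sflip\times\Sflip)=0$, after which the duality conditions of Definition \ref{defn:Realdualpair} hold trivially. Your version merely spells out the splitting of the Gysin sequence and the final verification in more detail than the paper does.
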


The $K$-theories verifying Prop.\ \ref{prop:Real_dual_over_S1_flip} were computed in \cite{Gomi1} (Proposition 4.1):
\begin{align*}
K^0_{\Z_2}(S^1_{\mathrm{flip}} \times S^1_{\mathrm{flip}})
&\cong (R(\Z_2) \oplus (1 - t))^{\oplus 2}, &
\;
K^1_{\Z_2}(S^1_{\mathrm{flip}} \times S^1_{\mathrm{flip}})
&\cong 0, \\
K^0_{\pm}(S^1_{\mathrm{flip}} \times S^1_{\mathrm{flip}})
&\cong 0, &
\;\;
K^1_{\pm}(S^1_{\mathrm{flip}} \times S^1_{\mathrm{flip}})
&\cong (R(\Z_2) \oplus (1 - t))^{\oplus 2}.
\end{align*}

%%%%%%%%%%%%%%%%%%%%%%%%%%%%%%%%%%%%%%%%%%%%%%%%
%%%%%%%%%%%%%%%%%%%%%%%%%%%%%%%%%%%%%%%%%%%%%%%%

\subsubsection{`Real' T-duality over $S^1_{\mathrm{triv}}$}
\label{sec:Real_duality_over_S1_triv}
The T-dualities over $S^1_{\mathrm{triv}}$ were given in \cite{Gomi1} as an example (\S5.5), but they are somewhat intricate and we recall some of the details here.

%%%%%%%%%%%%%%%%%%%%%%%%%%%%%%%%%%%%%%%%%%%%%%%%
%\subsubsection{`Real' T-dual pairs}

From $H^2_{\pm}(S^1_{\mathrm{triv}}) \cong \Z_2$, there are two inequivalent `Real' line bundles on $S^1_{\mathrm{triv}}$ whose `Real' Chern classes are $0$ and $t^{1/2}e$.
\begin{itemize}
\item
($c_1^R = 0$)
The trivial `Real' line bundle with unit circle bundle $S^1_{\mathrm{triv}} \times S^1_{\mathrm{flip}}$, is identifiable with $T^2_{\textsf{pm}}$.

\item
($c_1^R = t^{1/2}e$)
The non-trivial `Real' line bundle $R\rightarrow\Sflip$ is $S^1_{\mathrm{triv}} \times \C$ with the `Real' structure $(u, z) \mapsto (u, u\bar{z})$, and `Real' Chern class $c_1^R(R) = t^{1/2}e$. We write $\pi_R : S(R) \to S^1_{\mathrm{triv}}$ for its unit circle bundle. As a $\Z_2$-space, $S(R)$ is identified with $T^2_{\textsf{cm}}$.
\end{itemize}

\noindent
{\bf(Case $c_1^R = 0$).} 
The relevant low degree cohomology groups for the (split) Gysin sequence are the middle two columns of:
$$
\begin{array}{c|c|c|c|}
n = 4 & \Z_2 t^{1/2}e & \Z_2 t^2 & \\
\hline
n = 3 & \Z_2 t^{1/2} & \Z_2 te & \shade{\Z_2 \oplus \Z_2} \\
\hline
n = 2 & 0 & \Z_2 t & \shade{\Z_2 \oplus \Z_2} \\
\hline
n = 1 & 0 & \Z e & \shade{\Z} \\
\hline
n = 0 & 0 & \Z & \shade{\Z} \\
\hline
& H^{n-2}_\pm(S^1_{\mathrm{triv}}) & H^n_{\Z_2}(S^1_{\mathrm{triv}}) &
H^n_{\Z_2}(S^1_{\mathrm{triv}} \times S^1_{\mathrm{flip}})
\end{array}
$$
With some extra computations, the generators for the desired shaded column are (suppressing the pullback notation)
$$
\begin{array}{|c|c|c|c|c|}
\hline
& n = 0 & n = 1 & n = 2 & n = 3  \\
\hline
H^n_{\Z_2}(S^1_{\mathrm{triv}} \times S^1_{\mathrm{flip}}) &
\Z & \Z & \Z_2 \oplus \Z_2 & \Z_2 \oplus \Z_2 \\
\hline
\mbox{basis} & 1 & e & t, t^{1/2}\chi & te, t^{1/2}e\chi \\
\hline
\end{array}
$$
Under the projection $\pi : S^1_{\mathrm{triv}} \times S^1_{\mathrm{flip}} \to S^1_{\mathrm{triv}}$, we have $\pi_*\chi = 1$. In $H^3_{\Z_2}(S^1_{\mathrm{triv}} \times S^1_{\mathrm{flip}})$, the basis element $te$ is represented by (the pull-back from $\Striv$ of) the group $2$-cocycle $\tau_{S^1}$, whereas $h_{\mathsf{pm}} \coloneqq t^{1/2}e\chi$ cannot be represented by any group $2$-cocycle, according to the classification of twists \cite{Gomi2}. In the Gysin sequence, 
$$
\begin{CD}
H^3_{\Z_2}(S^1_{\mathrm{triv}}) @>{\pi^*}>>
H^3_{\Z_2}(S^1_{\mathrm{triv}} \times S^1_{\mathrm{flip}}) @>{\pi_*}>>
H^2_{\pm}(S^1_{\mathrm{triv}}). \\
@| @| @| \\
\Z_2 \tau_{S^1} @.
\Z_2 \tau_{S^1} \oplus \Z_2 h_{\mathsf{pm}} @. 
\Z_2 c_1^R(R)
\end{CD}
$$
we have
\begin{equation*}
\pi_*(\tau_{S^1})\cong \pi_*(te) =0,\qquad \pi_*(h_{\mathsf{pm}}) \cong \pi_*(t^{1/2}e\chi) =  t^{1/2}e = c_1^R(R).
\end{equation*}

\noindent
{\bf (Case $c_1^R = t^{1/2}e$).}
For the Gysin sequence, we need
$$
\begin{array}{c|c|c|c|}
n = 4 & \Z_2 t^{1/2}e & \Z_2 t^2 & \\
\hline
n = 3 & \Z_2 t^{1/2} & \Z_2 te & \shade{\Z_2} \\
\hline
n = 2 & 0 & \Z_2 t & \shade{\Z_2} \\
\hline
n = 1 & 0 & \Z e & \shade{\Z} \\
\hline
n = 0 & 0 & \Z & \shade{\Z} \\
\hline
& H^{n-2}_\pm(S^1_{\mathrm{triv}}) & H^n_{\Z_2}(S^1_{\mathrm{triv}}) &
H^n_{\Z_2}(S(R))
\end{array}
$$
It turns out that $H^3_{\Z_2}(S(R))=H^3_{\ZZ_2}(T^2_{\sf{cm}}) \cong \Z_2$, and that its generator $h_{\sf{cm}}$ is not representable by a group $2$-cocycle \cite{Gomi2}. A part of the Gysin sequence is:
$$
\begin{CD}
H^3_{\Z_2}(S^1_{\mathrm{triv}}) @>{\pi_R^*}>>
H^3_{\Z_2}(S(R)) @>{(\pi_R)_*}>>
H^2_{\pm}(S^1_{\mathrm{triv}}), \\
@| @| @| \\
\Z_2 \tau_{S^1} @.
\Z_2 h_{\sf{cm}} @. 
\Z_2 c_1^R(R)
\end{CD}
$$
$$\pi_R^*\tau_{S^1} = 0,\qquad (\pi_R)_*h_{\sf{cm}} = c_1^R(R).$$

\medskip

From the computations above, we get from Definition \ref{defn:Realdualpair} that:

\begin{proposition}\label{prop:Real_dual_over_S1_triv}
The following pairs are `Real' T-dual over $S^1_{\mathrm{triv}}$:
\begin{align*}
(S^1_{\mathrm{triv}} \times S^1_{\mathrm{flip}}, 0)
&\leftrightarrow
(S^1_{\mathrm{triv}} \times S^1_{\mathrm{flip}}, 0), \\
%%%%%
(S^1_{\mathrm{triv}} \times S^1_{\mathrm{flip}}, 
\tau_{S^1})
&\leftrightarrow
(S^1_{\mathrm{triv}} \times S^1_{\mathrm{flip}}, 
\tau_{S^1}), \\
%%%%%
(S^1_{\mathrm{triv}} \times S^1_{\mathrm{flip}}, h_{\mathsf{pm}})
&\leftrightarrow
(S(R), 0), \\
%%%%%
(S^1_{\mathrm{triv}} \times S^1_{\mathrm{flip}}, 
h_{\mathsf{pm}} + \tau_{S^1})
&\leftrightarrow
(S(R), 0), \\
%%%%%
(S(R), h_{\sf{cm}}) 
&\leftrightarrow
(S(R), h_{\sf{cm}}).
\end{align*}
\end{proposition}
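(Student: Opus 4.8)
The plan is to enumerate all valid `Real' T-dual pairs directly from Definition \ref{defn:Realdualpair}, using the Gysin-sequence data assembled just above. First I would record that, since $H^2_\pm(\Striv)\cong\Z_2$, there are exactly two `Real' circle bundles over $\Striv$: the trivial one, with total space $T^2_{\sf{pm}}=\Striv\times\Sflip$ and $c_1^R=0$, and the nontrivial bundle $S(R)=T^2_{\sf{cm}}$ with $c_1^R(R)=t^{1/2}e$. I would then list the admissible twists from the computed groups $H^3_{\Z_2}$: the four classes $0,\tau_{S^1},h_{\mathsf{pm}},h_{\mathsf{pm}}+\tau_{S^1}$ on $T^2_{\sf{pm}}$, and the two classes $0,h_{\sf{cm}}$ on $S(R)$, giving six pairs $(E,h)$ to dualise.

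The core step is to read off the dual bundle from the first condition $c_1^R(\hat E)=\pi_*(h)$, using the pushforwards already established: $\pi_*(\tau_{S^1})=\pi_*(te)=0$, $\pi_*(h_{\mathsf{pm}})=t^{1/2}e=c_1^R(R)$, and $(\pi_R)_*(h_{\sf{cm}})=c_1^R(R)$. Thus twists with vanishing pushforward ($0$ and $\tau_{S^1}$ on $T^2_{\sf{pm}}$, and $0$ on $S(R)$) have trivial dual bundle $T^2_{\sf{pm}}$, while $h_{\mathsf{pm}}$, $h_{\mathsf{pm}}+\tau_{S^1}$ and $h_{\sf{cm}}$ all produce $S(R)$. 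Crucially, because $(\pi_R)_*$ is an isomorphism $H^3_{\Z_2}(S(R))\xrightarrow{\cong}H^2_\pm(\Striv)$, whenever the dual bundle is $S(R)$ the second condition $\hat\pi_*(\hat h)=c_1^R(E)$ determines $\hat h$ uniquely. This at once gives the self-dual pair $(S(R),h_{\sf{cm}})$, and forces $\hat h=0$ for the duals of both $h_{\mathsf{pm}}$ and $h_{\mathsf{pm}}+\tau_{S^1}$, yielding the two pairings with $(S(R),0)$.

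When instead the dual bundle is $T^2_{\sf{pm}}$, the pushforward determines $\hat h$ only modulo $\ker\pi_*=\Z_2\,\tau_{S^1}$, so here I would bring in the third condition $p^*h=\hat p^*\hat h$ on the correspondence space. Since $\tau_{S^1}$ is pulled back from the base, its images under $p^*$ and $\hat p^*$ agree automatically; this identifies the self-dual pair $(T^2_{\sf{pm}},\tau_{S^1})$, and together with the existence and uniqueness of dual pairs (Theorem 1.1 of \cite{Gomi1}) leaves $(T^2_{\sf{pm}},0)$ self-dual by elimination.

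The step I expect to be the main obstacle is the consistent treatment of the pullback condition for the non-cocycle twists $h_{\mathsf{pm}}$ and $h_{\sf{cm}}$, which, unlike $\tau_{S^1}$, are neither pulled back from the base nor representable by group $2$-cocycles. Concretely one must control $p^*h_{\mathsf{pm}}$ and $p^*h_{\sf{cm}}$ on the relevant (three-torus-like) involutive fibre products; I would obtain these either from the guaranteed existence of a dual pair or by a direct Gysin computation, using the key fact $\pi_R^*\tau_{S^1}=0$, which forces $\tau_{S^1}$ to restrict trivially to the correspondence space of the $T^2_{\sf{pm}}$--$S(R)$ duality. This is precisely what makes $h_{\mathsf{pm}}$ and $h_{\mathsf{pm}}+\tau_{S^1}$ share the single dual $(S(R),0)$, so the delicate point is to state correctly the resulting multiplicity, namely that the dual of $(S(R),0)$ is realised by two distinct twists on $T^2_{\sf{pm}}$ differing by $\tau_{S^1}$.
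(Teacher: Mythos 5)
Your proposal is correct and follows essentially the same route as the paper: the paper's justification for this proposition is precisely the preceding Gysin-sequence computations of $H^3_{\Z_2}$ of the two total spaces and of the push-forwards $\pi_*(\tau_{S^1})=0$, $\pi_*(h_{\mathsf{pm}})=c_1^R(R)$, $(\pi_R)_*(h_{\sf{cm}})=c_1^R(R)$, followed by a direct appeal to Definition \ref{defn:Realdualpair}. Your extra care with the third condition $p^*h=\hat p^*\hat h$ (using $\pi_R^*\tau_{S^1}=0$ to explain why $h_{\mathsf{pm}}$ and $h_{\mathsf{pm}}+\tau_{S^1}$ share the dual $(S(R),0)$) is a correct elaboration of what the paper leaves implicit and records separately in Remark \ref{rem:automorphism}.
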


\begin{remark}\label{rem:automorphism}
Notice that $S^1_{\mathrm{triv}} \times S^1_{\mathrm{flip}}$ has the $\Z_2$-equivariant automorphism given by multiplying $-1$ with $S^1_{\mathrm{flip}}$. From $(-1)^*\chi = \chi + t^{1/2}$, we have
$$
(-1)^*h_{\mathsf{pm}} = (-1)^*(t^{1/2}e\chi)
= t^{1/2}e(\chi + t^{1/2})
= t^{1/2}e\chi + te
= h_{\mathsf{pm}} + \tau_{S^1}.
$$
\end{remark}

%%%%%%%%%%%%%%%%%%%%%%%%%%%%%%%%%%%%%%%%%%%%%%%%

%\subsubsection{$K$-theory of T-dual pairs}

We summarise the $K$-theories of the T-dual pairs over $S^1_{\mathrm{triv}}$, which verify Proposition \ref{prop:Real_dual_over_S1_triv}. (In the following table, two errors/typo in \cite{Gomi1} are replaced by the correct results, which are shaded.)
$$
\begin{array}{|c|c|c|c|}
\hline
h & w & 
K^{(h,w) + 0}_{\Z_2}(S^1_{\mathrm{triv}} \times S^1_{\mathrm{flip}}) & 
K^{(h,w) + 1}_{\Z_2}(S^1_{\mathrm{triv}} \times S^1_{\mathrm{flip}}) \\
%%%%%
\hline
0 & 0 &
R(\Z_2) \oplus (1 - t) & R(\Z_2) \oplus (1 - t) \\
\hline
0 & \mathrm{id} &
R(\Z_2) \oplus (1 - t) & R(\Z_2) \oplus (1 - t) \\
%%%%%
\hline
\tau_{S^1} & 0 &
(1 + t) & \shade{(1 + t)} \oplus \Z/2 \\
\hline
\tau_{S^1} & \mathrm{id} &
\shade{(1 + t)} \oplus \Z/2 & (1 + t) \\
%%%%%
\hline
h_{\mathsf{pm}}, h_{\mathsf{pm}} + \tau_{S^1} & 0 &
(1 + t) \oplus (1 - t) & R(\Z_2) \\
\hline
h_{\mathsf{pm}}, h_{\mathsf{pm}} + \tau_{S^1} & 
\mathrm{id} &
(1 + t) \oplus (1 - t) & R(\Z_2) \\
%%%%%
\hline
\end{array}
$$
$$
\begin{array}{|c|c|c|c|}
\hline
h & w & 
K^{(h,w) + 0}_{\Z_2}(S(R)) & 
K^{(h,w) + 1}_{\Z_2}(S(R)) \\
%%%%%
\hline
0 & 0 & R(\Z_2) & (1 + t) \oplus (1 - t) \\
\hline
0 & \mathrm{id} & R(\Z_2) & (1 + t) \oplus (1 - t) \\
%%%%%
\hline
h_{\sf{cm}} & 0 & (1 + t) & (1 + t) \\
\hline
h_{\sf{cm}} & \mathrm{id} & (1 + t) & (1 + t) \\
\hline
\end{array}
$$

%%%%%%%%%%%%%%%%%%%%%%%%%%%%%%%%%%%%%%%%%%%%%%%%
%%%%%%%%%%%%%%%%%%%%%%%%%%%%%%%%%%%%%%%%%%%%%%%%

\subsection{$\Z_2$-equivariant T-duality over $S^1_{\mathrm{flip}}$}
\label{sec:Z2_equivariant_duality_over_S1_flip}

%%%%%%%%%%%%%%%%%%%%%%%%%%%%%%%%%%%%%%%%%%%%%%%%
%\subsubsection{$\Z_2$-equivariant T-dual pairs}

By $H^2_{\Z_2}(S^1_{\mathrm{flip}}) \cong \Z_2 \oplus \Z_2$, there are four inequivalent $\Z_2$-equivariant line bundles on $S^1_{\mathrm{flip}}$ whose $\Z_2$-equivariant Chern classes (Euler classes) $c_1^{\ZZ_2}$ are the four elements $0$, $t$, $t^{1/2}\chi$ and $t + t^{1/2}\chi$. (Here the basis $\chi \in \tilde{H}^2_\pm(\tilde{S}^1)$ is chosen with respect to the base point $1 \in S^1_{\mathrm{flip}} \subset \C$.)
\begin{itemize}
\item
($c_1^{\Z_2} = 0$)
The $\Z_2$-equivariant line bundle associated to the trivial representation. Its unit circle bundle is $S^1_{\mathrm{triv}} \times S^1_{\mathrm{flip}}$, which is identified with the torus $T^2_{\mathsf{pm}}$.

\item
($c_1^{\Z_2} = t$)
The $\Z_2$-equivariant line bundle associated to the sign representation. Its unit circle bundle is $S^1_{\mathrm{flip}} \times S^1_{\mathrm{free}}$, which is identified with the torus $T^2_{\mathsf{pg}}$ with free involution and quotient manifold the Klein bottle ${\bf K}$.

\item
($c_1^{\Z_2} = t^{1/2}\chi$)
The product bundle $L = S^1_{\mathrm{flip}} \times \C$ with the $\Z_2$-equivariant structure $(u, z) \mapsto (\bar{u}, uz)$. We write $\pi_L : S(L) \to S^1_{\mathrm{flip}}$ for its unit circle bundle. The total space $S(L)$ can be identified with the torus $T^2_{\mathsf{cm}}$.

\item
($c_1^{\Z_2} = t + t^{1/2}\chi$)
The product bundle $L' = S^1_{\mathrm{flip}} \times \C$ with the $\Z_2$-equivariant structure $(u, z) \mapsto (\bar{u}, -uz)$. We write $\pi_{L'} : S(L') \to S^1_{\mathrm{flip}}$ for its unit circle bundle. The total space $S(L')$ can also be identified with $T^2_{\mathsf{cm}}$.
\end{itemize}
\begin{remark}\label{rem:automorphismL}
Notice that $L$ and $L'$ are non-isomorphic equivariant line bundles on $S^1_{\mathrm{flip}}$ despite both being $T^2_{\mathsf{cm}}$ as $\ZZ_2$-spaces. Nevertheless, we can take the bundle map $L \to L'$, ($(u, z) \mapsto (-u, z)$) covering the base automorphism $-1 : S^1_{\mathrm{flip}} \to S^1_{\mathrm{flip}}$, then $(-1)^*t^{1/2}\chi = t + t^{1/2}\chi$ exchanges the equivariant Chern classes.
\end{remark}

\medskip
In order to find T-dual partners for these $\ZZ_2$-equivariant circle bundles, we need to compute the push-forward of each of their possible twists to the base $\Sflip$ under their respective bundle projection maps $\tilde{\pi}$.

\noindent
{\bf (Case $c_1^{\Z_2} = t$.)}
Since the quotient Klein bottle is two-dimensional, 
$${H^3_{\Z_2}(S^1_{\mathrm{flip}} \times S^1_{\mathrm{free}}) \cong H^3({\bf K})= 0}.$$ 

\noindent
{\bf (Case $c_1^{\Z_2} = 0$.)}
Recall from \S\ref{sec:Real_duality_over_S1_triv} the basis for $H^n_{\ZZ_2}(S^1_{\mathrm{triv}} \times S^1_{\mathrm{flip}})$ in low degrees:
$$
\begin{array}{|c|c|c|c|c|}
\hline
& n = 0 & n = 1 & n = 2 & n = 3  \\
\hline
H^n_{\Z_2}(S^1_{\mathrm{triv}} \times S^1_{\mathrm{flip}}) &
\Z & \Z & \Z_2 \oplus \Z_2 & \Z_2 \oplus \Z_2 \\
\hline
\mbox{basis} & 1 & e & t, t^{1/2}\chi & te, t^{1/2}e\chi \\
\hline
\end{array}
$$
In the Gysin sequence for $\tilde{\pi} : S^1_{\mathrm{triv}} \times S^1_{\mathrm{flip}} \to S^1_{\mathrm{flip}}$,
$$
\begin{CD}
H^3_{\Z_2}(S^1_{\mathrm{flip}}) @>\tilde{\pi}^*>>
H^3_{\Z_2}(S^1_{\mathrm{triv}} \times S^1_{\mathrm{flip}}) @>\tilde{\pi}_*>>
H^2_{\Z_2}(S^1_{\mathrm{flip}}), \\
@| @| @| \\
0 @. \Z_2 te \oplus \Z_2 t^{1/2}e\chi @. \Z_2 t \oplus \Z_2 t^{1/2}\chi
\end{CD}
$$
the push-forward $\tilde{\pi}_*$ is bijective, and we have
\begin{align*}
\tilde{\pi}_*(\tau_{S^1}) 
&\equiv \tilde{\pi}_*(te) = t 
= c_1^{\Z_2}(S^1_{\mathrm{triv}} \times S^1_{\mathrm{free}}), \\
\tilde{\pi}_*(h_{\mathsf{pm}}) 
&\equiv \tilde{\pi}_*(t^{1/2}e\chi) = t^{1/2}\chi
= c_1^{\Z_2}(L), \\
\tilde{\pi}_*(\tau_{S^1} + h_{\mathsf{pm}}) 
&\equiv \tilde{\pi}_*(te + t^{1/2}e\chi) = t + t^{1/2}\chi
= c_1^{\Z_2}(L').
\end{align*}

\noindent
{\bf (Cases $c_1^{\Z_2} = t^{1/2}\chi$ and $t + t^{1/2}\chi$.)}
We have $\Z_2$-equivariant homeomorphism $S(L) \cong S(R) \cong T^2_{\textsf{cm}}$, and $H^n_{\ZZ_2}(S(R))$ was already computed in \S\ref{sec:Real_duality_over_S1_triv}. The relevant groups in the Gysin sequence for $\tilde{\pi}_L : S(L) \to S^1_{\mathrm{flip}}$ are:
$$
\begin{array}{c|c|c|c}
\hline
n = 4 & \Z_2 t \oplus \Z_2 t^{1/2}\chi & \Z_2 t^2 \oplus \Z_2 t^{3/2}\chi & \\
\hline
n = 3 & 0 & 0 & \Z_2 \\
\hline
n = 2 & \Z & \Z_2 t \oplus \Z_2 t^{1/2}\chi & \Z_2 \\
\hline
n = 1 & 0 & 0 & \Z \\
\hline
n = 0 & 0 & \Z & \Z \\
\hline
& H^{n-2}_{\Z_2}(S^1_{\mathrm{flip}}) & H^n_{\Z_2}(S^1_{\mathrm{flip}}) &
H^n_{\Z_2}(S(L))
\end{array}
$$
Putting $c_1^{\Z_2}(L) = t^{1/2}\chi$ into the Gysin sequence and Eq.\ \eqref{Sflipring}, we have for the generator $h_{\sf{cm}}\in H^3_{\Z_2}(S(L))$,
$$
(\tilde{\pi}_L)_*(h_{\mathsf{cm}}) = t + t^{1/2}\chi = c_1^{\Z_2}(L'),
$$
The computation for $S(L') \cong S(L)$ is similar, except that in the Gysin sequence for $\tilde{\pi}_{L'} : S(L') \to S^1_{\mathrm{flip}}$, we have $c_1^{\Z_2}(L') = t + t^{1/2}\chi$. Therefore we find
$$
(\tilde{\pi}_{L'})_*(h_{\mathsf{cm}}) = t^{1/2}\chi = c_1^{\Z_2}(L).
$$

\medskip

Summarizing the calculations above, we have:

\begin{proposition} \label{prop:Z2_dual_over_S1_flip}
The following pairs are $\Z_2$-equivariant T-dual over $S^1_{\mathrm{flip}}$:
\begin{align*}
(S^1_{\mathrm{triv}} \times S^1_{\mathrm{flip}}, 0)
&\leftrightarrow
(S^1_{\mathrm{triv}} \times S^1_{\mathrm{flip}}, 0), \\
%%%%%
(S^1_{\mathrm{triv}} \times S^1_{\mathrm{flip}}, 
\tau_{S^1})
&\leftrightarrow
(S^1_{\mathrm{flip}} \times S^1_{\mathrm{free}}, 0), \\
%%%%%
(S^1_{\mathrm{triv}} \times S^1_{\mathrm{flip}}, h_{\mathsf{pm}})
&\leftrightarrow
(S(L), 0), \\
%%%%%
(S^1_{\mathrm{triv}} \times S^1_{\mathrm{flip}}, 
h_{\mathsf{pm}} + \tau_{S^1})
&\leftrightarrow
(S(L'), 0), \\
%%%%%
(S(L), h_{\sf{cm}})
&\leftrightarrow
(S(L'), h_{\sf{cm}}).
\end{align*}
\end{proposition}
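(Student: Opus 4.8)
The plan is to verify, entry by entry, the three defining conditions of a $\ZZ_2$-equivariant T-dual pair in Definition \ref{defn:Z2dualpair}, since essentially all the analytic input is already assembled: the four admissible bundles over $\Sflip$ are classified by their Euler classes $c_1^{\ZZ_2}\in H^2_{\ZZ_2}(\Sflip)\cong\ZZ_2\oplus\ZZ_2$, and for each total space the group $H^3_{\ZZ_2}$ of admissible twists, together with the push-forwards of those twists to the base, have been computed above. So the first task is bookkeeping: match each $\pi_*h$ against the Euler class of the claimed dual bundle.

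Concretely, for the trivial bundle $\Striv\times\Sflip$ with $c_1^{\ZZ_2}=0$, I would read off $\tilde{\pi}_*(\tau_{S^1})=t$, $\tilde{\pi}_*(h_{\mathsf{pm}})=t^{1/2}\chi$, and $\tilde{\pi}_*(\tau_{S^1}+h_{\mathsf{pm}})=t+t^{1/2}\chi$ from the computations above; these are respectively the Euler classes of $\Sflip\times\Sfree$, $S(L)$, and $S(L')$, which verifies the first condition $c_1^{\ZZ_2}(\hat{E})=\pi_*h$ for the second through fourth pairs. The second condition $c_1^{\ZZ_2}(E)=\hat{\pi}_*\hat{h}$ is immediate in these cases, since $c_1^{\ZZ_2}(\Striv\times\Sflip)=0$ and the dual twists are $0$ (indeed $H^3_{\ZZ_2}(\Sflip\times\Sfree)\cong H^3(\mathbf{K})=0$ forces this on the Klein-bottle side). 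The first pair is trivial as every class in sight vanishes, and the fifth pair is handled by the two computations $(\tilde{\pi}_L)_*(h_{\mathsf{cm}})=t+t^{1/2}\chi=c_1^{\ZZ_2}(L')$ and $(\tilde{\pi}_{L'})_*(h_{\mathsf{cm}})=t^{1/2}\chi=c_1^{\ZZ_2}(L)$, which give both Euler-class conditions at once.

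The genuinely delicate point --- the step I expect to be the main obstacle --- is the third condition $p^*h=\hat{p}^*\hat{h}$ on the three-dimensional correspondence space $E\times_{\Sflip}\hat{E}$, which a direct attack would require computing $H^3_{\ZZ_2}$ of a $3$-torus and the two pullback maps. I would avoid this by invoking existence and uniqueness of $\ZZ_2$-equivariant T-dual pairs (Theorem \ref{thm:ZtwoTdual} and the accompanying generalisation of the `Real' theory of \cite{Gomi1}): existence already furnishes, for each $(E,h)$, a genuine dual $(\hat{E},\hat{h}_\ast)$ satisfying all three conditions. The first condition then pins $\hat{E}$ down uniquely, since $\ZZ_2$-equivariant circle bundles over $\Sflip$ are classified by their Euler class. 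To identify the dual twist, I would use the Gysin sequence together with the vanishing $H^3_{\ZZ_2}(\Sflip)=0$ from the cohomology table: exactness forces $\hat{\pi}_*:H^3_{\ZZ_2}(\hat{E})\to H^2_{\ZZ_2}(\Sflip)$ to be \emph{injective}, so the second condition $\hat{\pi}_*\hat{h}=c_1^{\ZZ_2}(E)$ determines $\hat{h}$ uniquely inside $H^3_{\ZZ_2}(\hat{E})$. Hence the listed twist, which I have verified satisfies this push-forward identity, must coincide with $\hat{h}_\ast$, and the listed pair therefore inherits the compatibility condition from the existence theorem. Running this argument through all five entries completes the proof.
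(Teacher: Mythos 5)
Your proposal is correct and follows essentially the same route as the paper: the proposition is obtained there by exactly the push-forward bookkeeping you describe, matching $\tilde{\pi}_*$ of each admissible twist (computed via the Gysin sequences in \S\ref{sec:Z2_equivariant_duality_over_S1_flip}) against the equivariant Euler classes $0$, $t$, $t^{1/2}\chi$, $t+t^{1/2}\chi$ of the four bundles over $\Sflip$. The one place where you go beyond the paper --- deducing the compatibility condition $p^*h=\hat{p}^*\hat{h}$ from existence and uniqueness of dual pairs together with the injectivity of $\hat{\pi}_*$ on $H^3_{\ZZ_2}(\hat{E})$ forced by $H^3_{\ZZ_2}(\Sflip)=0$ --- is a sound and welcome addition, since the paper leaves that third condition implicit.
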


%%%%%%%%%%%%%%%%%%%%%%%%%%%%%%%%%%%%%%%%%%%%%%%%
%\subsection{T-dual pairs}

In Proposition \ref{prop:Z2_dual_over_S1_flip}, all the pairs except the second one are also T-dual in the `Real' sense by Proposition \ref{prop:Real_dual_over_S1_flip}, and all their $K$-theories had been presented at the end of \S\ref{sec:Real_duality_over_S1_triv} except for $S^1_{\mathrm{flip}} \times S^1_{\mathrm{free}}$ which is given at the end of \S\ref{sec:duality_over_S1_free}. These $K$-theory computations verify the dualities in Proposition \ref{prop:Z2_dual_over_S1_flip}.

\subsection{$\Z_2$-equivariant and `Real' T-duality over $S^1_{\mathrm{free}}$}\label{sec:duality_over_S1_free}
%\label{sec:introduction}

From $H^2_{\ZZ_2}(\Sfree)\cong H^2(S^1,\ZZ)=0$, $S^1_{\mathrm{free}}$ admits only the trivial $\Z_2$-equivariant line bundle, whose unit circle bundle is $S^1_{\mathrm{triv}} \times S^1_{\mathrm{free}}$ and has no nontrivial twists.
 Also, $H^2_\pm(\Sfree)=0$ so $S^1_{\mathrm{free}}$ admits only the trivial `Real' line bundle, with unit circle bundle $S^1_{\mathrm{flip}} \times S^1_{\mathrm{free}}$ (also $T^2_{\sf{pg}}$ encountered earlier in \S\ref{sec:Z2_equivariant_duality_over_S1_flip}) having no $H^3$-twists. From these computations, we get:
\begin{proposition}\label{prop:duality_over_S1_free}
The following is the $\Z_2$-equivariant T-dual pair over $S^1_{\mathrm{free}}$:
$$
(S^1_{\mathrm{triv}} \times S^1_{\mathrm{free}}, 0) 
\leftrightarrow
(S^1_{\mathrm{triv}} \times S^1_{\mathrm{free}}, 0).
$$
The following is the `Real' T-dual pair over $S^1_{\mathrm{free}}$:
$$
(S^1_{\mathrm{flip}} \times S^1_{\mathrm{free}}, 0) 
\leftrightarrow
(S^1_{\mathrm{flip}} \times S^1_{\mathrm{free}}, 0).
$$
\end{proposition}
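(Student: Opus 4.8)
The plan is to read off both dual pairs directly from Definitions \ref{defn:Z2dualpair} and \ref{defn:Realdualpair}, after first classifying the available circle bundles over $\Sfree$ together with their $H^3$-twists. A $\ZZ_2$-equivariant circle bundle over $\Sfree$ is classified by its equivariant Chern class in $H^2_{\ZZ_2}(\Sfree)$, and a `Real' one by its class in $H^2_\pm(\Sfree)$; since both groups vanish (as recorded immediately above), in each case only the trivial bundle occurs. I would then identify the two total spaces: the trivial $\ZZ_2$-equivariant line bundle has unit circle bundle $\Striv\times\Sfree$, its fibre circle carrying the trivial action, whereas the trivial `Real' line bundle has unit circle bundle $\Sflip\times\Sfree$, its fibre carrying the flip action (this is $T^2_{\sf{pg}}$).

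Next I would rule out nontrivial twists by computing $H^3_{\ZZ_2}$ of these total spaces. In both cases the diagonal $\ZZ_2$-action is free, because the $\Sfree$ factor already acts freely, so the Borel construction is homotopy equivalent to the ordinary quotient and $H^3_{\ZZ_2}$ reduces to the degree-$3$ cohomology of a closed surface. For $\Striv\times\Sfree$ the quotient is the $2$-torus, giving $H^3_{\ZZ_2}(\Striv\times\Sfree)\cong H^3(T^2)=0$; for $\Sflip\times\Sfree$ the quotient is the Klein bottle $\mathbf{K}$, giving $H^3_{\ZZ_2}(\Sflip\times\Sfree)\cong H^3(\mathbf{K})=0$. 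Hence $h=0$ is the only admissible twist in either setting. The duality conditions are then verified instantly: in the $\ZZ_2$-equivariant case, $c_1^{\ZZ_2}(\hat{E})=\pi_*(0)=0$ forces $\hat{E}=\Striv\times\Sfree$ and $\hat{h}=0$, while $p^*0=\hat{p}^*0$ holds trivially, so the pair is self-dual; the `Real' case is identical with $c_1^R$ and $\Sflip\times\Sfree$ in place of $c_1^{\ZZ_2}$ and $\Striv\times\Sfree$.

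I do not expect a genuine obstacle here: the whole statement is a bookkeeping consequence of the four vanishings $H^2_{\ZZ_2}(\Sfree)=H^2_\pm(\Sfree)=0$ and $H^3_{\ZZ_2}(\Striv\times\Sfree)=H^3_{\ZZ_2}(\Sflip\times\Sfree)=0$, all of which stem from freeness of the action. The only slightly delicate point is $H^2_\pm(\Sfree)=0$, since $\pm$-cohomology carries the sign local system $\ZZ(1)$; but freeness again reduces this to ordinary cohomology of the quotient circle with an induced coefficient system, which vanishes in degree $2$ on dimensional grounds.
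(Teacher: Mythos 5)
Your proposal is correct and follows essentially the same route as the paper: the paper likewise derives the proposition from the vanishings $H^2_{\ZZ_2}(\Sfree)=H^2_\pm(\Sfree)=0$ (forcing the bundles to be the trivial ones $\Striv\times\Sfree$ and $\Sflip\times\Sfree$) together with the absence of $H^3$-twists on the total spaces. The only difference is that you spell out the freeness argument reducing $H^3_{\ZZ_2}$ of the total spaces to $H^3(T^2)$ and $H^3(\mathbf{K})$, which the paper states more tersely (and uses explicitly for $\Sflip\times\Sfree$ in \S\ref{sec:Z2_equivariant_duality_over_S1_flip}).
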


To verify Proposition \ref{prop:duality_over_S1_free}, we compute the relevant $K$-theories. First of all, we regard $S^1_{\mathrm{free}}$ as the unit circle bundle of the $\Z_2$-equivariant line bundle associated to the sign representation. Thus, its Euler class in $K$-theory is $1 - t \in K^0_{\Z_2}(\pt)$. Using the Gysin exact sequence in $K$-theory, we get:
\begin{align*}
K^0_{\Z_2}(S^1_{\mathrm{free}})
&\cong (1 + t), &
K^1_{\Z_2}(S^1_{\mathrm{free}})
&\cong (1 + t), \\
K^0_{\pm}(S^1_{\mathrm{free}})
&\cong 0, &
K^1_{\pm}(S^1_{\mathrm{free}})
&\cong \Z/2.
\end{align*}
Then, regarding $S^1_{\mathrm{flip}} \times S^1_{\mathrm{free}}$ as the trivial `Real' circle bundle on $S^1_{\mathrm{free}}$, we apply the splitting of the Gysin sequence to have
\begin{align*}
K^0_{\Z_2}(S^1_{\mathrm{flip}} \times S^1_{\mathrm{free}})
&\cong (1 + t) \oplus \Z/2, &
K^1_{\Z_2}(S^1_{\mathrm{flip}} \times S^1_{\mathrm{free}})
&\cong (1 + t), \\
K^0_{\pm}(S^1_{\mathrm{flip}} \times S^1_{\mathrm{free}})
&\cong (1 + t), &
K^1_{\pm}(S^1_{\mathrm{flip}} \times S^1_{\mathrm{free}})
&\cong (1 + t) \oplus \Z/2.
\end{align*}
Similarly, by the splitting of the Gysin sequence for $S^1_{\mathrm{triv}} \times S^1_{\mathrm{free}} \to S^1_{\mathrm{free}}$, we get
\begin{align*}
K^0_{\Z_2}(S^1_{\mathrm{triv}} \times S^1_{\mathrm{free}})
&\cong (1 + t) \oplus (1 + t), &
K^1_{\Z_2}(S^1_{\mathrm{triv}} \times S^1_{\mathrm{free}})
&\cong (1 + t) \oplus (1 + t), \\
K^0_{\pm}(S^1_{\mathrm{triv}} \times S^1_{\mathrm{free}})
&\cong \Z/2, &
K^1_{\pm}(S^1_{\mathrm{triv}} \times S^1_{\mathrm{free}})
&\cong \Z/2.
\end{align*}

\begin{remark}
There are also $\ZZ_2$-equivariant T-dualities for circle bundles over $\Striv$, but we omit these as we do not use them in our examples.
\end{remark}

%%%%%%%%%%%%%%%%%%%%%%%%%%%%%%%%%%%%%%%%%%%%%%%%
%%%%%%%%%%%%%%%%%%%%%%%%%%%%%%%%%%%%%%%%%%%%%%%%

%%%%%%%%%%%%%%%%%%%%%%%%%%%%%%%%%%%%%%%%%%%%%%%%
%%%%%%%%%%%%%%%%%%%%%%%%%%%%%%%%%%%%%%%%%%%%%%%%

\section{2D crystallographic T-dualities}\label{sec:2Ddualities}
In this section, we apply crystallographic T-duality, ${\rm T}_{\mathscr{G}}$ of Eq.\ \eqref{crystalTdual}, to the 17 wallpaper groups. When the point group $G$ is $\ZZ_2$, $T^2_{\mathscr{G}}$ fibres as a `Real' or $\ZZ_2$-equivariant circle bundle over another circle (or even both), so that we can T-dualise only the fibre circle while keeping the base circle fixed. This ``partial Fourier transform'' produces an intermediate 2-torus which we denote by $\acute{T}^2$; if there is a second way to fibre and T-dualise, we denote the resulting space by $\grave{T}^2$. Subsequently, we may be able to re-fibre $\acute{T}^2$ (or $\grave{T}^2$) such that what was considered the base is now a fibre. A second ``partial Fourier transform'' produces $\hat{T}^2$ on the right-hand-side of Eq.\ \eqref{crystalTdual}, showing that ${\rm T}_{\mathscr{G}}$ factorises into partial T-dualities. In these cases, the factorisation is essentially a combination of the circle bundle T-dualities in Propositions \ref{prop:Real_dual_over_S1_flip}, \ref{prop:Real_dual_over_S1_triv}, \ref{prop:Z2_dual_over_S1_flip}. The $K$-theory groups appearing in 2D crystallographic T-dualities are listed in Table \ref{table:2Ddualitytable}.

\subsection{Trivial point group: \sf{p1}}
Despite ${\rm T}_{\sf{p1}}: K^\bullet(T^2_{\sf{p1}})\rightarrow K^\bullet(\hat{T}^2_{\sf{p1}})$, this is \emph{not} simply the identity map. First, the spaces $T^2_{\sf{p1}}$ and $\hat{T}^2_{\sf{p1}}$ are not naturally identified, and second, the rank and Hopf generators for their $K^0$ theory are actually exchanged under ${\rm T}_{\sf{p1}}$ \cite{Hori, MT1}.

Pick a principal fibration $T^2_{\sf{p1}}=S^1_x\times S^1_y\rightarrow S^1_y$, then the circle bundle T-dual is $\acute{T}^2_{\sf{p1}}=\hat{S}^1_x\times S^1_y\rightarrow S^1_y$; fibring over $S^1_y$ instead will yield $\grave{T}^2_{\sf{p1}}=\hat{S}^1_y\times S^1_x\rightarrow S^1_x$. Refibring $\acute{T}^2_{\sf{p1}}$ to exchange base and fibre allows a second circle bundle T-duality transformation to arrive at $\hat{T}^2_{\sf{p1}}$; similarly for $\grave{T}^2_{\sf{p1}}$. In summary, we have a web of T-duality isomorphisms
$$
\xymatrix@R.8pc@C.8pc{
& K^{\bullet-1}(\acute{T}^2_{\sf{p1}}) \ar[dr]^{{\rm T}_y}&  \\
 K^\bullet(T^2_{\sf{p1}})\ar[ur]^{{\rm T}_x}\ar[dr]_{{\rm T}_y} \ar[rr]^{{\rm T}_{\sf{p1}}}&    & K^\bullet(\hat{T}^2_{\sf{p1}}) \\
& K^{\bullet-1}(\grave{T}^2_{\sf{p1}})\ar[ur]_{{\rm T}_x} & 
}
$$

\subsection{Order-2 point group}
\subsubsection{$G=\ZZ_2$: $\sf{p2}$}
The wallpaper group $\sf{p2}$ has point group $\ZZ_2$ comprising $\pi$ rotations about an origin in Euclidean space. $T^2_{\sf{p2}}=\Sflip\times\Sflip$ is a `Real' circle bundle over $\Sflip$ (in two different ways, see Fig.\ \ref{fig:fiberings}) whose T-dual is $\acute{T}^2_{\sf{p2}}=\hSflip\times\Sflip\rightarrow \Sflip$ (or $\grave{T}^2_{\sf{p2}}=\Sflip\times\hSflip\rightarrow\Sflip$) by Proposition \ref{prop:Real_dual_over_S1_flip}. Exchanging base and fibre for $\acute{T}^2_{\sf{p2}}$ (or $\grave{T}^2_{\sf{p2}}$) and T-dualising again gives $\hat{T}^2_{\sf{p2}}=\hSflip\times\hSflip$; in summary,
$$
\xymatrix@R.8pc@C.8pc{
& K^{\bullet-1}_\pm(\acute{T}^2_{\sf{p2}}) \ar[dr]^{\TR}&  \\
 K^\bullet_{\ZZ_2}(T^2_{\sf{p2}})\ar[ur]^{\TR}\ar[dr]_{\TR} \ar[rr]^{{\rm T}_{\rm p2}}&    & K^\bullet_{\ZZ_2}(\hat{T}^2_{\sf{p2}}) \\
& K^{\bullet-1}_\pm(\grave{T}^2_{\sf{p2}})\ar[ur]_{\TR} & 
}
$$
We may also view the vertical isomorphism $K^{\bullet-1}_\pm(\acute{T}^2_{\sf{p2}})\cong K^{\bullet-1}_\pm(\grave{T}^2_{\sf{p2}})$ as crystallographic T-duality for $\sf{p2}$ in the presence of $c$-twist, cf.\ Remark \ref{rem:extratwist}.

\subsubsection{$G=D_1\cong\ZZ_2$: $\sf{pm},\sf{pg},\sf{cm}$}\label{sec:pmpgcm}
The three wallpaper groups $\sf{pm},\sf{pg},\sf{cm}$ have point group $D_1\cong\ZZ_2$ acting on $T^2$ by an orientation-reversing involution. For $\sf{pm}$ and $\sf{pg}$, the linear actions $\alpha, \hat{\alpha}$ on $T^2, \hat{T^2}$ both correspond to the ${\rm GL}(2,\ZZ)$ subgroup generated by $\begin{pmatrix} 1 & 0 \\ 0 & -1\end{pmatrix}$. Thus the Brillouin torus $\hat{T}^2$ in these two cases is identified with $\hat{T}^2_{\sf{pm}}$. The difference is that $T^2_{\sf{pg}}$ has a further translational component, because $\sf{pg}$ is nonsymmorphic. Explicitly, $\sf{pg}\cong\ZZ\rtimes\ZZ$ with the second copy of $\ZZ$ acting on the first by reflection, and it is a non-split extension
$$0\rightarrow \Pi=\ZZ\oplus\ZZ\overset{(\times 1, \times 2)}{\longrightarrow}\ZZ\rtimes\ZZ\overset{(-1)^{n_2}}{\longrightarrow}\ZZ_2=G\rightarrow 1,$$
with 2-cocycle $\nu(-1,-1)=(0,1)$. The Fourier transform of $\nu$ is the ${\rm U}(C(\hat{T}^2_{\sf{pm}}))$-valued 2-cocycle $\tau_{\sf{pg}}(-1,-1)=\{(k_1,k_2)\mapsto e^{\im k_2}\}$, which is $\tau_{S^1}$ of Eq.\ \eqref{2cocyclecircle} (pulled back under an inclusion $S^1_{\rm triv}\hookrightarrow \hat{T}^2_{\sf{pm}}$).

For $\sf{cm}$, both $\alpha, \hat{\alpha}$ correspond to the ${\rm GL}(2,\ZZ)$ subgroup generated by $\begin{pmatrix} 0 & 1 \\ 1 & 0 \end{pmatrix}$. 

In each of $\sf{pm}, \sf{pg}, \sf{cm}$, the orientability obstruction homomorphism $c_\mathscr{G}$ is the unique nontrivial homomorphism $c={\rm id}$. Thus, their crystallographic T-dualities, Eq.\ \eqref{crystalTdual}, are
\begin{align*}
K^{\bullet+c}_{\ZZ_2}(T^2_{\sf{pm}})&\cong K^{\bullet}_{\ZZ_2}(\hat{T}^2_{\sf{pm}}),\\
K^{\bullet+c}_{\ZZ_2}(T^2_{\sf{pg}})&\cong K^{\bullet+\tau_{S^1}}_{\ZZ_2}(\hat{T}^2_{\sf{pm}}),\\
K^{\bullet+c}_{\ZZ_2}(T^2_{\sf{cm}})&\cong K^{\bullet}_{\ZZ_2}(\hat{T}^2_{\sf{cm}}).
\end{align*}

\noindent
{\bf Crystallographic T-duality diagram for $\sf{pm}$. }
Using Propositions \ref{prop:Real_dual_over_S1_triv}, \ref{prop:Z2_dual_over_S1_flip}, we can apply $\TR$ to the trivial `Real' circle bundle $T^2_{\sf{pm}}=\Sflip\times\Striv\rightarrow\Striv$ to get another trivial `Real' circle bundle $\acute{T}^2_{\sf{pm}}=\hSflip\times\Striv\rightarrow\Striv$. Now regard $\acute{T}^2_{\sf{pm}}$ as a trivial $\ZZ_2$-equivariant circle bundle over $\hSflip$, and apply $\TZtwo$ to get $\hat{T}^2_{\sf{pm}}=\hStriv\times\hSflip$. Similarly, we can start from $T^2_{\sf{pm}}$ as a $\ZZ_2$-equivariant circle bundle over $\Sflip$, and apply $\TZtwo$ to obtain $\grave{T}^2_{\sf{pm}}=\hStriv\times\Sflip$. Now T-dualise $\grave{T}^2_{\sf{pm}}$ as a `Real' circle bundle over $\hStriv$ to get $\hat{T}^2_{\sf{pm}}$. To summarise, we have
\begin{equation*}
\xymatrix@R.8pc@C.8pc{
& K^{\bullet-1}_{\ZZ_2}(\acute{T}^2_{\sf{pm}}) \ar[dr]^{\TZtwo}&  \\
 K^\bullet_\pm(T^2_{\sf{pm}})\ar[ur]^{\TR}\ar[dr]_{\TZtwo} \ar[rr]^{{\rm T}_{\sf{pm}}}&    & K^{\bullet}_{\ZZ_2}(\hat{T}^2_{\sf{pm}}) \\
& K^{\bullet-1}_\pm(\grave{T}^2_{\sf{pm}})\ar[ur]_{\TR} & 
}\label{pm-diagram}
\end{equation*}

\noindent
{\bf Crystallographic T-duality diagram for $\sf{pg}$. }
Using Propositions \ref{prop:Z2_dual_over_S1_flip}, \ref{prop:duality_over_S1_free}, we can apply $\TR$ to $T^2_{\sf{pg}}=\Sflip\times\Sfree\rightarrow\Sfree$ as a trivial `Real' circle bundle to get $\acute{T}^2_{\sf{pg}}=\hSflip\times\Sfree\rightarrow\Sfree$, then regard $\acute{T}^2_{\sf{pg}}$ as a $\ZZ_2$-equivariant circle bundle over $\hSflip$ and apply $\TZtwo$ to get $\hat{T}^2_{\sf{pm}}$ twisted by $\tau_{S^1}$. As in the $\sf{pm}$ case, there is a second factorisation route, applying $\TZtwo$ then $\TR$. To summarise, 
\begin{equation}
\xymatrix@R.8pc@C.8pc{
& K^{\bullet-1}_{\ZZ_2}(\acute{T}^2_{\sf{pg}}) \ar[dr]^{\TZtwo}&  \\
 K^\bullet_\pm(T^2_{\sf{pg}})\ar[ur]^{\TR}\ar[dr]_{\TZtwo} \ar[rr]^{{\rm T}_{\sf{pg}}}&    & K^{\bullet+\tau_{S^1}}_{\ZZ_2}(\hat{T}^2_{\sf{pm}}) \\
& K^{\bullet-1+\tau_{S^1}}_\pm(\grave{T}^2_{\sf{pm}})\ar[ur]_{\TR} & 
}\label{pg-diagram}
\end{equation}

\begin{remark}
The $H^1$-twist $c$ may be identified as the orientation class of the Klein bottle ${\bf K}$ after passing to the quotient in $H^1_{\ZZ_2}(T^2_{\sf{pg}},\ZZ_2)=H^1({\bf K},\ZZ_2)$. Then the map $\TR$ on the top left of Eq.\ \eqref{pg-diagram} says that the T-dual of a Klein bottle is another Klein bottle with orientation twist. The same isomorphism was obtained in \cite{Baraglia}, \S9.1, in the context of T-dualising general (non-principal) circle bundles such as ${\bf K}$.
\end{remark}

\begin{remark}\label{rem:pgsimplification}
An easy way \cite{GT} to compute the groups in Eq.\ \eqref{pg-diagram} is to calculate the $K$-homology groups 
$$K_0({\bf K})\cong H_{\rm even}({\bf K})=\ZZ,\qquad K_1({\bf K})\cong H_{\rm odd}({\bf K})=\ZZ\oplus\ZZ/2,$$
where passage to ordinary homology is justified by low-dimensionality of ${\bf K}$. By the Baum--Connes isomorphism for $\sf{pg}$, this computes the RHS of Eq.\ \eqref{pg-diagram}, and also the LHS by Poincar\'{e} duality. The top entry of Eq.\ \eqref{pg-diagram} is just the ordinary $K$-theory of ${\bf K}$, which is
$$K^1({\bf K})\cong H^{\rm odd}({\bf K})=\ZZ,\qquad K^0({\bf K})\cong H^{\rm even}({\bf K})=\ZZ\oplus\ZZ/2,$$
thus independently verifying the `Real' T-duality in the diagram.
\end{remark}

\subsubsection{Exotic non-cocycle $H^3$-twists from partial T-duality: $\sf{cm}$}\label{sec:partialTflux}
In \S\ref{sec:Real_duality_over_S1_triv}, we saw that $T^2_{\sf{pm}}$ and $T^2_{\sf{cm}}$ admitted equivariant $H^3$-twists $h_{\sf{pm}}, h_{\sf{cm}}$ which are not representable by cocycles. These somewhat mysterious twists are \emph{not} from the family of ``special twistings'' associated with groupoid central extensions \cite{FHT1} which had appeared naturally in solid state physics in \cite{FM}. 

A natural question is whether these extra $H^3$-twists have any realisation in solid state physics. We answer this in the affirmative, namely, they appear when we do a \emph{partial} Fourier transform adapted to $\sf{cm}$; that is, an exotic twist is required in a \emph{mixed} position-momentum space description. A similar observation was made in \cite{HMT}, where partial Fourier transform for the nonabelian integer Heisenberg group (roughly: a screw-dislocated 3D lattice) was found to induce (non-equivariant) H-flux on a 3-torus (cf.\ \S\ref{sec:screw}).

\noindent
{\bf Crystallographic T-duality diagram for $\sf{cm}$. }
By Proposition \ref{prop:Real_dual_over_S1_triv}, we can take $\TR$ for the `Real' circle bundle $T^2_{\sf{cm}}$ over $\Striv$ to get $\acute{T}^2_{\sf{pm}}$ twisted either by 
$h_{\sf{pm}}$ or $h_{\sf{pm}}+\tau_{S_1}$, the two options being related by the automorphism $f$ of $\acute{T}^2_{\sf{pm}}$ which multiplies its $\Sflip$ circle by $-1$ (Remark \ref{rem:automorphism}). By Proposition \ref{prop:Z2_dual_over_S1_flip}, we can then take $\TZtwo$ to arrive at $\hat{T}^2_{\sf{cm}}$. A second (pair of) factorisations is possible by first performing $\TZtwo$ on $T^2_{\sf{cm}}$ regarded as either of the $\ZZ_2$-equivariant circle bundles $S(L)$ or $S(L')$ over $\Sflip$ (cf.\ Remark \ref{rem:automorphismL}), and then $\TR$. To summarise,
\begin{equation*}
\xymatrix@R.8pc@C.8pc{
& K^{\bullet-1+h_{\sf{pm}}+\tau_{S^1}}_{\ZZ_2}(\acute{T}^2_{\sf{pm}}) \ar[ddr]^{\TZtwo} &\\
& K^{\bullet-1+h_{\sf{pm}}}_{\ZZ_2}(\acute{T}^2_{\sf{pm}}) \ar[u]^{f^*} \ar[dr]^{\TZtwo}&  \\
 K^\bullet_\pm(T^2_{\sf{cm}})\ar[ur]^{\TR}\ar[uur]^{\TR}\ar[dr]_{\TZtwo}\ar[ddr]_{\TZtwo} \ar[rr]^{{\rm T}_{\sf{cm}}}&    & K^{\bullet}_{\ZZ_2}(\hat{T}^2_{\sf{cm}}) \\
& K^{\bullet-1+h_{\sf{pm}}}_\pm(\grave{T}^2_{\sf{pm}})\ar[d]^{f^*}\ar[ur]_{\TR} & \\
& K^{\bullet-1+h_{\sf{pm}}+\tau_{S^1}}_\pm(\grave{T}^2_{\sf{pm}})\ar[uur]_{\TR} &
}\label{cm-diagram}
\end{equation*}

The exotic twist $h_{\sf{cm}}$ appears in the last pair of Proposition \ref{prop:Z2_dual_over_S1_flip} and of Proposition \ref{prop:Real_dual_over_S1_triv}, and these lead to a modified web of dualities,
\begin{equation*}
\xymatrix@R.8pc@C.8pc{
& K^{\bullet-1+h_{\sf{cm}}}_{\ZZ_2}(\acute{T}^2_{\sf{cm}}) \ar[dr]^{\TZtwo}&  \\
 K^{\bullet+h_{\sf{cm}}}_\pm(T^2_{\sf{cm}})\ar[ur]^{\TR}\ar[dr]_{\TZtwo} \ar[rr]^{{\rm T}^{h_{\sf{cm}}}_{\sf{cm}}}&    & K^{\bullet+h_{\sf{cm}}}_{\ZZ_2}(\hat{T}^2_{\sf{cm}}) \\
& K^{\bullet-1+h_{\sf{cm}}}_\pm(\grave{T}^2_{\sf{cm}})\ar[ur]_{\TR} & 
}\label{cm-diagram2}
\end{equation*}
where the composition ${\rm T}^{h_{\sf{cm}}}_{\sf{cm}}$ of the two partial T-dualities may be interpreted as crystallographic T-duality ${\rm T}_{\sf{cm}}$ enhanced by twisting both sides with $h_{\sf{cm}}$.

\subsection{Point group $D_3$: Orbifold change under crystallographic T-duality}\label{sec:p31mp3m1}
\begin{figure}
\begin{center}
\begin{subfigure}[b]{0.4\textwidth}
%\begin{tikzpicture}[every node/.style={scale=2}]
\begin{tikzpicture}[scale=1.2]
%\draw[help lines] (0,0) grid (7,2);

\draw[dotted] (0,0) -- (2,0) -- (3,1.73) -- (1,1.73) -- (0,0);

\draw[fill] (1,1.73) circle [radius=0.05];
\draw[fill] (1,0.58) circle [radius=0.05];
\draw[fill] (2,1.15) circle [radius=0.05];
\draw[fill] (0,0) circle [radius=0.05];
\draw[fill] (2,0) circle [radius=0.05];
\draw[fill] (3,1.73) circle [radius=0.05];

\draw[thick] (1,0) to (1,1.73);
\draw[thick] (2,0) to (2,1.73);
\draw[thick] (2,0) to (0.5,0.87);
\draw[thick] (1,1.73) to (2.5,0.87);
\draw[thick] (0,0) to (3,1.73);

\end{tikzpicture}
\caption{$\sf{p3m1}$}\label{fig:p3m1}
\end{subfigure}
\qquad\qquad
\begin{subfigure}[b]{0.4\textwidth}
%\begin{tikzpicture}[every node/.style={scale=2}]
\begin{tikzpicture}[scale=1.2]
%\draw[help lines] (0,0) grid (7,2);

\draw[thick] (0,0) -- (0,2) -- (1.73,1) -- (1.73,-1) -- (0,0);

\draw[thick] (0,0) to (1.73,1);

\draw[fill] (0,0) circle [radius=0.05];
\draw[fill] (0,2) circle [radius=0.05];
\draw[fill] (1.73,1) circle [radius=0.05];
\draw[fill] (1.73,-1) circle [radius=0.05];
\node at (0.57,1) {$\circ$};
\node at (1.15,0) {$\circ$};

\end{tikzpicture}
\caption{$\sf{p31m}$}\label{fig:p31m}
\end{subfigure}
\caption{Unit cells for $\sf{p3m1}$ and $\sf{p31m}$. Solid lines indicate reflection axes, solid circles indicate points with full point group $D_3$ isotropy, hollow circles (only applicable to $\sf{p31m}$) indicate points with only $\frac{2\pi}{3}$ rotation symmetry. The lattice for $\sf{p3m1}$ has reciprocal lattice identifiable with the lattice for $\sf{p31m}$. }\label{fig:p3m1p31m}
\end{center}
\end{figure}
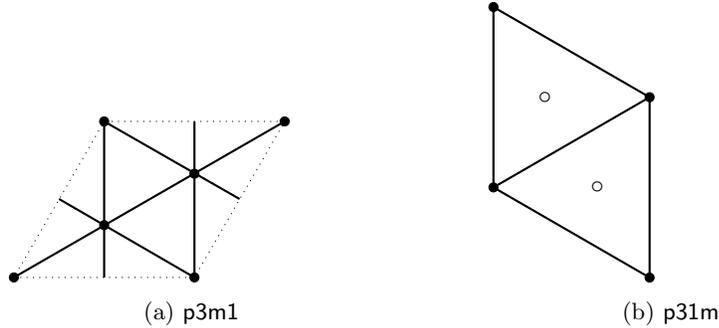

The two symmorphic wallpaper groups $\sf{p3m1}, \sf{p31m}$ have the dihedral group $D_3$ as their point group, and have the interesting feature that the $\alpha$ for $\sf{p31m}$ has dual action $\hat{\alpha}$ equivalent to the $\alpha$-action of $\sf{p3m1}$, and vice versa (see Eq.\ 27 of \cite{Michel}, cf.\ Lemma 2.4 of \cite{Gomi2}), as illustrated in Fig.\ \ref{fig:p3m1p31m}. Thus the position space orbifold $T^2_{\sf{p3m1}}$ dualises into $\hat{T}^2_{\sf{p31m}}$ while $T^2_{\sf{p31m}}$ dualises into $\hat{T}^2_{\sf{p3m1}}$.

The orientability homomorphism $c:D_2\rightarrow\ZZ_2$ is the unique surjective one, and we have the dualities
$$
K^{\bullet+c}_{D_2}(T^2_{\sf{p3m1}})\overset{{\rm T}_{\sf{p3m1}}}{\longrightarrow} K^\bullet_{D_2}(\hat{T}^2_{\sf{p31m}}),
$$
$$
K^{\bullet+c}_{D_2}(T^2_{\sf{p31m}})\overset{{\rm T}_{\sf{p31m}}}{\longrightarrow} K^\bullet_{D_2}(\hat{T}^2_{\sf{p3m1}}).
$$
Without the $c$-twist, the equivariant $K$-theories of both $T^2_{\sf{p3m1}}$ and $T^2_{\sf{p31m}}$ were computed in \cite{SSG2}, and also via the $K$-groups of the group $C^*$-algebra in \cite{Yang,LS}, to be $\ZZ^5$ in degree 0 and $\ZZ$ in degree 1. It is also possible, although not detailed in this paper, to compute directly that the $c$-twisted equivariant $K$-theories are also $\ZZ^5$ or $\ZZ$, verifying the crystallographic T-dualities for these two wallpaper groups.

\subsection{Remaining cases}
In each of the remaining cases, $\hat{\alpha}$ on $\hat{T}^2$ is conjugate to $\alpha$ on $T^2$, and there are no equivariant fibrings as circle bundles for factorisation of ${\rm T}_{\mathscr{G}}$.

\subsubsection{Cyclic point groups $\sf{p3}$, $\sf{p4}$, $\sf{p6}$}
For point groups $\ZZ_3,\ZZ_4,\ZZ_6$ comprising order-3,4 or 6 rotations with respective wallpaper groups $\sf{p3}$, $\sf{p4}$, $\sf{p6}$, there are only the crystallographic T-dualities:
$$
K^\bullet_{\ZZ_{3/4/6}}(T^2_{\sf{p3/4/6}})\overset{{\rm T}_{\sf{p3/4/6}}}{\longrightarrow} K^\bullet_{\ZZ_{3/4/6}}(\hat{T}^2_{\sf{3/4/6}}).
$$

\subsubsection{Point group $D_2, D_4, D_6$}
The orientability homomorphism for the remaining seven wallpaper groups (with point group $D_2, D_4$ or $D_6$) is $c_\mathscr{G}:D_n=\ZZ_n\rtimes\ZZ_2\rightarrow\ZZ_2$. Of these, the symmorphic ones $\sf{pmm}, \sf{cmm}, \sf{p4m}, \sf{p6m}$ and also $\sf{p4g}$ have $H^3$ obstruction $\omega$ (pulled back from $\pt$). As for $\sf{pmg}, \sf{pgg}$, nonsymmorphic versions of $\sf{pmm}$ in which one/both of the generating reflections in $D_2$ is/are replaced by glide reflections, the pullback of $\omega$ from $\pt$ to $T^2_{\mathscr{G}}$ trivialises, because there are maps of groupoids $T^2_{\sf{pmg}/\sf{pgg}}/\!\!/ D_2\rightarrow S^1_{{\rm flip}\times{\rm free}}/\!\!/D_2\rightarrow\pt/\!\!/D_2$, so after quotienting, $H^3_{D_2}(S^1_{{\rm flip} \times {\rm free}},\ZZ)\cong H^3_{\ZZ_2}(\Sflip,\ZZ)=0$ (\S\ref{sec:cohomologycalculations}). The nonsymmorphic $\sf{pmg}, \sf{pgg}, \sf{p4g}$ give the cocycle $H^3$-twists $\tau_{\sf{pmg}},\tau_{\sf{pgg}},\tau_{\sf{p4g}}$ on the Brillouin torus $\hat{T}^2$ as usual.

\noindent
{\bf Crystallographic T-dualities for $\mathscr{G}$ with $G=D_2,D_4,D_6$. }
\begin{equation}
K^{\bullet+(\omega,c_{\mathscr{G}})}_G(T^2_{\mathscr{G}})\overset{{\rm T}_{\mathscr{G}}}{\longrightarrow} K^\bullet_G(\hat{T}^2_{\mathscr{G}}),\qquad \mathscr{G}=\sf{pmm}, \sf{cmm}, \sf{p4m}, \sf{p6m},\label{wallpaperD2dualities}
\end{equation}
$$
K^{\bullet+c_{\sf{pmg}/\sf{pgg}}}_{D_2}(T^2_{\sf{pmg}/\sf{pgg}})\overset{{\rm T}_{\sf{pmg}/\sf{pgg}}}{\longrightarrow} K^{\bullet+\tau_{\sf{pmg}/\sf{pgg}}}_{D_2}(\hat{T}^2_{\sf{pmm}}),
$$
$$
K^{\bullet+(\omega,c_{\sf{p4g}})}_{D_4}(T^2_{\sf{p4g}})\overset{{\rm T}_{\sf{p4g}}}{\longrightarrow} K^{\bullet+\tau_{\sf{p4g}}}_{D_4}(\hat{T}^2_{\sf{p4m}}).
$$

\subsubsection{Combination of cocycle and Spin${}^c$ obstruction twists}\label{rem:trivialpullback}
In \S\ref{sec:topologychange}, we recalled that in (nonequivariant) T-duality of circle bundles over some base space, the Chern class $c_1$ and H-flux $h$ invariants get mixed around under T-duality according to the relations Eq.\ \eqref{chern-flux-exchange}, and \emph{both} can be nontrivial on each side of a duality. The mixing up of topological invariants is an interesting and intricate feature of T-duality. 

In the basic crystallographic T-duality Eq.\ \eqref{crystalTdual}, equivariant twists serve as topological invariants, with $\sigma_\sG$ the equivariant Spin${}^c$ obstruction for $T^d$ on one side, and $\tau_\sG$ the 2-cocycle twist for the dual torus $\hat{T}^d$ on the other side. Also, the affine point group action on $T^d$ is related to the 2-cocycle twist $\tau_\sG$ on $\hat{T}^d$ through the crystallographic group. More generally, it is possible for more complicated combinations of 2-cocycle twists and Spin${}^c$ obstruction twists to occur on the \emph{same} side of a T-duality, and we give an example here. 

In the previous subsection, we obtained the crystallographic T-duality 
\begin{equation}
{\rm T}_{\sf{pmg}}:K^{\bullet+c_{\sf{pmg}}}_{D_2}(T^2_{\sf{pmg}})\longrightarrow K^{\bullet+\tau_{\sf{pmg}}}_{D_2}(\hat{T}^2_{\sf{pmm}}),\label{pmgTduality}
\end{equation}
where on the left side $\sigma_{\sf{pmg}}$ reduced to the orientability homomorphism $H^1$ twist $c_{\sf{pmg}}$ because the $H^3$ part (the pullback of $\omega$ from a point to $T^2_{\sf{pmg}}$) necessarily vanished. The pullback of $\omega$ to the Brillouin torus $\hat{T}^2_{\sf{pmm}}$ is, however, nonzero. Furthermore, it was observed in \cite{SSG2} that there is a $D_2$-equivariant automorphism of $\hat{T}^2_{\sf{pmm}}$ which takes $\tau_{\sf{pmg}}$ to $\tau_{\sf{pmg}}+\omega$ (see the Appendix for a concrete calculation). This means that the right side of Eq.\ \eqref{pmgTduality} is isomorphic to $K^{\bullet+\tau_{\sf{pmg}}+\omega}_{D_2}(\hat{T}^2_{\sf{pmm}})$, and there is another T-duality
\begin{equation}
K^{\bullet+c_{\sf{pmg}}}_{D_2}(T^2_{\sf{pmg}})\overset{\rm T}{\longleftrightarrow} K^{\bullet+\tau_{\sf{pmg}}+\omega}_{D_2}(\hat{T}^2_{\sf{pmm}})\label{pmgTdualitywithomega}
\end{equation}
in which a cocycle twist $\tau_{\sf{pmg}}$ and a (S)pin$^{c}$ obstruction twist $\omega$ both appear on the right side. On the left side, the $c_{\sf{pmg}}$ part of the  Spin$^{c}$ obstruction twist remains, and a 2-cocycle is implicit in the affine part of the $\sf{pmg}$ action (rather than appearing as a $H^3$-twist). A similar result holds for $\sf{pgg}$ in place of $\sf{pmg}$.

%\subsection{Table of T-dualities associated to 2D wallpaper groups}

\begin{table}
\begin{center}
 \begin{tabular}{|| c | c | c | c | c | c | c | c ||} 
 \hline
 \multicolumn{3}{|| c |}{Primal $\bullet=0$} & \multirow{2}{*}{$K^\bullet$} & \multirow{2}{*}{$K^{\bullet-1}$} & \multicolumn{3}{| c ||}{$\bullet=1$ Circle bundle dual}\\
 \cline{1-3}\cline{6-8}
Space & $H^3$-twist & $H^1$-twist &  &  & $H^1$-twist & $H^3$-twist & Space \\ [0.5ex] 
 \hline
 $T^2_{\sf{p1}}$ & N/A & 0 & $\ZZ^2$ & $\ZZ^2$ & \multicolumn{2}{| c |}{Self T-dual} & $T^2_{\sf{p1}}$\\
 \hline\hline
  \multicolumn{3}{|| c |}{Primal $\bullet=0$} &$K_{\ZZ_2}^\bullet$ &$K_{\ZZ_2}^{\bullet-1}$ & \multicolumn{3}{| c ||}{$\bullet=1$ `Real' circle bundle dual}\\
\hline
 $T^2_{\sf{p2}}$ & N/A & 0 & $\ZZ^6$ & 0 & \shade$c$ & \shade N/A & $T^2_{\sf{p2}}$\\
 \hline
  $T^2_{\sf{pg}}$ & \shade N/A & \shade $c$ & \multirow{2}{*}{$\ZZ$} & \multirow{2}{*}{$\ZZ\oplus\ZZ/2$} & 0 & N/A & $T^2_{\sf{pg}}$\\
 \cline{1-3}\cline{6-8}
\multirow{2}{*}{$T^2_{\sf{pm}}$} & $\tau_{S^1}$ & 0 &  &   & \shade $c$ & \shade $\tau_{S^1}$ & \multirow{4}{*}{$T^2_{\sf{pm}}$} \\ 
 \cline{2-7}
 & 0 & 0 & $\ZZ^3$ & $\ZZ^3$  & \shade $c$ & \shade 0 & \\
 \cline{1-7}
 \multirow{3}{*}{$T^2_{\sf{cm}}$} & 0 & 0 & \multirow{2}{*}{$\ZZ^2$} &  \multirow{2}{*}{$\ZZ^2$} & \shade $c$ & \shade $h_{\sf{pm}}$\,\,or\,\,$h_{\sf{pm}}+\tau_{S^1}$ &\\
  \cline{2-3}\cline{6-7}
    & \shade  0  & \shade $c$ &  &  & \shade $0$ & \shade $h_{\sf{pm}}$\,\,or\,\,$h_{\sf{pm}}+\tau_{S^1}$ &\\
  \cline{2-8}
& \shade  $h_{\sf{cm}}$ & \shade  0 & $\ZZ$ & $\ZZ$  & \shade $c$ & \shade $h_{\sf{cm}}$ & $T^2_{\sf{cm}}$ \\ 
 \hline
 \hline
  \multicolumn{3}{|| c |}{Primal $\bullet=0$} &$K_G^\bullet$ &$K_G^{\bullet-1}$ & \multicolumn{3}{| c ||}{$\bullet=0$ Crystallographic dual}\\
\hline
\multirow{2}{*}{$T^2_{\sf{p31m}}$} & 0 & 0 & $\ZZ^5$ & $\ZZ$  & \shade $c$ & \shade 0 & \multirow{2}{*}{$T^2_{\sf{p3m1}}$} \\ 
 \cline{2-7}
  & \shade $0$ & \shade $c$ & $\ZZ^5$ & $\ZZ$  & 0 & 0  & \\
\hline
 $T^2_{\sf p3}$ & N/A & 0 & $\ZZ^8$ & 0 & \multicolumn{2}{| c |}{Self crystal T-dual}  &  $T^2_{\sf p3}$\\
\hline
 $T^2_{\sf p4}$ & N/A & 0 & $\ZZ^9$ & 0 & \multicolumn{2}{| c |}{Self crystal T-dual}  &  $T^2_{\sf p4}$\\
\hline
 $T^2_{\sf p6}$ & N/A & 0 & $\ZZ^{10}$ & 0 & \multicolumn{2}{| c |}{Self crystal T-dual}  &  $T^2_{\sf p6}$\\
\hline
 $T^2_{\sf pmm}$ & \shade $\omega$ & \shade $c_{\sf{pmm}}$ & $\ZZ^9$ & 0  & 0 & 0  &   \multirow{3}{*}{$T^2_{\sf pmm}$}\\
\cline{1-7}
 $T^2_{\sf pmg}$ & \shade 0 &\shade  $c_{\sf{pmg}}$ & $\ZZ^4$ & $\ZZ$  & 0 & $\tau_{\sf{pmg}}$  &  \\
\cline{1-7}
 $T^2_{\sf pgg}$ & \shade 0 & \shade $c_{\sf{pgg}}$ & $\ZZ^3$ & $\ZZ/2$  & 0 & $\tau_{\sf{pgg}}$  & \\
\hline
 $T^2_{\sf cmm}$ & \shade $\omega$  & \shade $c_{\sf{cmm}}$ & $\ZZ^6$ & 0  & 0 & 0  &  $T^2_{\sf cmm}$\\
\hline
 $T^2_{\sf p4m}$ & \shade  $\omega$ & \shade $c_{\sf{p4m}}$ & $\ZZ^9$ & 0  & 0 & 0  &   \multirow{2}{*}{$T^2_{\sf p4m}$}\\
\cline{1-7}
 $T^2_{\sf p4g}$ & \shade $\omega$ & \shade $c_{\sf{p4g}}$ & $\ZZ^6$ & 0  & 0 & $\tau_{\sf{p4g}}$  &  \\
 \hline
 $T^2_{\sf p6m}$ & \shade  $\omega$ & \shade $c_{\sf{p6m}}$ & $\ZZ^8$ & 0  & 0 & 0  &  $T^2_{\sf p6m}$\\
\hline
\end{tabular}
\caption{List of $K$-theory groups appearing in 2D crystallographic T-dualities, note the use of Eq.\ \eqref{KpmKZtworelation}. Unshaded entries were computed in \cite{Yang} through $K_\bullet(C^*_r(\mathscr{G}))$ (see also \cite{LS}), and directly as twisted $K$-theory groups in \cite{SSG2}. Shaded entries indicate $K$-theories with graded twists that are further implied by various T-dualities, which for point group $\ZZ_2$ (middle set of rows) were independently computed in \S\ref{sec:circleTdualcalculations} (and partially in \S5.5 of \cite{Gomi1}).}\label{table:2Ddualitytable}
\end{center}
\end{table}

\section{1D crystallographic T-dualities}\label{sec:1Ddualities}
\subsection{1D space groups, frieze groups, and graded twists}\label{sec:frieze}
The two 1D space groups $\ZZ, \ZZ\rtimes\ZZ_2$ are special cases of frieze groups, and we shall use the international notation $\sf{p1}, \sf{p1m1}$. Our convention is to regard the $\ZZ$ symmetry to be along the horizontal direction. A frieze group is a generalisation of a 1D space group to include an extra ``internal'' direction. Such generalisations of space groups are called \emph{subperiodic groups} \cite{KL}. Sometimes, the extra direction is taken to be a time direction which can be reversed by symmetry group operations, and 1D frieze groups are examples of \emph{magnetic space groups}. 

This internal direction is crucial in the bulk-boundary correspondence, where a 1D boundary line should be thought of as sitting in 2D, whence it has a notion of ``above'' and ``below'' the line \cite{GT}. For example, even though reflection of the vertical coordinate in 2D restricts to the trivial action on the invariant horizontal axis, the internal label ``above/below'' is changed, and this is recorded by giving the reflection the \emph{odd} grading.

The seven frieze groups (see Fig.\ \ref{fig:frieze}), with their natural gradings, are summarised in the following table. The point groups are either trivial, $\ZZ_2$, or $D_2$. In a semidirect product $\ZZ\rtimes\ZZ_2$, the point group $\ZZ_2$ acts on $\ZZ$ by reflection, while in $\ZZ\rtimes D_2$, the second factor of $D_2=\ZZ_2\times\ZZ_2$ acts on $\ZZ$ by reflection while the first factor acts trivially. The projection of $D_2$ onto the $i$-th $\ZZ_2$-factor is denoted by $p_i, i=1,2$.

\begin{center}
 \begin{tabular}{||c | c  | c ||} 
 \hline
 IUC Name & Graded point group & Abstract graded group \\ [0.5ex] 
 \hline\hline
 $\sf{p1}$ & 1 & $\ZZ$  \\ 
 \hline
 $\sf{p1m1}$ & $\ZZ_2\rightarrow 1$ &$ \ZZ\rtimes\ZZ_2$  \\
 \hline
 $\sf{p2}$ & $\ZZ_2\overset{\rm id}{\rightarrow}\ZZ_2$ & $\ZZ\rtimes\ZZ_2\overset{1,{\rm id}}{\rightarrow}\ZZ_2$  \\
 \hline
 $\sf{p11m}$ & \multirow{2}{*}{$\ZZ_2\overset{\rm id}{\rightarrow}\ZZ_2$} & $\ZZ\times\ZZ_2\overset{1,{\rm id}}{\rightarrow}\ZZ_2$ \\
 \cline{1-1}\cline{3-3}
 $\sf{p11g}$  &  & $\ZZ\overset{{(-1)^n}}{\rightarrow}\ZZ_2$  \\
 \hline
  $\sf{p2mm}$  & \multirow{2}{*}{$D_2\overset{p_1}{\rightarrow}\ZZ_2$} & $\ZZ\rtimes D_2\overset{1,p_1}{\rightarrow}\ZZ_2$  \\
 \cline{1-1}\cline{3-3}
 $\sf{p2mg}$ & & $\ZZ\rtimes\ZZ_2\overset{{(-1)^n},1}{\rightarrow}\ZZ_2$  \\ [1ex] 
 \hline
\end{tabular}\label{table:friezegroups}
\end{center}

\begin{figure}
\begin{tikzpicture}[scale=1.7,every node/.style={scale=0.8}]
%draw[help lines] (0,0) grid (7,2);

\draw[dotted] (0,0.5) -- (1,0.5);
\node[above] at (0.5,0.7) {$\sf{p1}$};
\node at (0,0.5) {$\bullet$};
\node at (0.5,0.5) {$\bullet$};
\node at (1,0.5) {$\bullet$};
\node at (0.25,0.5) {b};
\node at (0.75,0.5) {b};

\draw[dotted] (1,1.5) -- (2,1.5);
\node[above] at (1.5,1.7) {$\sf{p1m1}$};
\node at (1,1.5) {$\bullet$};
\node at (1.5,1.5) {$\bullet$};
\node at (2,1.5) {$\bullet$};
\node at (1,1.5) {$|$};
\node at (1.25,1.5) {$|$};
\node at (1.5,1.5) {$|$};
\node at (1.75,1.5) {$|$};
\node at (2,1.5) {$|$};
\node at (1.25,1.5) {$\vee$};
\node at (1.75,1.5) {$\vee$};

\draw[dotted] (2,0.5) -- (3,0.5);
\node[above] at (2.5,0.7) {$\sf{p2}$};
\node at (2,0.5) {$\bullet$};
\node at (2.5,0.5) {$\bullet$};
\node at (3,0.5) {$\bullet$};
\draw (2,0.5) circle (2.2pt);
\draw (2.25,0.5) circle (2.2pt);
\draw (2.5,0.5) circle (2.2pt);
\draw (2.75,0.5) circle (2.2pt);
\draw (3,0.5) circle (2.2pt);
\node at (1,1.5) {$|$};
\node at (1.25,1.5) {$|$};
\node at (1.5,1.5) {$|$};
\node at (1.75,1.5) {$|$};
\node at (2,1.5) {$|$};
\node at (2.25,0.5) {S};
\node at (2.75,0.5) {S};

\draw[thick] (3,1.5) -- (4,1.5);
\node[above] at (3.5,1.7) {$\sf{p11m}$};
\node at (3,1.5) {$\bullet$};
\node at (3.5,1.5) {$\bullet$};
\node at (4,1.5) {$\bullet$};
\node at (3.25,1.5) {B};
\node at (3.75,1.5) {B};

\draw[thick,dashed] (4,0.5) -- (5,0.5);
\node[above] at (4.5,0.7) {$\sf{p11g}$};
\node at (4,0.5) {$\bullet$};
\node at (4.5,0.5) {$\bullet$};
\node at (5,0.5) {$\bullet$};
\node at (4.125,0.5) {p};
\node at (4.375,0.5) {b};
\node at (4.625,0.5) {p};
\node at (4.875,0.5) {b};

\draw[thick] (5,1.5) -- (6,1.5);
\node[above] at (5.5,1.7) {$\sf{p2mm}$};
\node at (5,1.5) {$\bullet$};
\node at (5.5,1.5) {$\bullet$};
\node at (5,1.5) {$\bullet$};
\node at (5,1.5) {$|$};
\node at (5.25,1.5) {$|$};
\node at (5.5,1.5) {$|$};
\node at (5.75,1.5) {$|$};
\node at (6,1.5) {$|$};
\node at (5.25,1.5) {H};
\node at (5.75,1.5) {H};

\draw[thick,dashed] (6,0.5) -- (7,0.5);
\node[above] at (6.5,0.7) {$\sf{p2mg}$};
\node at (6,0.5) {$\bullet$};
\node at (6.5,0.5) {$\bullet$};
\node at (7,0.5) {$\bullet$};
\node at (6.125,0.5) {$|$};
\node at (6.375,0.5) {$|$};
\node at (6.625,0.5) {$|$};
\node at (6.875,0.5) {$|$};
\node at (6.125,0.5) {$\vee$};
\node at (6.375,0.5) {$\wedge$};
\node at (6.625,0.5) {$\vee$};
\node at (6.875,0.5) {$\wedge$};

%\draw[<->] (0.4,0.4) -- (0.4,0.6);

\end{tikzpicture}
\caption{For each frieze group, three lattice points $\bullet$ (two unit cells) are drawn. There may be additional symmetry operations which preserve a pattern of symbols indicated b, p, $\vee, \wedge$, S, B, or H. Thick horizontal lines are horizontal reflection axes, dashed lines are glide reflection axes, circles are $\pi$ rotation centres, and vertical lines $|$ are vertical reflection axes. The $\pi$-rotation symmetries for $\sf{p2mm}$ and $\sf{p2mg}$ are not independent group generators, and are omitted.}\label{fig:frieze}
\end{figure}
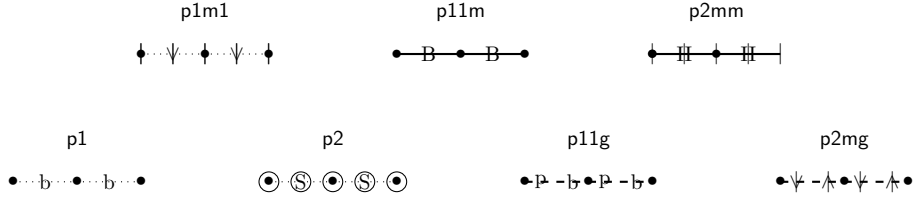

\subsubsection{Point group $1$: $\sf{p1}$} 
{\bf $\sf{p1}$ case.} $T^1_{\sf{p1}}=S^1$, $\hat{T}^1_{\sf{p1}}=\hat{S}^1$, and ${\rm T}_{\sf{p1}}$ is the basic T-duality in Eq.\ \eqref{circledual}.

\subsubsection{Point group $\ZZ_2$ acting by reflection: $\sf{p1m1},\sf{p2}$}
The frieze groups $\sf{p1m1}$ and $\sf{p2}$ are both $\ZZ\rtimes\ZZ_2$ and come with the nontrivial orientability homomorphism $c_\mathscr{G}=c:\ZZ_2\overset{{\rm id}}{\rightarrow}\ZZ_2$. 
In both cases, $T^1_{\mathscr{G}}=R^1/\ZZ=\Sflip$ has the flip involution, and the momentum space is $\hSflip$. However, the point group in $\sf{p1m1}$ implements reflection of the horizontal coordinate, whereas in $\sf{p2}$ it implements $\pi$-rotation; only the latter exchanges the internal ``above/below'' label and is non-trivially graded by $c:\ZZ_2\overset{{\rm id}}{\rightarrow}\ZZ_2$.

The crystallographic T-duality, Eq.\ \eqref{crystalTdual}, for $\sf{p1m1}$ is
$$
{\rm T}_{\sf{p1m1}}: K^\bullet_\pm(\Sflip)\cong K^{\bullet+c_{\sf{p1m1}}}_{\ZZ_2}(\Sflip)\overset{\cong}{\longrightarrow}K^{\bullet-1}_{\ZZ_2}(\hSflip),\label{p1m1duality}
$$
which is also $\TR$ for the `Real' T-dual circle bundles $\Sflip, \hSflip$ over a point (Proposition \ref{prop:Real_dual_over_pt}). Adding a $c$-twist to both sides, (or exchanging the roles of $\Sflip$ and $\hSflip$), we get the crystallographic T-duality for the graded group $\sf{p2}$,
$$
{\rm T}_{\sf{p2}}: K^{\bullet+c_{\sf{p2}}+c}_{\ZZ_2}(\Sflip)=K^{\bullet}_{\ZZ_2}(\Sflip)\overset{\cong}{\longrightarrow}K^{\bullet-1+c}_{\ZZ_2}(\hSflip)\cong K^{\bullet-1}_\pm(\hSflip).
$$

\subsubsection{Point group $\ZZ_2$ acting trivially: $\sf{p11m}, \sf{p11g}$}\label{sec:p11g}
The frieze group $\sf{p11m}$ is $\ZZ\times\ZZ_2$, with the point group $\ZZ_2$ reflecting the vertical coordinate and thus nontrivially graded. It has $T^1_{\sf{p11m}}=B\ZZ=R^1/\ZZ=\Striv$ and Brillouin torus $\hStriv$. 

As $\ZZ_2$-equivariant circle bundles over a point, $\Striv, \hStriv$ are T-dual,
\begin{equation*}
\TZtwo:K^{\bullet}_{\ZZ_2}(\Striv)\overset{\cong}{\longrightarrow} K^{\bullet-1}_{\ZZ_2}(\hStriv),\label{basicZ2duality2}
\end{equation*}
and adding a $c$-twist on both sides gives crystallographic T-duality for $\sf{p11m}$,
$$
{\rm T}_{\sf{p11m}}: K^{\bullet+c}_{\ZZ_2}(\Striv)\overset{\cong}{\longrightarrow} K^{\bullet-1+c}_{\ZZ_2}(\hStriv).
$$

\begin{remark}
In \S\ref{sec:H1twistexamples}, we saw that $H^1_{\ZZ_2}(\Striv,\ZZ_2)=\ZZ/2\oplus\ZZ/2$ with generators $c$ and $M$ the M\"{o}bius bundle over $\Striv$ made $\ZZ_2$-equivariant in a trivial way. We sketch a strategy to T-dualise $(\Striv,M)$ and $(\Striv, M+c)$ in \S\ref{sec:Mobiustwist}. 
\end{remark}

In \S\ref{sec:H3twistexamples}, we also saw that $H^3_{\ZZ_2}(\Striv,\ZZ)\cong\ZZ/2$ generated by $\tau_{S^1}$ defined by Eq.\ \eqref{2cocyclecircle}. This $H^3$-twist appears in the crystallographic T-dual of $\sf{p11g}$, the \emph{graded} group $\ZZ$ generated by an odd glide reflection, i.e.\ reflection of vertical coordinate followed by half a lattice translation, see Fig.\ \ref{fig:glideaxis}.
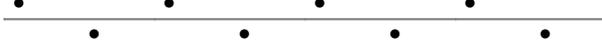
\begin{figure}
\begin{tikzpicture}[every node/.style={scale=1.5}]
%\draw[help lines,dotted,thick,step=2] (-2,-3) grid (6,3);
\draw[help lines,thick,step=2] (-2,0) grid (6,0);

\foreach \p in {(-1.8,0.2),(0.2,0.2),(2.2,0.2),(4.2,0.2)}{\draw \p node[scale=0.6] {$\bullet$};}

\foreach \p in {(-0.8,-0.2),(1.2,-0.2),(3.2,-0.2),(5.2,-0.2)}{\draw \p node[scale=0.6] {$\bullet$};}

\end{tikzpicture}
\caption{The generating translation along a glide axis also effects an exchange of the internal label ``above/below the axis''.}\label{fig:glideaxis}
\end{figure}
The (even) lattice subgroup $\Pi$ is a \emph{proper} subgroup of $\sf{p11g}$ of index 2,
\begin{equation}
0\rightarrow \Pi\cong\ZZ\overset{\times 2}{\longrightarrow} \sf{p11g}\cong\ZZ\overset{(-1)^{(\cdot)}}{\longrightarrow}\ZZ_2\rightarrow 1\label{p11gextension}
\end{equation}
On $T^1=R^1/\Pi$, the translational part of the $\ZZ_2$-action is $s_{\sf{p11g}}:-1\mapsto e^{\im \pi}=-1$, so $T^1_{\sf{p11g}}=\Sfree$. The Brillouin zone is $\wh{\Pi}=\hStriv$ but has the cocycle twist $\tau_{\sf{p11g}}=\tau_{S^1}$ due to the 2-cocycle $\nu_{\sf{p11g}}(-1,-1)=1$ for Eq.\ \eqref{p11gextension}. By Proposition \ref{prop:Z2_dual_over_pt}, $(\Sfree,0)$ and $(\hStriv,\tau_{S^1})$ are T-dual pairs in the $\ZZ_2$-equivariant sense, so
$$
\TZtwo:K^\bullet_{\ZZ_2}(\Sfree)\overset{\cong}{\longrightarrow}K^{\bullet-1+\tau_{S^1}}_{\ZZ_2}(\hStriv).
$$
Adding a $c$-twist to both sides gives crystallographic T-duality for $\sf{p11g}$,
\begin{equation}
{\rm T}_{\sf{p11g}}: K^{\bullet+c}_{\ZZ_2}(\Sfree)\overset{\cong}{\longrightarrow}K^{\bullet-1+(\tau_{S^1},c)}_{\ZZ_2}(\hStriv).\label{p11gTduality}
\end{equation}

\begin{remark}
In \cite{SSG1}, the odd glide reflection generating $\sf{p11g}$ was called a \emph{nonsymmorphic chiral symmetry}, and the $K$-theory of the $(\tau_{S^1},c)$-twisted $\hStriv$ was computed to be $\ZZ/2$. This $K$-theory group is important for the crystallographic bulk-boundary correspondence for $\mathscr{G}=\sf{pg}$, as studied in \cite{GT} and briefly discussed in \S\ref{sec:bulkboundaryapplication}.
\end{remark}

The T-dualities associated to frieze groups with $G\subset\ZZ_2$ appear in Table \ref{table:1Ddualitiestable}.

\subsubsection{$D_2$ point group: $\sf{p2mm},\sf{p2mg}$}\label{sec:D2pointgroup1D}
For $\sf{p2mm}$, the point group is $D_2=\ZZ_2\times\ZZ_2$ with the first (resp.\ second) generator reflecting the vertical (resp.\ horizontal) coordinate, so $T^1_{\sf{p2mm}}=S^1_{{\rm triv}\times{\rm flip}}$, and similarly for the Brillouin torus $\hat{T}^1_{\sf{p2mm}}=\hat{S}^1_{{\rm triv}\times{\rm flip}}$. Let the $i$-th projection homomorphism $p_i:D^2=\ZZ_2\times\ZZ_2\rightarrow\ZZ_2$ define the $H^1$-twist $c_i, i=1,2$, then the grading $p_1$ on $D_2$ gives the twist $c_1$, while the orientability homomorphism $p_2$ gives $c_{\sf{p2mm}}=c_2=c_{\sf{p2mg}}$.

If $\sf{p2mm}$ is regarded as an \emph{ungraded} group $\ZZ\rtimes D_2$, the crystallographic T-duality Eq.\ \ref{crystalTdual} would give
\begin{equation*}
{\rm T}^{\rm ungraded}_{\sf{p2mm}}:K^{\bullet+c_2}_{D_2}(S^1_{{\rm triv}\times{\rm flip}})\overset{\cong}{\rightarrow}K^{\bullet-1}_{D_2}(\hat{S}^1_{{\rm triv}\times{\rm flip}}),\label{p2mmTduality}
\end{equation*}
We anticipate that when the grading twist $c_1$ is added, we will obtain
\begin{equation*}
{\rm T}_{\sf{p2mm}}:K^{\bullet+(\omega,c_1+c_2)}_{D_2}(S^1_{{\rm triv}\times{\rm flip}})\overset{\cong}{\rightarrow}K^{\bullet-1+c_1}_{D_2}(\hat{S}^1_{{\rm triv}\times{\rm flip}}),
\end{equation*}
where on the LHS, we recall that $c_1+c_2\equiv(0,c_1)+(0,c_2)=(\omega,c_1+c_2)$ in the group of graded $D_2$-equivariant twists (pulled back from $\pt$) due to Eq.\ \eqref{D2twistmodified}.

For $\sf{p2mg}$, the vertical coordinate reflection is replaced by a glide reflection so that $T^1_{\sf{p2mg}}=S^1_{{\rm free}\times{\rm flip}}$. The Brillouin torus is again $\hat{S}^1_{{\rm triv}\times{\rm flip}}$, with a 2-cocycle twist $\tau_{\sf{p2mg}}$ from the nonsymmorphicity. The crystallographic T-duality for the group $\sf{p2mg}$ is 
\begin{align*}
{\rm T}^{\rm ungraded}_{\sf{p2mg}}&:K^{\bullet+c_2}_{D_2}(S^1_{{\rm free}\times{\rm flip}})\overset{\cong}{\rightarrow}K^{\bullet-1+\tau_{\sf{p2mg}}}_{D_2}(\hat{S}^1_{{\rm triv}\times{\rm flip}}),\nonumber\\
{\rm T}_{\sf{p2mg}}&:K^{\bullet+c_1+c_2}_{D_2}(S^1_{{\rm free}\times{\rm flip}})\overset{\cong}{\rightarrow}K^{\bullet-1+(\tau_{\sf{p2mg}},c_1)}_{D_2}(\hat{S}^1_{{\rm triv}\times{\rm flip}})
\label{p2mgTduality}
\end{align*}
where we note that $H^3_{D^2}(S^1_{{\rm free}\times{\rm flip}},\ZZ)\cong H^3_{\ZZ_2}(S^1_{{\rm flip}},\ZZ)=0$ so $c_1$, $c_2$ add as graded twists in the na\"{i}ve way as in $H^1_{D_2}(S^1_{{\rm free}\times{\rm flip}},\ZZ_2)$.

\begin{table}
\begin{center}
 \begin{tabular}{|| c | c | c | c | c | c | c | c ||} 
 \hline
 \multicolumn{3}{|| c |}{Primal $\bullet=0$} & \multirow{2}{*}{$K_{\ZZ_2}^\bullet$} & \multirow{2}{*}{$K_{\ZZ_2}^{\bullet-1}$} & \multicolumn{3}{| c ||}{$\bullet=1$ Dual}\\
 \cline{1-3}\cline{6-8}
Space & $H^3$-twist & $H^1$-twist &  &  & $H^1$-twist & $H^3$-twist & Space \\ [0.5ex] 
 \hline\hline
 $S^1$ & N/A & 0 & $\ZZ$ & $\ZZ$ & 0 & N/A & $S^1$\\
 \hline\hline
 $\Sflip$ & N/A & 0 & $\ZZ^3$ & 0 & $c$ & N/A & $\Sflip$\\
 \hline
\multirow{2}{*}{$\Striv$} & 0 & 0 & $\ZZ^2$ & $\ZZ^2$  & 0 & 0 & \multirow{4}{*}{$\Striv$} \\ 
 \cline{2-7}
 & 0 & $c$ & $\ZZ$ & $\ZZ$ & $c$ & 0 & \\
 \cline{1-7}
 \multirow{2}{*}{$\Sfree$} & N/A & 0 & $\ZZ$ & $\ZZ$ & 0 & $\tau_{S^1}$ & \\ 
 \cline{2-7}
 & N/A & $c$ & 0 & $\ZZ/2$ & $c$ & $\tau_{S^1}$  & \\
 \hline

\end{tabular}
\caption{$S^1$ and the three involutive circles with all possible graded twists, except those of $M$-type, are T-dualised as above. To T-dualise $(S^1,M)$ we pass to $(\Sfree,c)$ instead. To T-dualise $(\Striv,M)$, we need to pass to a double cover and take a conjectured $D_2$-equivariant T-dual.}\label{table:1Ddualitiestable}
\end{center}
\end{table}

\subsection{T-duality with M\"{o}bius twists and $G$-equivariant T-duality}\label{sec:Mobiustwist}
So far, the $H^1$-twists that we have considered are of $c$-type, coming from a homomorphism $G\rightarrow\ZZ_2$. Consider $(S^1,M)$ where $M\in H^1(S^1,\ZZ_2)$ is the M\"{o}bius twist. Passing to the double cover $\Sfree$, the generating twist $c\in H^1_{\ZZ_2}(\Sfree,\ZZ_2)\cong\ZZ/2$ corresponds to $M$, and it is possible to show that
$$
K^{\bullet+M}(S^1)\cong K^{\bullet+c}_{\ZZ_2}(\Sfree)\cong \begin{cases}0,\quad\,\qquad \bullet=0,\\ \ZZ/2,\qquad \bullet=1. \end{cases}
$$
We had already seen that $\Sfree$ can be identified with $T^1_{\sf{p11g}}$ and found the $\ZZ_2$-equivariant T-dual of $(\Sfree,c)$ in Eq.\ \eqref{p11gTduality}. Thus $(S^1,M)$ has a T-dual pair via passage to an equivariant double cover $\Sfree$.

Now consider $\Striv$ which has $H^1_{\ZZ_2}(\Striv,\ZZ_2)\cong\ZZ/2\oplus\ZZ/2$ generated by $c$ and by $M$ made $\ZZ_2$-equivariant in the trivial way. A similar strategy to T-dualise $(\Striv,M)$ is to pass to a double cover $\tilde{S}^1_{\rm triv}=S^1_{{\rm free}\times{\rm triv}}\rightarrow S^1_{\rm triv}$ which has a $D_2=\ZZ_2\times\ZZ_2$ action with the first factor acting by deck transformations. Now regard $M\rightarrow\Striv$ as a $D_2$-equivariant real line bundle $\tilde{M}\rightarrow S^1_{{\rm free}\times{\rm triv}}$ --- explicitly, $\tilde{M}$ is the product bundle with $D_2$ acting via its first $\ZZ_2$ factor by the deck transformation on the base and $-1$ on the fibre. Thus $\tilde{M}$ can be regarded as the twist $c_1\in H^1_{D_2}(S^1_{{\rm free}\times {\rm triv}},\Z_2)$ coming from the homomorphism $p_1:D_2\rightarrow\ZZ_2$.

%\subsubsection{Towards $G$-equivariant T-duality}
Conjecturally, there is a notion of $D_2$-equivariant T-duality $T_{D_2}$ (and also for more general groups $G$), generalising $\TZtwo$ and $\TR$ in a natural way. Then we can T-dualise $(\Striv,M)$ by first passing to $(S^1_{{\rm free}\times{\rm triv}},c_1)$ and then taking $T_{D_2}$. Furthermore, the frieze group dualities in \S\ref{sec:D2pointgroup1D}, as well as the wallpaper group crystallographic T-dualities in Eq.\ \eqref{wallpaperD2dualities}, would be expected to be implemented by ${\rm T}_{D_2}$. The circle can be made a $D_2$ space in several other ways such as $S^1_{{\rm free}\times{\rm triv}}$. These $D_2$-actions can arise from more general subperiodic groups such as the \emph{rod groups} \cite{KL} associated to symmetries of a line in 3D space. For example, a two fold screw axis which is also on a reflection plane is preserved by a point group $D_2$. The $D_2$ action on a unit cell (a circle) for the lattice translation along the axis gives $S^1_{{\rm free}\times{\rm triv}}$. Generalising the particular case of $\sf{p11g}$ studied in \cite{GT}, we expect that the $K$-theories associated to rod groups will be important for crystallographic bulk-boundary correspondences with screw axes.

\section{3D dualities and applications}
\subsection{H-flux from partial T-duality: screw dislocations}\label{sec:screw}
In \cite{HMT}, it was observed that $H^3$-flux (in the nonequivariant sense) is ``produced'' when a screw-dislocated lattice is partially Fourier transformed. In string theory, one might start with $T^3$ with ``one unit of H-flux'', meaning that $K^{\bullet+h}(T^3)$ is needed, for $h$ a generator of $H^3(T^3,\ZZ)$, to study D-brane charges. As a circle bundle over $T^2$, the T-dual of the pair $(T^3,h)$ is $({\rm Nil},0)$ where the nilmanifold ${\rm Nil}$ is the circle bundle over $T^2$ with Chern class the generator of $H^2(T^2,\ZZ)$ as required. The name ``Nil'' comes from the fact that $\pi_1({\rm Nil})={\rm {Heis}}^\ZZ$, the integer Heisenberg group
$${\rm {Heis}}^\ZZ=\left\{\begin{pmatrix} 1 & a & c \\ 0 & 1 & b \\ 0 & 0 & 1 \end{pmatrix}\,:\,a,b,c\in\ZZ\right\},$$
and ${\rm Nil}={\rm {Heis}}^\RR/{\rm {Heis}}^\ZZ$ is a $B{\rm {Heis}}^\ZZ$. This example illustrated ``topology change from H-flux'' as in \cite{BEM}.

The story is run from a different angle in \cite{HMT}, where the nonabelian lattice ${\rm {Heis}}^\ZZ$ was considered to be a screw-dislocated version of the standard (Euclidean) lattice $\ZZ^3$. One sees this by noticing that the commutator of the $a$ and $b$ translations in ${\rm {Heis}}^\ZZ$ is not zero (corresponding to a closed loop) but rather a translation in the third direction (corresponding to helical motion along an axis in this third direction). The fundamental domain (``unit cell'' in position space) is no longer the 3-torus, but ${\rm Nil}$. Despite ${\rm {Heis}}^\ZZ$ being nonabelian, it is built up from three copies of $\ZZ$, and the noncommutative ``momentum space'' $C^*_r({\rm {Heis}}^\ZZ)$ can be understood as a field of noncommutative tori parameterised by the circle dual to the central $\ZZ$ \cite{MR}. Note that ${\rm {Heis}}^\ZZ$ is a discrete cocompact subgroup of the continuous version ${\rm {Heis}}^\RR$ (with real number entries), and then ``crystallographic T-duality'' for ${\rm {Heis}}^\ZZ$ can be defined as Poincar\'{e} duality for ${\rm Nil}=B{\rm {Heis}}^\ZZ={\rm {Heis}}^\RR/{\rm {Heis}}^\ZZ$ composed with the Baum--Connes assembly map. In summary, we have
\begin{equation*}
\label{eqn:factorization}
\xymatrix{K^{\bullet+h} (T^3) \ar[rr]_{{\rm T}_{\rm torus}}&&
K_\bullet(C^*_r({\rm Heis}^\ZZ))\\
&K^{\bullet+1} ({\rm Nil})  \ar[ul]_{{\rm T}_{\rm circle}}\ar[ur]_{{\rm T}_{\rm crystal}}^{\sim} &}
\end{equation*}
Then we see that ${\rm T}_{\rm circle}$, interpreted as a partial Fourier transform, means that the mixed position-momentum space $T^3$ comes with a $H^3$-twist. This is an other instance of the observation in \S\ref{sec:partialTflux}.

\subsection{Crystallographic bulk-boundary correspondence and super-indices for boundary zero modes}\label{sec:bulkboundaryapplication}
In a crystalline version of the bulk-boundary correspondence, a $d$-dimensional crystalline topological insulator should be detectable on some codimension-1 layer fixed under some point group operation. Such a layer need not only have an ordinary $d-1$-dimensional space group symmetry, but the isotropy can also contribute by toggling the ``above/below'' degree of freedom.

For example, if $\mathscr{G}=\sf{pg}$, there is a 2-torsion Class AIII phase because of $K^{-1+\tau_{S^1}}_{\ZZ_2}(\hat{T}^2_{\sf{pm}})\cong\ZZ\oplus\ZZ/2$ (recall that $\tau_{\sf{pg}}=\tau_{S^1}$). In \cite{GT}, it was shown that this phase is detected by zero modes localised on a cut along a glide axis, and such an axis has precisely the 1D frieze group $\sf{p11g}$ symmetry with the generator given the odd grading, see Fig.\ \ref{fig:glideaxis}. The graded group $\sf{p11g}$ gives $K^{0+c+\tau_{S^1}}_{\ZZ_2}(\hStriv)\cong\ZZ/2$ on the RHS of crystallographic T-duality, and may be understood as $K^{\rm graded}_\bullet(C^*_r(\sf{p11g}))$ of the \emph{graded group} $C^*$-algebra for ${\sf{p11g}}\cong\ZZ\overset{(-1)^n}{\rightarrow}\ZZ_2$. A natural `Real' Gysin map takes $\pi_*:K^{-1+\tau_{S^1}}_{\ZZ_2}(\hat{T}^2_{\sf{pm}})\rightarrow K^{0+c+\tau_{S^1}}_{\ZZ_2}(\hStriv)$ along the fibre projection $\pi:\hat{T}^2_{\sf{pm}}=\hSflip\times\hStriv\rightarrow\hStriv$, and it was shown in \cite{GT} that $\pi_*$ realises an analytic index map for a $\tau_{S^1}$-twisted family of Toeplitz-like operators parameterised by $\hStriv$. In this way, the target group 
$$K^{\rm graded}_\bullet(C^*_r({\sf{p11g}}))\cong K^{0+c+\tau_{S^1}}_{\ZZ_2}(\hStriv)$$
is the ``super-higher index'' group for the $\sf{p11g}$-symmetric topological boundary zero modes of $\sf{pg}$-symmetric insulators.

\subsection{Spectral sequence extension problems and halving computations of topological phases}\label{sec:extensionproblem}
Consider the symmorphic 3D space groups $\mathscr{G}=\sf{P222},\sf{C222},\sf{F222},\sf{I222}$, which have point group ${\sf 222}\cong D_2=\ZZ_2\times\ZZ_2\rightarrow{\rm O}(3)$ whose three nontrivial elements are $\pi$-rotations about three mutually orthogonal axes (say, $x, y, z$). Note that $H^3_{\sf{222}}(\pt,\ZZ)\cong\ZZ/2$ with generator $\omega$ which pulls back to a nontrivial twist on $T^3_{\mathscr{G}}$. We do not need the precise description of these space groups, but just the fact that the arithmetic crystal classes for $\sf{P222},\sf{C222}$ are self-dual but those for $\sf{F222}$ and $\sf{I222}$ are dual to each other, i.e.\ $\alpha$ for one is $\hat{\alpha}$ for the other (see (45)-(47) of \cite{Michel} for a list of dual pairs of arithmetic crystal classes).

In \cite{SSG3}, the Atiyah--Hirzebruch spectral sequence (AHSS) was used to compute the $K$-theories of $T^3_{\sf{P222}}, T^3_{\sf{C222}}, T^3_{\sf{F222}}, T^3_{\sf{I222}}$, with the results\footnote{In Table 4 of \cite{SSG3}, the entries for a space group are the $E_2$ terms for the untwisted ($+0$) and $\omega$-twisted ($-1/2$) $K$-theory for the corresponding Brillouin torus with \emph{dual} point group action. So, e.g.\ the $\sf{F222}$ entries compute the $K$-theory for $T^3_{\sf{I222}}$.}
\begin{align*}
K^0_{D_2}(T^3_{\sf{P222}})\cong \ZZ^{13},\qquad & K^1_{D_2}(T^3_{\sf{P222}})\cong \ZZ\;{\rm or}\;\ZZ\oplus\ZZ/2,\\
K^{0+\omega}_{D_2}(T^3_{\sf{P222}})\cong \ZZ,\qquad & K^{1+\omega}_{D_2}(T^3_{\sf{P222}})\cong \ZZ^{13},\\
K^0_{D_2}(T^3_{\sf{C222}})\cong \ZZ^8,\qquad & K^1_{D_2}(T^3_{\sf{C222}})\cong \ZZ^2\;{\rm or}\;\ZZ^2\oplus\ZZ/2,\\
K^{0+\omega}_{D_2}(T^3_{\sf{C222}})\cong \ZZ^2,\qquad & K^{1+\omega}_{D_2}(T^3_{\sf{C222}})\cong \ZZ^8,\\
K^0_{D_2}(T^3_{\sf{I222}})\cong \ZZ^7\oplus\ZZ/2,\qquad & K^1_{D_2}(T^3_{\sf{I222}})\cong \ZZ\;{\rm or}\;\ZZ\oplus\ZZ/2,\\
K^{0+\omega}_{D_2}(T^3_{\sf{I222}})\cong \ZZ\oplus\ZZ/2,\qquad & K^{1+\omega}_{D_2}(T^3_{\sf{I222}})\cong \ZZ^7,\\
K^0_{D_2}(T^3_{\sf{F222}})\cong \ZZ^7,\qquad & K^1_{D_2}(T^3_{\sf{F222}})\cong \ZZ\oplus\ZZ_2^2\;{\rm or}\;\ZZ\oplus\ZZ/2,\\
K^{0+\omega}_{D_2}(T^3_{\sf{F222}})\cong \ZZ,\qquad & K^{1+\omega}_{D_2}(T^3_{\sf{F222}})\cong \ZZ^7\;{\rm or}\;\ZZ^7\oplus\ZZ/2.
\end{align*}
It is possible to resolve the ambiguity for the untwisted $K^1_{D_2}$ by a direct Mayer--Vietoris computation, but let us instead show how crystallographic T-duality comes to the rescue. In anticipation of this, notice that groups in the left column also appear in the right column.

First, note that the ${\sf 222}\cong D_2$ point group action is orientable but has the ${\rm Spin}^c$ obstruction $\omega$, by Lemma \ref{lem:222W3obstruction}, and this is pulled back faithfully to the $K_{D_2}$-orientability obstruction for $T^3_{\mathscr{G}}$. Consequently, the crystallographic T-dualities are (dropping the hats for now)
\begin{align*}
{\rm T}_{\sf{P222}}&:K^{\bullet+\omega}_{D_2}(T^3_{\sf{P222}})\overset{\cong}{\longrightarrow}K^{\bullet-1}_{D_2}(T^3_{\sf{P222}}),\\
{\rm T}_{\sf{C222}}&:K^{\bullet+\omega}_{D_2}(T^3_{\sf{C222}})\overset{\cong}{\longrightarrow}K^{\bullet-1}_{D_2}(T^3_{\sf{C222}}),\\
{\rm T}_{\sf{F222}}&:K^{\bullet+\omega}_{D_2}(T^3_{\sf{F222}})\overset{\cong}{\longrightarrow}K^{\bullet-1}_{D_2}(T^3_{\sf{I222}}),\\
{\rm T}_{\sf{I222}}&:K^{\bullet+\omega}_{D_2}(T^3_{\sf{I222}})\overset{\cong}{\longrightarrow}K^{\bullet-1}_{D_2}(T^3_{\sf{F222}}).
\end{align*}
These dualities enable the resolution of the $K^1$ ambiguities by referring to the unambiguous $K^0$ groups on the T-dual side, i.e., 
\begin{align*}
K^1_{D_2}(T^3_{\sf{P222}}) &\cong \ZZ,\qquad K^1_{D_2}(T^3_{\sf{C222}}) \cong \ZZ^2, \qquad K^1_{D_2}(T^3_{\sf{I222}}) \cong \ZZ, \\
%K^{1+\omega}_{D_2}(T^3_{\sf{I222}}) &\cong \ZZ^7, \\
K^1_{D_2}(T^3_{\sf{F222}}) &\cong \ZZ\oplus\ZZ/2,\qquad K^{1+\omega}_{D_2}(T^3_{\sf{F222}}) \cong \ZZ^7\oplus\ZZ/2.
\end{align*} 
These examples demonstrate how our crystallographic T-duality supplements the powerful general machinery of the AHSS. In effect, the number of computations is halved, some twisted $K$-theories can be computed more easily on the T-dual side (cf.\ Remark \ref{rem:pgsimplification}), and extension problems may be resolved by inspecting the T-dual computations. In the physics context, these $K^1$ groups (with no $\omega$-twist) classify the so-called Class AIII topological insulators with respective space group symmetries, and in particular (restoring the hat) $K^1_{D_2}(\hat{T}^3_{\sf{F222}}) \cong \ZZ\oplus\ZZ/2$ shows that there is a 2-torsion chiral symmetric and $\sf{I222}$-symmetric phase.

\section*{Acknowledgements}
G.C.T.~is supported by Australian Research Council grant DE170100149, and K.G.~by JSPS KAKENHI Grant Number JP15K04871. Both authors would like to thank Siye Wu for his hospitality at the National Center for Theoretical Sciences (Physics Division) of Taiwan, where the ideas for this paper crystallised.

\appendix
\section{Appendix}
In order to determine the $K_G$-orientability obstruction $\sigma_\mathscr{G}$ for a torus with action induced from a space group $\mathscr{G}$, Eq.\ \eqref{Korientabilityobstruction}, we need to first compute the obstruction class $W_3^G(\RR^d_\rho)$ of Eq.\ \eqref{W3obstructionpoint} associated to the point group $\rho:G\rightarrow{\rm O}(d)$ as follows. To shorten notation, we just write $g$ for $\rho(g)\in {\rm O}(d)$. Choose lifts $\wt{g}\in {\rm Pin}^c(d)$ of $g\in G\subset {\rm O}(d)$ in the central extension
\begin{equation}
1\rightarrow {\rm U}(1)\rightarrow {\rm Pin}^c(d)\overset{\varpi^c}{\rightarrow}{\rm O}(d)\rightarrow 1,\label{pincextensionappendix}
\end{equation}
giving a \emph{projective} representation of $G$ whose cocycle $\zeta\in Z^2(G,{\rm U}(1))$ is
$$ \wt{g}\wt{h}=\zeta(g,h)\wt{gh},\qquad g,h\in G.$$
If $[\zeta]=0\in H^2_{\rm group}(G,{\rm U}(1))$, then we can actually choose the lifts $\wt{g}$ to give a genuine representation of $G$ factoring through ${\rm Pin}^c(d)$, otherwise there is an obstruction and $(W_3^G(\RR^d_\rho),W_1^G(\RR^d_\rho))\neq 0$.
\begin{lemma}\label{lem:nontrivialcocycle}
Let $\zeta\in Z^2(G,{\rm U}(1))$ be a cocycle for a finite group $G$, and let $\epsilon(g,h)\coloneqq\zeta(g,h)\zeta(h,g)^{-1}$. If $\epsilon(g,h)\neq 1$ for some $g,h$ such that $gh=hg$, then $[\zeta]\neq 0 \in H^2_{\rm group}(G,{\rm U}(1))$.
\end{lemma}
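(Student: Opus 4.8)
The plan is to show that the quantity $\epsilon(g,h)$, evaluated on a \emph{commuting} pair $g,h$, depends only on the cohomology class $[\zeta]$ and not on the chosen representative cocycle. Granting this invariance, the lemma follows immediately by contraposition: if $[\zeta]=0$ then $\zeta$ is cohomologous to the trivial cocycle $1$, for which $\epsilon\equiv 1$ on every pair; hence $\epsilon(g,h)=1$ for all commuting $g,h$, contradicting the hypothesis that $\epsilon(g,h)\neq 1$ for some such pair.

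First I would recall that modifying $\zeta$ within its class means replacing it by $\zeta'(g,h)=\zeta(g,h)\,\lambda(g)\lambda(h)\lambda(gh)^{-1}$ for some function $\lambda:G\to{\rm U}(1)$, since ${\rm U}(1)$ is a \emph{trivial} $G$-module in the central extension \eqref{pincextensionappendix}. The key computation is then
\begin{equation*}
\epsilon'(g,h)=\zeta'(g,h)\zeta'(h,g)^{-1}=\epsilon(g,h)\,\frac{\lambda(hg)}{\lambda(gh)},
\end{equation*}
where the common factors $\lambda(g)\lambda(h)$ in the numerator and denominator cancel. When $gh=hg$ the surviving ratio $\lambda(hg)/\lambda(gh)$ equals $1$, so $\epsilon'(g,h)=\epsilon(g,h)$. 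This is the only place the commutativity hypothesis enters, and it is precisely what is needed to annihilate the coboundary contribution.

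The argument carries no genuine obstacle; the single point requiring care is tracking the coboundary factor and observing that commutativity forces $\lambda(gh)=\lambda(hg)$. Conceptually, the restriction of $\epsilon$ to commuting pairs is the standard alternating commutator pairing attached to a central extension, which descends to a well-defined pairing on $H^2_{\rm group}(G,{\rm U}(1))$ and detects non-triviality already on abelian subgroups; no further machinery is required.
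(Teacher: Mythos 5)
Your proof is correct and follows essentially the same route as the paper: the paper simply states that if $\zeta$ is a coboundary then $\epsilon(g,h)=1$ for commuting $g,h$, and your computation $\epsilon'(g,h)=\epsilon(g,h)\,\lambda(hg)/\lambda(gh)$ is precisely the verification it leaves to the reader.
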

\begin{proof}
If $\zeta$ is a coboundary, we can verify that $\epsilon(g,h)=1$ whenever $gh=hg$.
\end{proof}

{\bf Spin${}^c$ obstruction for $D_2, D_4, D_6$ point groups.} Let us analyse the 2D point groups $G=D_2, D_4, D_6\subset {\rm O}(2)$. They are generated in ${\rm O}(2)$ by a reflection $\varsigma=\begin{pmatrix} -1 & 0 \\ 0 & 1 \end{pmatrix}$ and a rotation $r_\theta=\begin{pmatrix}\cos \theta & -\sin \theta \\ \sin \theta & \cos \theta\end{pmatrix}$ with $\theta=\frac{2\pi}{n}$. Let $\CC l(2)$ be the complex Clifford algebra generated by $e_1, e_2$ with $e_1^2=e_2^2=-1, e_1e_2+e_2e_1=0$. The ${\rm Pin}(2)$ group is a double cover of ${\rm O}(2)$ and can be realised concretely inside $\CC l(2)$ as
$$ {\rm Pin}(2)=\{\cos \theta + \sin \theta e_1e_2\}\cup \{\cos \theta e_1+ \sin \theta e_2\}_{\theta\in[0,2\pi]}$$
The double-cover projection ${\rm Pin}(2)\overset{\varpi}{\rightarrow} {\rm O}(2)$ is then given by
$$ \varpi(\cos \theta + \sin \theta e_1e_2)= r(2\theta),\qquad \varpi(\cos \theta e_1+ \sin \theta e_2)= r(2\theta)\varsigma$$
The ${\rm Pin}^c(2)$ group is defined to be ${\rm Pin}(2)\times {\rm U}(1)/_{\{(1,1),(-1,-1)\}}$, and the projection in the central extension Eq.\ \eqref{pincextensionappendix}
is $\varpi^c[x,u]=\varpi(x)$.
\begin{lemma}\label{lem:2DW3obstruction}
For the point groups $G=D_2, D_4, D_6\overset{\rho}{\rightarrow}{\rm O}(2)$, the class $W_3^G(\RR^2_\rho)\in H^3_G(\pt,\ZZ)\cong\ZZ/2$ is the unique generator $\omega$.
\end{lemma}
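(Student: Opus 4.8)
The plan is to exploit the projective-representation criterion already set up before the statement: the obstruction $W_3^G(\RR^2_\rho)\in H^3_G(\pt,\ZZ)\cong H^2_{\rm group}(G,{\rm U}(1))$ is precisely the class $[\zeta]$ of the cocycle measuring the failure of chosen lifts $\wt{g}\in{\rm Pin}^c(2)$ to be multiplicative. Thus it suffices to exhibit a single \emph{commuting} pair in $G$ whose lifts fail to commute, and then to invoke Lemma \ref{lem:nontrivialcocycle}.

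First I would write down explicit lifts of the two relevant reflections using the concrete Clifford model $\CC l(2)\supset{\rm Pin}(2)$ recalled above. Since $\varpi(\cos\theta\,e_1+\sin\theta\,e_2)=r(2\theta)\varsigma$, the reflection $\varsigma={\rm diag}(-1,1)$ lifts to $e_1$ (taking $\theta=0$), while the orthogonal reflection $\varsigma'\coloneqq r_\pi\varsigma={\rm diag}(1,-1)$ lifts to $e_2$ (taking $\theta=\pi/2$). Passing to ${\rm Pin}^c(2)={\rm Pin}(2)\times{\rm U}(1)/\{(1,1),(-1,-1)\}$, I set $\wt{\varsigma}=[e_1,1]$ and $\wt{\varsigma'}=[e_2,1]$.

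Next I would observe that $\varsigma$ and $\varsigma'$ commute in ${\rm O}(2)$ (both are diagonal, with product $r_\pi$), whereas their lifts anticommute: from $e_1e_2=-e_2e_1$ in $\CC l(2)$ we get $\wt{\varsigma}\,\wt{\varsigma'}=[e_1e_2,1]$ but $\wt{\varsigma'}\,\wt{\varsigma}=[e_2e_1,1]=[-e_1e_2,1]=[e_1e_2,-1]=(-1)\cdot[e_1e_2,1]$, using the identification $(-x,u)\sim(x,-u)$. Hence $\epsilon(\varsigma,\varsigma')=\zeta(\varsigma,\varsigma')\zeta(\varsigma',\varsigma)^{-1}=-1\neq 1$ for this commuting pair. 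Crucially, this pair lies in every $D_n$ with $n$ even, because $r_\pi=r_\theta^{n/2}\in G$ in that case; in particular it lies in $D_2,D_4,D_6$, and the computation of the lifts is identical in each. By Lemma \ref{lem:nontrivialcocycle}, $[\zeta]\neq 0$, so $W_3^G(\RR^2_\rho)\neq 0$, and since $H^3_G(\pt,\ZZ)\cong\ZZ/2$ for each of these three point groups, a nonzero class must be the unique generator $\omega$.

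The computation is elementary once the model is in place; the only point requiring care is the bookkeeping in ${\rm Pin}^c(2)$, namely that the Clifford anticommutation $e_1e_2=-e_2e_1$ survives as the nontrivial ${\rm U}(1)$-phase $-1$ under the quotient by $\{(1,1),(-1,-1)\}$ rather than being absorbed. I would therefore state the identification $(-x,u)\sim(x,-u)$ explicitly, and verify that $\varsigma,\varsigma'$ genuinely commute in ${\rm O}(2)$, so that the hypothesis $gh=hg$ of Lemma \ref{lem:nontrivialcocycle} applies verbatim.
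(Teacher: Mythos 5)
Your proof is correct and follows essentially the same route as the paper: both restrict to the common $D_2$ subgroup, lift a commuting pair into ${\rm Pin}^c(2)$ via the Clifford model, and apply Lemma \ref{lem:nontrivialcocycle} to the resulting anticommutation. The only (cosmetic) difference is your choice of commuting pair --- the two orthogonal reflections $\varsigma,\,r_\pi\varsigma$ with lifts $[e_1,1],[e_2,1]$ rather than the paper's $\varsigma,\,r_\pi$ with lifts carrying extra ${\rm U}(1)$ phases --- which makes the bookkeeping marginally cleaner but changes nothing of substance.
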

\begin{proof}
Note that $D_2, D_4, D_6$ each contain the commuting elements $\varsigma$ and $r_\pi$. Choose the lifts
$$ \wt{r_\pi}=[e_1e_2,-\im],\qquad \wt{\varsigma}=[e_1,\im]\qquad \wt{r_\pi\varsigma}\equiv \wt{\varsigma r_\pi} =\wt{r_\pi}\wt{\varsigma}=[e_2,1].$$
Then Lemma \ref{lem:nontrivialcocycle} applied to the computation
$$\zeta(\varsigma,r_\pi)\equiv \wt{\varsigma}\wt{r_\pi}\wt{\varsigma r_\pi}^{-1}=[e_1,\im][e_1e_2,-\im][e_2,-1]=[e_1^2e_2^2,-1]=-1 $$
$$\zeta(r_\pi,\varsigma)\equiv \wt{r_\pi}\wt{\varsigma}\wt{r_\pi \varsigma}^{-1}=[e_1e_2,-\im][e_1,\im][e_2,-1]=[-e_1^2e_2^2,-1]=1 $$
shows that $\rho$ is not ${\rm Pin}^c(2)$, and the obstruction $W_3^G(\RR^2_\rho)$ is nontrivial.
\end{proof}

As a concrete representative of the cohomology class $\omega$, let us tabulate the representative 2-cocycle $\zeta\in Z^2(D_2,{\rm U}(1))$ for the above choice of lift of $D_2$. It is convenient to rewrite the $D_2$ generators as $\sigma_x=\varsigma$ and $\sigma_y=r_\pi\varsigma$, corresponding respectively to reflections of the $x$ and $y$ coordinates of 2D Euclidean space. 
$$
\begin{array}{|c|c|c|c|c|}
\hline
\zeta(g,h)& h = 1 & h = \sigma_x & h=\sigma_y & h=r_\pi \\
\hline
g=1 & 1 & 1 & 1 & 0  \\
\hline
g=\sigma_x & 1 & 1 & -1 & -1  \\
\hline
g=\sigma_y & 1 & 1 & -1 & -1  \\
\hline
g=r_\pi & 1 & 1 & 1 & 1  \\
\hline
\end{array}
$$
{\bf Cocycle twist for \sf{pmg}.} We also compute the dual 2-cocycle twist $\tau_{\sf{pmg}}$ on $\hat{T}^2_{\sf{pmm}}$ induced by the nonsymmorphic space group $\sf{pmg}$ (which we recall has point group $D_2$). The dual $D_2$ action on $\hat{T}^2_{\sf{pmm}}$ is concretely given by $\sigma_x:(u,v)\mapsto (\bar{u},v)$ and $\sigma_y:(u,v)\mapsto (u,\bar{v})$ in terms of standard unit-complex number coordinates $u,v$. Using the tilde notation this time to denote lifts of $\sigma_x,\sigma_y, r_\pi=\sigma_x\sigma_y\in D_2$ inside $\sf{pmg}$, a standard choice is: $\wt{\sigma_x}$ remains the reflection of the $x$-coordinate, $\wt{\sigma_y}$ becomes a \emph{glide} reflection\footnote{As in $\sf{pg}$, reflect $y$ then half-translate along $x$.}, and $\wt{r_\pi}=\wt{\sigma_y}\wt{\sigma_x}$. Noting that $\wt{\sigma_y}^2$ equals the unit translation $T_x$ along $x$ (which Fourier transforms to $u$), and that $(\wt{\sigma_x}\wt{\sigma_y})^2=1=(\wt{\sigma_y}\wt{\sigma_x})^2$, it is easy to see that the ${\rm U}(C(\hat{T}_{\sf{pmm}}))$-valued 2-cocycle $\tau_{\sf{pmg}}$ has values as follows:
$$
\begin{array}{|c|c|c|c|c|}
\hline
\tau_{\sf{pmg}}(g,h)& h = 1 & h = \sigma_x & h=\sigma_y & h=r_\pi \\
\hline
g=1 & 1 & 1 & 1 & 0  \\
\hline
g=\sigma_x & 1 & 1 & \bar{u} & \bar{u}  \\
\hline
g=\sigma_y & 1 & 1 & u & u  \\
\hline
g=r_\pi & 1 & 1 & 1 & 1  \\
\hline
\end{array}
$$
The map $(u,v)\mapsto(-u,v)$ is a $D_2$-equivariant automorphism of $\hat{T}_{\sf{pmm}}$, and by comparing the tables for $\zeta$ and $\tau_{\sf{pmg}}$, we see that pullback under this automorphism converts $\tau_{\sf{pmg}}\mapsto \tau_{\sf{pmg}}+\omega$.

{\bf Spin${}^c$ obstruction for \sf{222} point group.} Next, we analyse the 3D point group ${\sf 222}\cong\ZZ_2\times\ZZ_2\subset{\rm O}(d)$ which contains $\pi$ rotations $r_x, r_y, r_z=r_xr_y=r_yr_x$ about the $x,y,z$ axes. Since $\sf{222}\subset{\rm SO}(3)$, we only need to check whether it is also ${\rm Spin}^c$, and we recall the exact sequence
$$1\rightarrow {\rm U}(1)\rightarrow {\rm U}(2)= {\rm Spin}^c(3)\rightarrow {\rm PU}(2)={\rm SO}(3)\rightarrow 1.$$

\begin{lemma}\label{lem:222W3obstruction}
For the point group $G=\sf{222}\overset{\rho}{\rightarrow}{\rm SO}(3)\subset{\rm O}(3)$, the class $W_3^G(\RR^3_\rho)\in H^3_G(\pt,\ZZ)\cong\ZZ/2$ is the unique generator $\omega$.
\end{lemma}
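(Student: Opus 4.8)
The plan is to mirror the argument for the 2D dihedral point groups in Lemma~\ref{lem:2DW3obstruction}, detecting the obstruction through a single commuting pair via Lemma~\ref{lem:nontrivialcocycle}. Concretely, I would pull back the central extension ${\rm U}(1)\to{\rm U}(2)={\rm Spin}^c(3)\to{\rm PU}(2)={\rm SO}(3)$ along $\rho$, so that choosing lifts $\wt{g}\in{\rm U}(2)$ of the elements $\rho(g)\in{\rm SO}(3)$ produces a projective representation of ${\sf 222}$ whose cocycle $\zeta\in Z^2({\sf 222},{\rm U}(1))$ represents $W_3^G(\RR^3_\rho)$ under the identification $H^3_G(\pt,\ZZ)\cong H^2_{\rm group}({\sf 222},{\rm U}(1))$. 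Since ${\sf 222}$ already lies in ${\rm SO}(3)$, it suffices to show this cocycle is cohomologically nontrivial; and because $H^3_G(\pt,\ZZ)\cong\ZZ/2$, any nonvanishing immediately forces $W_3^G(\RR^3_\rho)=\omega$.

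The key computation uses explicit spin lifts of the $\pi$-rotations $r_x,r_y,r_z$. Working inside ${\rm Spin}(3)={\rm SU}(2)\subset{\rm U}(2)$, the $\pi$-rotation about a coordinate axis lifts to the corresponding imaginary Pauli matrix, so I would take $\wt{r_x}=\im\sigma_x$, $\wt{r_y}=\im\sigma_y$, $\wt{r_z}=\im\sigma_z$. The anticommutation relation $\sigma_x\sigma_y=-\sigma_y\sigma_x=\im\sigma_z$ then gives $\wt{r_x}\wt{r_y}=-\im\sigma_z=-\wt{r_z}$, while $\wt{r_y}\wt{r_x}=\im\sigma_z=\wt{r_z}$. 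Since $r_xr_y=r_yr_x=r_z$ in ${\rm SO}(3)$, this reads off as $\zeta(r_x,r_y)=-1$ and $\zeta(r_y,r_x)=+1$, so the commutator phase $\epsilon(r_x,r_y)=\zeta(r_x,r_y)\zeta(r_y,r_x)^{-1}=-1$ on the commuting pair $r_x,r_y$. Lemma~\ref{lem:nontrivialcocycle} then yields $[\zeta]\neq 0$, completing the argument.

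The one conceptual point to be careful about — and the closest thing to an obstacle — is that we are testing the ${\rm Spin}^c$ rather than the ${\rm Spin}$ condition, so a priori one might worry that rescaling the lifts $\wt{g}$ by central phases in the full ${\rm U}(1)\subset{\rm U}(2)$ (not merely the $\{\pm1\}\subset{\rm SU}(2)$) could trivialise $\zeta$. This is precisely what Lemma~\ref{lem:nontrivialcocycle} rules out: the phase $\epsilon$ evaluated on a \emph{commuting} pair is invariant under any such rescaling, since the correction factors $\lambda(g)\lambda(h)\lambda(gh)^{-1}$ cancel in the ratio when $gh=hg$. Thus the ${\rm SU}(2)$-valued Pauli lifts are a legitimate representative choice, and the nonvanishing of $\epsilon(r_x,r_y)$ is genuinely an invariant of $\rho$. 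Everything else is the routine Pauli-matrix bookkeeping already rehearsed in the proof of Lemma~\ref{lem:2DW3obstruction}.
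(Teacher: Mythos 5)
Your proposal is correct and follows essentially the same route as the paper: lift the three $\pi$-rotations to the Pauli matrices $\wt{r_i}=\im\sigma_i\in{\rm SU}(2)\subset{\rm U}(2)={\rm Spin}^c(3)$, compute $\zeta(r_x,r_y)=-1$ on the commuting pair $r_x,r_y$, and invoke Lemma~\ref{lem:nontrivialcocycle} to conclude $[\zeta]\neq 0$, hence $W_3^G(\RR^3_\rho)=\omega$ since $H^3_G(\pt,\ZZ)\cong\ZZ/2$. Your explicit verification that $\zeta(r_y,r_x)=+1$ (so that $\epsilon(r_x,r_y)=-1$), and your remark on why the commutator phase is insensitive to rescaling the lifts by arbitrary central ${\rm U}(1)$ phases, merely spell out what the paper leaves implicit in its appeal to Lemma~\ref{lem:nontrivialcocycle}.
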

\begin{proof}
Choose lifts of $r_i\in G, i=x,y,z$ to be $\wt{r_i}=e^{\frac{\im \pi}{2}\sigma_i}=\im\sigma_i\in{\rm U}(2)={\rm Spin}^c(3)$ where $\sigma_i$ are the Pauli spin matrices. Then the cocycle $\zeta$ has
$$ \zeta(r_x,r_y)=\wt{r_x}\wt{r_y}(\wt{r_xr_y})^{-1}=(\im\sigma_x)(\im\sigma_y)(\im\sigma_z)^{-1}=-1,$$
and is nontrivial by Lemma \ref{lem:nontrivialcocycle}. Thus $\rho$ is not ${\rm Spin}^c(3)$, and its obstruction $W_3^G(\RR^3_\rho)$ is nontrivial.
\end{proof}
%%%%%%%%%%%%%%%%%%%%%%%%%%%%%%%%%%%%%%%%%%%%%%%%

\end{document}